\newtheorem{definition}{Definition}
\newtheorem{theorem}{Theorem}
\newtheorem{example}{Example}
\newtheorem{remark}{Remark}
\newtheorem{lemma}{Lemma}
\newtheorem{proposition}{Proposition}
\begin{document}

\title{Incentive Mechanism for Uncertain Tasks under
Differential Privacy}

\author{Xikun Jiang, Chenhao Ying, Lei Li, Boris D$\ddot{u}$dder, Haiqin Wu, Haiming Jin, Yuan Luo

\IEEEcompsocitemizethanks{
\IEEEcompsocthanksitem Xikun Jiang, Chenhao Ying, Haiming Jin and Yuan Luo are with the Department of Computer Science, Shanghai Jiao Tong University, Shanghai, 200240, China. Xikun Jiang is also with the Department of Computer Science, University of Copenhagen,  Copenhagen, 2100,  Denmark. Chenhao Ying and Yuan Luo are also with Shanghai Jiao Tong University (Wuxi) Blockchain Advanced Research Center, Jiangsu, 214101, China. (E-mails: \{xikunjiang, yingchenhao, jinhaiming, yuanluo\}@sjtu.edu.cn, xikun@di.ku.dk.)
\IEEEcompsocthanksitem Lei Li and Boris D$\ddot{u}$dder are with the Department of Computer Science, University of Copenhagen,  Copenhagen, 2100, Denmark. (E-mail: \{lilei, boris.d\}@di.ku.dk.)
\IEEEcompsocthanksitem
Haiqin Wu is with the Department of Software Engineering Institute, East China Normal University, Shanghai, 200062, China. (E-mail: hqwu@sei.ecnu.edu.cn.)
\IEEEcompsocthanksitem This work is supported in part by the Shanghai Science and Technology Innovation Action Plan 23511100400, in part by the 2023-2024 Open Project of Key Laboratory Ministry of Industry and Information Technology-Blockchain Technology and Data Security 20242216, in part by the National Natural Science Foundation of China (NSFC) under Grants  62372288, and U20A20181.
\IEEEcompsocthanksitem
Corresponding author: Chenhao Ying.
}}

\maketitle

\begin{abstract}
Mobile crowd sensing (MCS) has emerged as an increasingly popular sensing paradigm due to its cost-effectiveness. This approach relies on platforms to outsource tasks to participating workers when prompted by task publishers. Although incentive mechanisms have been devised to foster widespread participation in MCS, most of them focus only on static tasks (i.e., tasks for which the timing and type are known in advance) and do not protect the privacy of worker bids. In a dynamic and resource-constrained environment, tasks are often uncertain (i.e., the platform lacks a priori knowledge about the tasks) and worker bids may be vulnerable to inference attacks. This paper presents an incentive mechanism HERALD*, that takes into account the uncertainty and hidden bids of tasks without real-time constraints. Theoretical analysis reveals that HERALD* satisfies a range of critical criteria, including truthfulness, individual rationality, differential privacy, low computational complexity, and low social cost. These properties are then corroborated through a series of evaluations.
\end{abstract}

\begin{IEEEkeywords}
Uncertain Tasks without Real-time Constraints, Differential Privacy, Mobile Crowd Sensing, Incentive Mechanism.
\end{IEEEkeywords}

\section{Introduction}\label{secintro}
\IEEEPARstart{T}{he} proliferation of mobile devices equipped with advanced processors and numerous sensors (like GPS and microphones), has been a key driver behind the rise of mobile crowd sensing (MCS) as a prominent sensing paradigm. MCS relies on a pool of workers equipped with mobile devices to gather sensory data and has fueled the development of various applications, including smart transportation, traffic management, and IoT. Numerous MCS systems have been developed and implemented accordingly in these and other areas\cite{rdinfocom,fantmc,pantmc,yutmc,liutmc,zhang2021optimizing,yang2023burst,yang2023detfed}.

Fig.~\ref{fig_illustration} provides an illustration of a typical MCS system, in which platforms enlist participating workers to execute tasks upon request from demanders (also referred to as task publishers). The success of the majority of MCS applications hinges on the willingness of an adequate number of mobile workers to partake in the process, thereby ensuring the provision of high-quality services. However, workers may exhibit hesitance towards participating in MCS, since task execution may drain their battery power, storage, computation, and communication resources. Furthermore, participating in MCS may render workers' private information (including their location and bid details) vulnerable to exposure during data collection and exchange. To counterbalance these costs and safeguard their privacy, it is crucial to furnish workers with suitable incentives that do not jeopardize their confidentiality.

\begin{figure}[t]
\centering

    \includegraphics[width=2.5in]{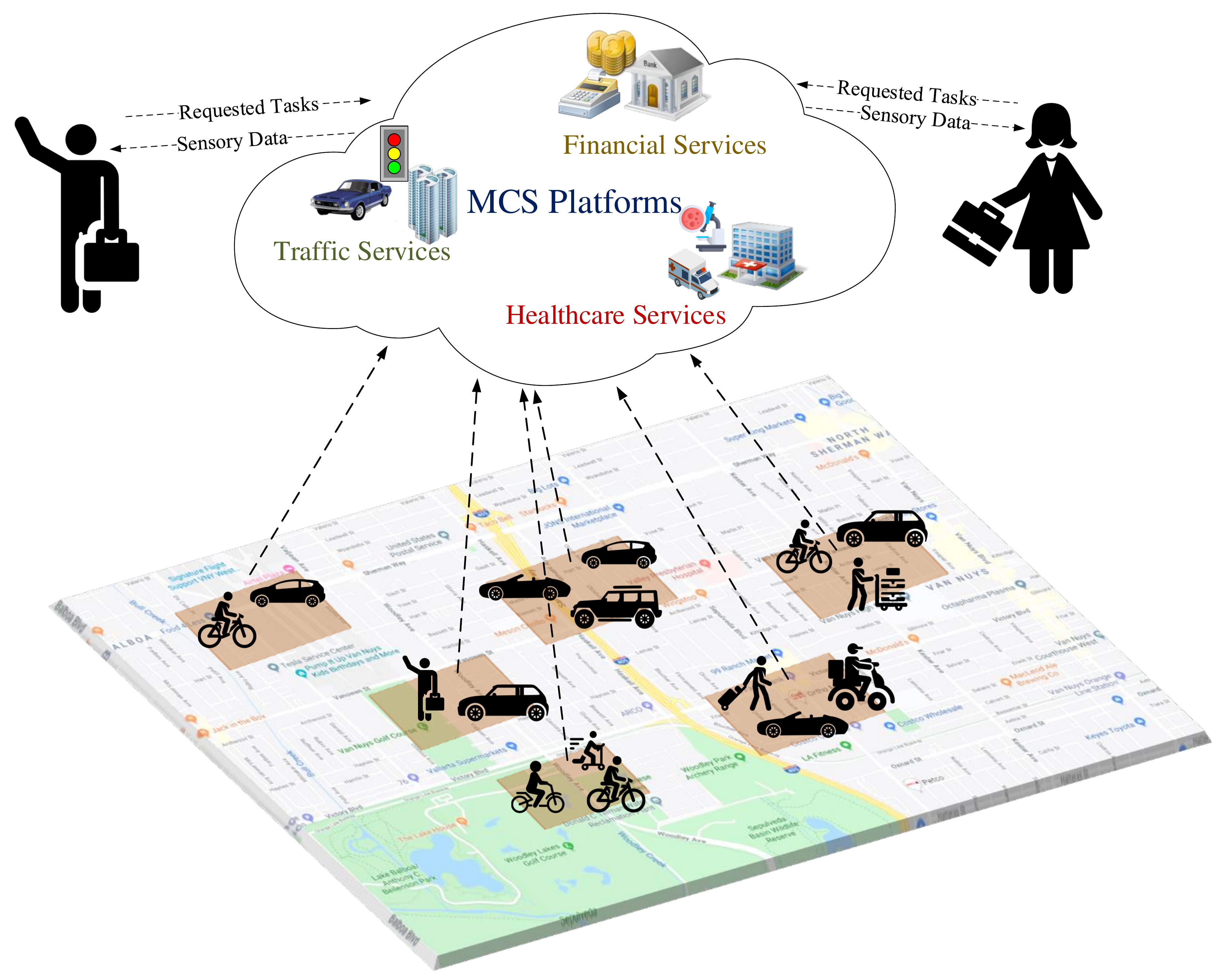}

    \caption{A typical MCS system.}
\label{fig_illustration}
\vspace{-0.2in}
\end{figure}

The importance of incentives has led to the development of numerous mechanisms \cite{zhibohuang2019infocom,ruitingzhou2018tmc,LeiYang2018mobihoc,Cheung2018tmc,Ma_ton2018,Qu_tmc2020,Restuccia2019tmc,wenqiangjin2019infocom,liangwang2018tmc,guo2020fedcrowd,gao2021trustworker,Jin_tmc2019,jiang2021incentive,basik2018fair,Bhattacharjee_tmc2020,Gong_ton2019,liu2021truthful,yidanhu2019infocom,Han_tmc2019} to encourage participation in MCS. However, a considerable number of these mechanisms are grounded on the presumption that tasks are static (i.e., the platform knows the timing and type of tasks in advance). In the real world, MCS tasks are often uncertain due to their unknown arrival time and incomplete information known to the platform. Additionally, these mechanisms do not protect the privacy of worker bids, which may be exposed to potential inference attacks~\cite{bai2017sensor} if published by the platform. The term ``bid'' represents the amount a worker (participant) is willing to accept as compensation for undertaking a task in the mobile crowd-sensing platform. Therefore, there is a desperate need for an incentive mechanism that can handle uncertain tasks and protect worker bid privacy in MCS systems, meanwhile satisfying a set of desired properties such as truthfulness, individual rationality, differential privacy, and low social cost.

In an MCS system, tasks such as collecting information on the number of bends, bifurcations, or roadside shops can be completed in advance, which reduces latency and improves efficiency by allowing the platform to respond immediately when these tasks arrive. These tasks are referred to as non-real-time tasks. However, due to the complexity of the real-world environment, it is often difficult for the platform to predict which tasks will arrive in the future and when they will arrive. These unpredictable tasks are referred to as \emph{uncertain tasks}.

Designing a suitable incentive mechanism for MCS systems, which are susceptible to both uncertain tasks without real-time constraints and inference attacks, poses a considerable challenge. To tackle this issue, we consider a scenario where the arrival of uncertain tasks without real-time constraints is governed by a probability distribution and employ the exponential mechanism, a technique from differential privacy, to safeguard the privacy of worker bids. Our proposed solution, HERALD*\footnote{The name HERALD* is from incentive mec\underline{H}anism for unc\underline{E}\underline{R}t\underline{A}in tasks without real-time constraints under differential privacy in mobi\underline{L}e crow\underline{D} sensing}, which satisfies truthfulness, individual rationality, and differential privacy, while also having low computational complexity and social cost. The main contributions of this paper are:

\begin{itemize}
\item
\emph{Mechanism:} In contrast to previous approaches \cite{zhibohuang2019infocom,ruitingzhou2018tmc,LeiYang2018mobihoc,Cheung2018tmc,Ma_ton2018,Qu_tmc2020,Restuccia2019tmc,wenqiangjin2019infocom,liangwang2018tmc,guo2020fedcrowd,gao2021trustworker,Jin_tmc2019,jiang2021incentive,basik2018fair,Bhattacharjee_tmc2020,Gong_ton2019,liu2021truthful,yidanhu2019infocom,Han_tmc2019}, our work introduces a new incentive mechanism named HERALD*, which integrates differential privacy. Specifically, HERALD* is tailored for uncertain tasks without real-time constraints that are expected to arrive based on a probability distribution (in our study, we consider the uniform distribution), enabling the platform to gather sensory data in advance. Additionally, we employ the exponential mechanism to protect the privacy of worker bids.

\item
\emph{Desirable Properties:} HERALD* can effectively encourage worker participation and achieves a set of desirable properties, such as truthfulness, individual rationality, differential privacy, low computational complexity, and low social cost. Unlike other incentive mechanisms such as those proposed in \cite{zhibohuang2019infocom,ruitingzhou2018tmc,LeiYang2018mobihoc,Cheung2018tmc,Ma_ton2018} for traditional MCS, HERALD* is specifically designed for uncertain tasks without real-time constraints and is not limited to collecting large amounts of sensory data. Furthermore, we demonstrate that HERALD* has a competitive ratio of $\mathcal{O}(\ln l n)$ in terms of expected social cost, where $l$ and $n$ represent the number of task subsets and tasks published in advance, respectively. We also prove that HERALD* preserves $\frac{\epsilon l}{2}$-differential privacy for both linear and logarithmic score functions, where $\epsilon > 0$ is a constant.

\item
\emph{Evaluations:} In addition to HERALD*'s desirable properties, we performed comprehensive simulations to verify its efficacy. Our findings indicate that HERALD* outperforms current approaches by exhibiting a lower anticipated social cost and total payment while simultaneously providing differential privacy.
\end{itemize}

The subsequent sections of this paper are structured as follows: Section II provides a discussion of the related literature, while Section III presents an introduction to the preliminaries. Section IV outlines the design and theoretical analysis of HERALD*. In Section V, extensive simulations are conducted to validate the properties of the proposed mechanism. Lastly, Section VI concludes the paper.

\section{Related Work}
Numerous incentive mechanisms \cite{zhibohuang2019infocom,ruitingzhou2018tmc,LeiYang2018mobihoc,Cheung2018tmc,Ma_ton2018,Qu_tmc2020,Restuccia2019tmc,wenqiangjin2019infocom,liangwang2018tmc,guo2020fedcrowd,gao2021trustworker,Jin_tmc2019,jiang2021incentive,basik2018fair,Bhattacharjee_tmc2020,Gong_ton2019,liu2021truthful,yidanhu2019infocom,Han_tmc2019} have been proposed for MCS systems since attracting a considerable number of workers is crucial. In addition to truthfulness and individual rationality, these mechanisms often aim to ensure the benefits of both workers and the platform.

The authors of \cite{zhibohuang2019infocom,ruitingzhou2018tmc,LeiYang2018mobihoc} proposed mechanisms to minimize the social cost, while \cite{Cheung2018tmc,Ma_ton2018} aimed to maximize the platform's profit. In contrast, \cite{Qu_tmc2020,Restuccia2019tmc,wenqiangjin2019infocom,liangwang2018tmc} focused on minimizing the platform's payment, and \cite{Jin_tmc2019,jiang2021incentive} designed mechanisms to maximize social welfare. A novel crowdsourcing assignment strategy proposed in \cite{basik2018fair} considered fair task allocation for workers. In addition to the above objectives, there have been efforts to achieve other objectives as well. For instance, Bhattacharjee \emph{et al.} in \cite{Bhattacharjee_tmc2020} incentivized workers to act honestly by evaluating the quantity and quality of their data. Gong \emph{et al.} in \cite{Gong_ton2019} introduced an incentive mechanism to encourage workers to submit high-quality data, while Liu \emph{et al.} in \cite{liu2021truthful} addressed the problem of multi-resource allocation by devising a truthful double auction mechanism.

Moreover, several studies investigated privacy-preserving methods in mobile crowdsourcing. For example, Lin \emph{et al.} \cite{lin2016bidguard} proposed a general privacy-preserving framework for incentivizing crowdsensing using two score functions. Hu \emph{et al.} \cite{yidanhu2019infocom} developed a privacy-preserving incentive mechanism for dynamic spectrum sharing crowdsensing, while Han \emph{et al.} \cite{Han_tmc2019} focused on privacy-preserving in budget-limited crowdsensing. Yang \emph{et al.} \cite{yang2020secure} incorporated both additive secret sharing and local differential privacy technologies. Wei \emph{et al.} \cite{wei2019differential} investigated location privacy-preserving in spatial crowdsourcing. Some studies have also used cryptography techniques \cite{wu2022privacy,shu2018privacy}, or blockchain \cite{guo2020fedcrowd,gao2021trustworker,wu2021privacyaware} to protect privacy in mobile crowdsourcing, but they do not consider the strategic behavior of the participants.

Most previously mentioned works have focused on static sensing tasks, where the platform has complete knowledge of the task information beforehand. However, in realistic environments with resource constraints, sensing tasks are often uncertain, meaning that the platform has incomplete knowledge of the task information. In contrast to these previous works, this paper addresses the issue of uncertain tasks while also ensuring the bidding privacy of participants. 

\section{Preliminaries}\label{secpre}
In this section, we provide an overview of the system and discuss the objectives that guided its design.

\subsection{System Overview}\label{subsecsys}
In this study, we focus on a mobile crowdsourcing system (MCS) consisting of a cloud-based platform and a group of participating workers (represented as $\mathcal{W}=\{1,2,\ldots,m\}$). The platform is assumed to have a priori knowledge of a set of $n$ sensing tasks, $\mathcal{T}=\{\tau_{1},\ldots,\tau_{n}\}$, and all future requested tasks will belong to $\mathcal{T}$. Note that index $n$ is a variable. The platform divides these $n$ tasks into $l$ task subsets, $\mathcal{Y}=\{\Gamma_{1},\Gamma_{2},\ldots,\Gamma_{l}\}$, such that $\bigcup^{l}_{j=1}\Gamma_{j}=\mathcal{T}$. The constraint $l<mn$ is set to ensure that the count of potential task subsets is bounded by the collective interactions between the workers and tasks. Additionally, the number of tasks within each subset $\Gamma_{j}$ is randomized. To ensure task quality and prevent monopolization, we require that each task $\tau_{i}\in\mathcal{T}$ must be covered by no less than two task subsets. This condition is practical since the platform typically possesses some prior knowledge about the tasks that need to be performed. For instance, in a traffic monitoring application, the task set may comprise the number of all the intersections on a particular road. This task set remains static over time and has no real-time constraints. A similar application could be the monitoring of road curves. The system framework is depicted in Fig.~\ref{fig_workflow}, and its operation is elaborated below.

\begin{figure}[t]
\centering
\centering
 \includegraphics[width=2.5in]{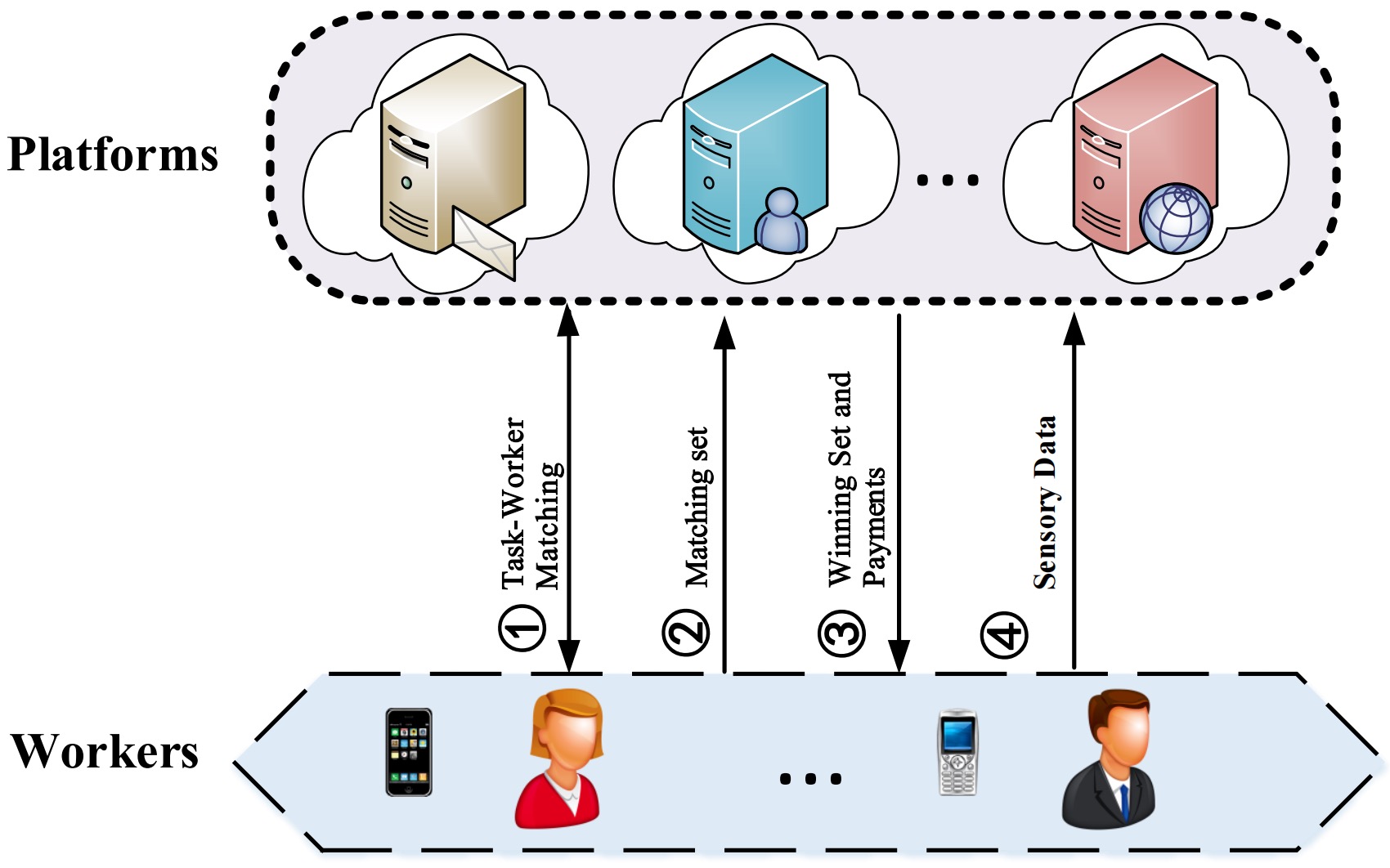}
\caption{Framework of HERALD*.}\label{fig_workflow}
\vspace{-0.2in}
\end{figure}

\textbf{Incentive Mechanism for Uncertain Tasks without Real-time Constraints:} As illustrated in Fig.~\ref{fig_workflow}, before the actually requested tasks arrive, the platform first matches workers to task subsets in a task-worker matching phase (step \textcircled{1}). For each task subset $\Gamma_{j} \in \mathcal{Y}$, each worker $i$ submits a bid $b_i$. We assume that the probability that task subset $\Gamma_{j}$ is matched with worker $i$ is $Pr(b_i)$. Note that in each round of task-worker matching, each worker can only participate once, and in the $l$ rounds of matching, each worker $i$ only has one unique bid $b_i$. This means that for each task subset $\Gamma_{j}$, the probability that worker $i$ is matched is $Pr(b_i)$. The platform randomly assigns the task subset to the workers according to this probability distribution. To ensure task quality and prevent monopolization, we require that each task $\tau_{i}\in\mathcal{T}$ is matched by at least two different workers. After the task-worker matching phase, the platform receives the matching set $\mathcal{P}$, which consists of matching pairs $(\Gamma_{j}, b_i)$ for $\Gamma_{j}\in \mathcal{Y}$ and $i\in W$, where $b_{i}$ is worker $i$'s bid for executing task subset $\Gamma_{j}$ (step \textcircled{2}). Upon receiving a matching set $\mathcal{P}$, the platform proceeds to determine the winning set $\mathcal{S}$ and the corresponding payment $p_{i}$ to each winning worker $i$. Among them, the winning set $\mathcal{S}$ consists of winning matching pairs, and the workers contained in the winning matching pairs are selected to perform tasks based on their bids and matching task sets. As indicated by step \textcircled{3}. Additionally, the platform collects sensory data from the winning workers to enable prompt responses to future requests, as per step \textcircled{4}. It is worth noting that the platform collects sensory data for tasks in $\mathcal{T}$ in advance, as it lacks any knowledge about future tasks, which are assumed to arrive according to a probability distribution.

In this mechanism, a worker who is not selected to execute any tasks, as a ``loser'', receives zero payment. We use the notation $\overrightarrow{p}=(p_{1},\ldots,p_{m})$ to represent the payment profile of the workers, which is initialized to be zero. If we denote the cost of worker $i$ as $c_{i}$ in HERALD*, we can define the utility of the worker as 

\begin{footnotesize}
\begin{equation}\label{utility}
\begin{split}
u_{i}= \left\{
             \begin{array}{ll}
             {p_{i}-c_{i}}&\text{, if worker $i$ wins,}\\
             {0}&\text{, otherwise.}
             \end{array}
        \right.
\end{split}
\end{equation}
\end{footnotesize}

Where the cost refers to the anticipated expenditure workers estimate before starting a task, including time commitment, battery usage, storage needs, computing demands, and communication resources~\cite{Casella2022}. These factors are crucial considerations for workers in deciding to participate in the MCS ecosystem. Therefore, the platform must provide compensation that adequately offsets these incurred costs, ensuring workers receive non-negative benefits. Without loss of generality, we assume that the bid $b_{i}$ of each worker $i$ is constrained within the range of $[b_{min},b_{max}]$, where $b_{min}$ is normalized to $1$ and $b_{max}$ is a constant. We use the notation $\overrightarrow{b}=(b_{1},\ldots,b_{m})$ to represent the bid profile of the workers for convenience. Additionally, it is assumed that for each worker $i$ who has a specific subset of tasks, denoted as $\Gamma_{i}$, there exist other workers $j$ who have task subsets $\Gamma_{j}$ that include the tasks of worker $i$, i.e., $\Gamma_{i}\subseteq\cup_{j}\Gamma_{j}$.

\subsection{Design Objectives}
\label{subsecdes}
The main objective of this manuscript is to establish that HERALD* exhibits the subsequent advantageous features. Given that workers may exhibit selfish or strategic behavior, it is plausible that any worker $i$ may present a bid $b_{i}$ that differs from their actual cost $c_{i}$ of performing all the tasks in $\Gamma_{i}$. Consequently, we aim to develop an incentive mechanism that satisfies truthfulness, which is defined as follows.
\begin{definition}[Truthfulness]\label{IC}
An incentive mechanism is truthful if for any worker $i\in\mathcal{W}$, his/her utility is maximized when bidding his/her true cost $c_{i}$.
\end{definition}

According to Definition \ref{IC}, our goal is to ensure that workers make truthful bids to the platform. In addition to truthfulness, we also strive to achieve another desirable property known as individual rationality, which is defined below:

\begin{definition}[Individual Rationality]\label{IR}
An incentive mechanism is individually rational if for any worker $i\in\mathcal{W}$, his/her utility $u_{i}$ satisfies $u_{i}\geq0$.
\end{definition}

Nonetheless, if the platform discloses the bidding results that contain the winning bidders and their rewards directly, it may expose the privacy of the participants to inference attacks, thereby impeding the advantages of the platform. Since any changes in workers' bids can considerably influence the ultimate bidding results, particularly the payments, an adversarial worker may deduce the bids of other workers based on the distinct payments received in various auctions. 

\textbf{Adversary Model Analysis:} The driving force behind the implementation of robust privacy protection mechanisms within HERALD* is the imperative to counter the looming threat of malicious worker inference attacks. These attacks possess the potential to inflict substantial disruption upon the operation and integrity of the Mobile Crowd Sensing (MCS) platform. Malicious workers can exploit the bids of fellow participants, thereby gaining insights into their strategies and subsequently employing strategic bid manipulations. If left unaddressed, these manipulations can wreak havoc on the core operations of the platform. Within the MCS framework, the bids submitted by workers play a pivotal role in the decisive winning selection phase. Malicious workers, through the manipulation of bids, can exert undue influence on the outcomes of this phase, thereby imperiling fairness and introducing imbalances into the selection process. The repercussions of malicious worker inference attacks extend even further, reaching into the critical domain of payment determination. Bid manipulation, if unchecked, can culminate in the unjust distribution of compensation, directly impacting the interests of both the platform and its users. This misalignment of incentives, in turn, has the potential to lead to a reduction in participation and the erosion of trust within the MCS ecosystem.

To protect against this inference attack and ensure the privacy of the workers' bids, it is necessary to devise a mechanism that fulfills differential privacy, which is defined as follows.

\begin{definition}[Differential Privacy\cite{dwork2014algorithmic}]\label{DP} A randomized mechanism $M$ preserves $(\epsilon,\delta)$-differential privacy if for any two input sets $A$ and $B$ with a single input difference, and for any set of outputs $\mathcal{O} \in Range(M)$,
\begin{footnotesize}
\begin{equation}\label{differential}
Pr[M(A)\in \mathcal{O}] \le exp(\epsilon) \times Pr[M(B)\in \mathcal{O}] + \delta.
\end{equation}
\end{footnotesize}
If $\delta=0$, we say that $M$ preserves $\epsilon$-differential privacy.
\end{definition}

Let $Pr(b_i)$ represent the probability that worker $i$ is matched with any task subset $\Gamma_{j}\in \mathcal{Y}$ when their bid is $b_i$. To safeguard the privacy of the workers' bids, it is necessary to decrease the impact of the distinct bids of workers on the eventual bidding outcome. To do this, we introduce differential privacy, specifically using the exponential mechanism which is defined as follows.

\begin{definition}[The Exponential Mechanism~\cite{dwork2014algorithmic}]\label{EP} The exponential mechanism $\mathcal{M}_E(x, u, R)$ selects and outputs an element $r \in R$ with probability proportional to $exp(\frac{\epsilon u(x,r)}{2\Delta u})$, i.e., $Pr[\mathcal{M}_E(x, u, R)=r] \propto exp(\frac{\epsilon u(x,r)}{2\Delta u})$, where $x$ is the input set and $u$ is a utility function that maps input/output pairs to utility scores, $\Delta u=max_{r \in R} max_{x,y:{||x-y||}_1\le 1} |u(x,r)-u(y,r)|$ is the sensitivity of the utility function $u$, and $\epsilon $ is a small constant.
\end{definition}

We have integrated the above exponential mechanism into HERALD* to safeguard the input, which consists of the task set and bid, while the output is the worker-task matching result. This ensures that changes in the input do not alter the output, preserving the differential privacy of bids. This method effectively shields the bids from being identified by adversaries, particularly during critical phases such as winner selection and payment determination, which rely on these protected bids. The following theorem is derived from the exponential mechanism mentioned above.

\begin{theorem}[\cite{dwork2014algorithmic}]\label{EPDP}
The exponential mechanism $\mathcal{M}_E(x, u, R)$ preserves $(\epsilon, 0)$-differential privacy.
\end{theorem}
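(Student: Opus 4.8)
The plan is to prove the claim directly from the definition of $\epsilon$-differential privacy by bounding the ratio of output probabilities under two neighboring inputs. First I would write the probability that the mechanism returns a fixed element $r \in R$ explicitly as the normalized exponential weight,
\begin{footnotesize}
\begin{equation}
Pr[\mathcal{M}_E(x, u, R) = r] = \frac{\exp\!\left(\frac{\epsilon u(x,r)}{2\Delta u}\right)}{\sum_{r' \in R}\exp\!\left(\frac{\epsilon u(x,r')}{2\Delta u}\right)},
\end{equation}
\end{footnotesize}
and write the analogous expression for a neighboring input $y$ with $\|x-y\|_1 \le 1$. Because the claimed guarantee has $\delta = 0$, it suffices to show that for every $r$ the ratio $Pr[\mathcal{M}_E(x,u,R)=r]\,/\,Pr[\mathcal{M}_E(y,u,R)=r]$ is at most $\exp(\epsilon)$.

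The central step is to factor this ratio into two pieces: the ratio of the numerators (the un-normalized weights of the chosen $r$) and the ratio $N_y/N_x$ of the normalizing sums, and then bound each piece separately. For the numerator factor, the difference in exponents equals $\frac{\epsilon}{2\Delta u}\,(u(x,r)-u(y,r))$, and by the definition of the sensitivity $\Delta u = \max_{r}\max_{x,y:\|x-y\|_1\le 1}|u(x,r)-u(y,r)|$ this is at most $\frac{\epsilon}{2}$, giving a bound of $\exp(\epsilon/2)$. For the normalization factor I would apply the same sensitivity bound term by term inside the sum: each summand satisfies $\exp\!\left(\frac{\epsilon u(y,r')}{2\Delta u}\right) \le \exp(\epsilon/2)\cdot\exp\!\left(\frac{\epsilon u(x,r')}{2\Delta u}\right)$, so the whole sum over $R$ inherits the factor $\exp(\epsilon/2)$, whence $N_y/N_x \le \exp(\epsilon/2)$. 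Multiplying the two bounds yields $\exp(\epsilon/2)\cdot\exp(\epsilon/2)=\exp(\epsilon)$, as required.

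The subtle point — and the step I expect to be the main obstacle for a reader — is recognizing that the privacy loss arises from two independent sources, the chosen output and the normalizing constant, each contributing a factor $\exp(\epsilon/2)$. This is precisely why the mechanism places the factor $2$ in the exponent $\frac{\epsilon}{2\Delta u}$ rather than $\frac{\epsilon}{\Delta u}$: halving the per-source budget ensures the combined leakage is exactly $\epsilon$. A careless analysis that bounds only the numerator ratio would overlook the contribution of the normalization sum; the term-by-term bound on that sum is what makes the argument correct. Since the statement is the classical exponential-mechanism result of \cite{dwork2014algorithmic}, no machinery beyond the sensitivity definition and elementary manipulation of the exponential weights is needed to complete the proof.
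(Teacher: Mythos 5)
Your proposal is correct: the decomposition of the probability ratio into the un-normalized weight ratio and the normalizing-sum ratio, each bounded by $\exp(\epsilon/2)$ via the sensitivity $\Delta u$, is exactly the canonical argument for this result. The paper itself states this theorem by citation to \cite{dwork2014algorithmic} and gives no proof of its own, so there is nothing to compare against beyond noting that your argument reproduces the standard proof in that reference (and your reduction from arbitrary output sets $\mathcal{O}$ to singletons is valid since $\delta=0$).
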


To ensure that changes in workers' bids do not significantly affect the final bidding results, the exponential mechanism is implemented to achieve differential privacy in HERALD*. This mechanism guarantees that any changes in the input data remain hidden, thus safeguarding the privacy of individual bids. Moreover, this protection extends seamlessly to subsequent phases,  as these phases are entirely based on the protected bids. This makes it difficult for malicious workers to infer the bidding details of other workers based on the final bidding results. Similar to previous work \cite{lin2016bidguard,yidanhu2019infocom,Han_tmc2019,wei2019differential}, we incorporate randomization into the incentive mechanism's outcome to achieve differential privacy.

Apart from the above objectives, we also strive for HERALD* to possess a low anticipated social cost, taking into account the probability distribution of the tasks in $\mathcal{T}$ arrive. We assess the competitive ratio of the system's expected social cost as a measure to accomplish this aim.

\begin{definition}[Competitive Ratio on Expected Social Cost]\label{KEP}
Suppose the tasks in the sensing task set $\mathcal{T}$ arrive in a probability distribution, for any set $\mathcal{A}$ of $k$ tasks that may arrive at the same time from $\mathcal{T}$, let $\mathcal{S}(\mathcal{A},W)$ refer to the winning set chosen by the mechanism such that $\mathcal{A}\subseteq\cup_{(\Gamma_j,c_i)\in\mathcal{S}(\mathcal{A},W)}\Gamma_{j}$ and $\Gamma_{j}\cap\mathcal{A}\neq\emptyset$ for $\forall i\in\mathcal{S}(\mathcal{A},W)$, $C(\mathcal{S}(\mathcal{A},W))=\sum_{(\Gamma_j,c_i)\in\mathcal{S}(\mathcal{A},W)}c_{i}$ be the corresponding \textbf{social cost}, and $C_{\mathcal{OPT}}(\mathcal{A},W)$ be the minimum social cost of a requested task set $\mathcal{A}$, respectively. The competitive ratio on expected social cost is defined as $\max_{k}\mathbb{E}_{\mathcal{P}\in \digamma}\big [\mathbb{E}_{\mathcal{A}\subseteq\mathcal{T}}[C(\mathcal{S}(\mathcal{A},W))]/\mathbb{E}_{\mathcal{A}\subseteq\mathcal{T}}[C_{\mathcal{OPT}}(\mathcal{A},W)]\big ]$, where $\mathbb{E}_{\mathcal{P}\in \digamma}[\cdot]$ is the expectation over all matching set results $\digamma$ and $\mathbb{E}_{\mathcal{A}\subseteq\mathcal{T}}[\cdot]$ is the expectation over all sets of possibly $k$ arriving tasks in the future.
\end{definition}

It is worth noting that a worker may be included multiple times in the winning set, having the same cost but with a different subset of tasks since each worker can match with several subsets of tasks. Additionally, it is possible for some tasks in the task set $\mathcal{A}$ to be identical. Thus, when referring to $\mathcal{A}\subseteq\mathcal{T}$ in the evaluation of the competitive ratio, we imply that every task in $\mathcal{A}$ also belongs to $\mathcal{T}$, as the tasks in $\mathcal{T}$ are distinct. Moreover, the expectation $\mathbb{E}_{\mathcal{A}\subseteq\mathcal{T}}[\cdot]$ is determined by the variability of the set $\mathcal{A}$ of $k$ requested tasks. For convenience of notation, the subscript $\mathcal{A}\subseteq\mathcal{T}$ is omitted in the rest of this paper, and this expectation is expressed as $\mathbb{E}[\cdot]$.

Ultimately, we strive for HERALD* to be computationally efficient, and we define this objective as follows.
\begin{definition}\label{ce}
An incentive mechanism is computationally efficient if it can be executed within polynomial time.
\end{definition}

In essence, our aims are to guarantee that the proposed mechanism is both truthful and individually rational, as well as being differentially private, while also having a small social cost and computational complexity.

\section{Incentive Mechanism for Uncertain Tasks without real-time constraints}\label{sec-UC}
In this section, we introduce an incentive mechanism designed for uncertain tasks without real-time constraints, and we demonstrate that our mechanism satisfies the properties of truthfulness, individual rationality, and differential privacy. Additionally, we examine the competitive ratios on the expected social cost of the mechanism, as stated in Theorem \ref{expected_ratio}. Furthermore, we evaluate the computational complexity of the mechanism, as presented in Proposition~\ref{complexity-A}, in addition to the aforementioned properties.

\subsection{Design Rationale}
The design of an incentive mechanism must take into account the possibility of tasks arriving simultaneously, as this can affect the mechanism's construction. In the offline scenario, all tasks arrive simultaneously, and the platform has prior knowledge of all tasks. However, in the case of uncertain tasks, the number of tasks arriving simultaneously is also uncertain, and the probability distribution of the number of arriving tasks is different. To clarify this concept, consider the following straightforward example.

\begin{example}
In this example, the platform possesses a sensing task set $\mathcal{T}=\{\tau_{1},\tau_{2},\tau_{3}\}$ containing three tasks, each of which has a probability of $\frac{1}{3}$ to arrive in the future. As a result, the task arrivals follow a uniform distribution. If a single task is set to arrive in the future, it could be $\tau_{1}$, $\tau_{2}$ or $\tau_{3}$ with the same probability $\frac{1}{3}$. While, if two tasks arrive simultaneously in the future, they may be $\{\tau_{1},\tau_{1}\}$, $\{\tau_{2},\tau_{2}\}$ and $\{\tau_{3},\tau_{3}\}$ with the same probability $\frac{1}{9}$, and may be $\{\tau_{1},\tau_{2}\}$, $\{\tau_{1},\tau_{3}\}$ and $\{\tau_{2},\tau_{3}\}$ with the same probability $\frac{2}{9}$. Furthermore, if three tasks simultaneously arrive in the future, they may be $\{\tau_{1},\tau_{2},\tau_{3}\}$ with probability $\frac{2}{9}$; $\{\tau_{1},\tau_{1},\tau_{1}\}$, $\{\tau_{2},\tau_{2},\tau_{2}\}$ and $\{\tau_{3},\tau_{3},\tau_{3}\}$ with the same probability $\frac{1}{27}$; and may be $\{\tau_{1},\tau_{2},\tau_{2}\}$, $\{\tau_{1},\tau_{3},\tau_{3}\}$,
$\{\tau_{1},\tau_{1},\tau_{2}\}$,
$\{\tau_{1},\tau_{1},\tau_{3}\}$,
$\{\tau_{2},\tau_{2},\tau_{3}\}$, and
$\{\tau_{2},\tau_{3},\tau_{3}\}$ with the same probability $\frac{1}{9}$.
\end{example}

As demonstrated in the example, our approach differs from existing works. Rather than assuming a fixed number of tasks arriving simultaneously, we propose HERALD*, an adaptive incentive mechanism that adjusts to varying numbers of tasks as they arrive. The first phase of HERALD* involves matching workers to all possible task subsets, before the actual task requests are received. Subsequently, an estimated number of tasks arriving simultaneously is inputted into HERALD*. Based on the different input numbers, HERALD* will produce varying outcomes for both winning selection and payment determination.

\subsection{Design Details}\label{subsecDD-UC}
Within this section, we shall furnish an in-depth account of the operational mechanics of HERALD*. This mechanism consists of three phases: the task-worker matching phase (Alg.~\ref{alg1}), the winning selection phase (Alg.~\ref{alg2}), and the payment determination phase (Alg.~\ref{alg3}).

 \begin{algorithm}[ht]
            \caption{HERALD*: Task-Worker Matching}
            \begin{algorithmic}[1]
            \renewcommand{\algorithmicrequire}{\textbf{Input:}}
            \renewcommand{\algorithmicensure}{\textbf{Output:}}
            \label{alg1}
            \REQUIRE The task set $\mathcal{T}$, the set of task subsets $\mathcal{Y}=\{\Gamma_{1},\Gamma_{2},\ldots,\Gamma_{l}\}$ such that  $\bigcup^{l}_{j=1}\Gamma_{j}=\mathcal{T}$, worker set $\mathcal{W}$, each worker's bid, a score function $u$ and its sensitivity $\Delta u$, a small constant $\epsilon$.
            \ENSURE Matching set $\mathcal{P}$.
            
            \STATE $\mathcal{P}\leftarrow\emptyset$

            \STATE $S = \sum_{i\in \mathcal{W}} exp(\frac{\epsilon u(b_i)}{2\Delta u})$
            
            \FOR{each worker $i \in \mathcal{W}$}
            
            \STATE Calculate the probability $Pr(b_i)$ that worker $i$ with his/her unique bid $b_i$ is matched with each task subset $\Gamma_{j} \in \mathcal{Y}$: $Pr(b_i) = \frac{exp(\frac{\epsilon u(b_i)}{2\Delta u})}{S}$
            \ENDFOR
            \FOR{each task subset $\Gamma_{j} \in \mathcal{Y}$}
            
            \STATE Select the worker $i$ randomly according to the computed probability distribution to match the task subset $\Gamma_{j}$

            \STATE $\mathcal{P}\cup\{
            (\Gamma_{j}, b_i)\}$
            
            \ENDFOR
        \RETURN The matching set $\mathcal{P}$ of the task subset $\Gamma_{j}$ and its matched worker $i$'s bid.
    \end{algorithmic} 
    \end{algorithm}

 \begin{algorithm}[ht]
            \caption{HERALD*: Winning  Selection}
            \renewcommand{\algorithmicrequire}{\textbf{Input:}}
             \renewcommand{\algorithmicensure}{\textbf{Output:}}
             \begin{algorithmic}[1]
            \label{alg2}
            \REQUIRE The task set $\mathcal{T}$, matching set $\mathcal{P}$, each worker's bid, the number $k$ of tasks arriving simultaneously.
            \ENSURE The winning set $\mathcal{S}$.
            \STATE $\mathcal{S}\leftarrow\emptyset$\;

            \STATE Calculate the selection threshold $T=64\mathbb{E}[C_{\mathcal{OPT}}(\mathcal{A},W)]$, where $\mathcal{A}$ is the set of $k$ possibly simultaneously arriving tasks from the sensing task set $\mathcal{T}$

            \WHILE {$\mathcal{T}\neq\emptyset$}{
            \FOR {each matching pair  $(\Gamma_{j}, b_i)\in \mathcal{P}$}
            \STATE Calculate the cost effectiveness (CF) $\frac{b_{i}}{|\Gamma_{j}\cap\mathcal{T}|}$, where $b_i$ is the bid of worker $i$ for executing the task subset $\Gamma_{j}$
            \ENDFOR
            \\ \textit{Type I Selection:}
            \IF {$\exists (\Gamma_{j}, b_i)\in \mathcal{P}$, s.t $\frac{b_{i}}{|\Gamma_{j}\cap\mathcal{T}|}\leq \frac{T}{|\mathcal{T}|}$}
            
            \STATE Among the task subsets whose CFs are less than $\frac{T}{|\mathcal{T}|}$, the worker $i\in\mathcal{W}$ that matches the task subset $\Gamma_{j}$ with the minimum CF value is selected as the winner 
            \\ \textit{Type II Selection:}
            \ELSE
            \STATE Among the workers matched by all task subsets, the worker $i\in\mathcal{W}$ whose bid is the lowest and whose task subset $\Gamma_{j}$ she matched contains at least one undiscovered task is selected as the winner
            \ENDIF
            \STATE $\mathcal{S}\leftarrow\mathcal{S}\cup\{(\Gamma_{j},b_i)\}$
            
            \STATE $\mathcal{P}\leftarrow\mathcal{P}\backslash\{(\Gamma_{j}, b_i)\}$
            \STATE $\mathcal{T}\leftarrow\mathcal{T}\backslash\Gamma_{j}$\;
            }
            \ENDWHILE
        \RETURN $\mathcal{S}$.
    \end{algorithmic}
    \end{algorithm}

 \begin{algorithm}[!t]
            \caption{HERALD*: Payment Determination}
            \begin{algorithmic}[1]
            \renewcommand{\algorithmicrequire}{\textbf{Input:}}
            \renewcommand{\algorithmicensure}{\textbf{Output:}}
            \label{alg3}
            \REQUIRE The worker set $W$, winning set $S$, each worker's bid.
            \ENSURE The payment $\overrightarrow{p}$.
            \FOR{each winning pair $(\Gamma_{j}, b_i)\in S$}
            \STATE Define a \emph{copy set} $\mathcal{T}_{j}\leftarrow\Gamma_{j}$
            
            \STATE Build a \emph{covering set} $\mathcal{W}_{i}=\{\ell|\forall \ell\in\mathcal{W}\backslash\{i\},\ \Gamma_{\ell}\cap\mathcal{T}_{j}\neq\emptyset\}$
            
            \STATE Define a \emph{replaced set} $\mathcal{R}_{i}\leftarrow\emptyset$\;
            \WHILE{$\mathcal{T}_{j}\neq\emptyset$}
            \STATE Choose a worker $\ell\in\mathcal{W}_{i}$ whose matched task subset has the minimum CF
            
            \STATE $\mathcal{R}_{i}\leftarrow \mathcal{R}_{i}\cup\{\ell\}$
            \STATE $\mathcal{T}_{j}\leftarrow\mathcal{T}_{j}\backslash\Gamma_{\ell}$
            \ENDWHILE
            \STATE $p_{i}=p_{i} + \max\{b_{i},p_{\mathcal{R}_{i}}\}$ for $p_{\mathcal{R}_{i}}=\sum_{\ell\in\mathcal{R}_{i}}b_{\ell}$
            \ENDFOR
        \RETURN $\overrightarrow{p}$.
    \end{algorithmic} 
    \end{algorithm}

\textbf{Task-Worker Matching Phase:}
Before the real requested tasks arrive, the platform first matches workers for all task subsets. We refer to this as the task-worker matching phase. As can be seen in Alg.~\ref{alg1}, for each task subset $\Gamma_{j} \in \mathcal{Y}$, the probability that worker $i$ with the unique bid $b_i$ is matched is $Pr(b_i)$ which is calculated in Lines 2-4. In particular, as we introduced in definition \ref{EP}, we employ the exponential mechanism to achieve the differential privacy of the bidding results. Thus we set $Pr(b_i)\propto exp(\frac{\epsilon u(b_i,r)}{2\Delta u})$, where $r \in R$ is the output and $u$ is a score function that maps input/output pairs to utility scores, $\Delta u=max_{r \in R} max_{x,y:{||x-y||}_1\le 1} |u(x,r)-u(y,r)|$ is the sensitivity of the utility function $u$, and $\epsilon $ is a small constant. For each round of matching shown in Lines 5-8, the platform eventually assigns the task subset to the worker randomly according to the computed probability distribution. To ensure task quality and prevent monopolization, we require that each task $\tau_{i}\in\mathcal{T}$ is matched by at least two different workers.

Upon completion of the task-worker matching phase, the platform proceeds to identify the winning set $\mathcal{S}$ and the corresponding payment $p_{i}$ for each winning worker $i$ based on the submitted matching set $\mathcal{P}$. We refer to these as the winning selection phase and payment determination phases, which are illustrated in Alg.~\ref{alg2} and Alg.~\ref{alg3}, respectively.

\textbf{Winning Selection Phase:}
As can be seen in Alg.~\ref{alg2}, in order to obtain the sensory data of the uncertain tasks, we define a \emph{selection threshold} (ST) $T\geq0$ (Line 2) that remains unchanged throughout the implementation process of HERALD*. There are two distinct forms of selection employed in HERALD* during each iteration: \emph{Type I Selection} and \emph{Type II Selection}, which are chosen by cost-effectiveness (CF) (Lines 4-6) defined as follows.

\begin{definition}[Cost-effectiveness]\label{CF}
Let $\mathcal{T}$ be the set of tasks whose sensory data is not collected at the beginning of an iteration in HERALD*. For each matching pair $(\Gamma_{j}, b_{i})$, the CF of a worker $i$ during each iteration is defined as $\frac{b_{i}}{|\Gamma_{j}\cap\mathcal{T}|}$.
\end{definition}

\begin{itemize}
\item
\emph{Type I Selection (Lines 7-8):} If there exist any task subsets with cost-effectiveness (CF) values equal to or lower than $\frac{T}{|\mathcal{T}|}$, the platform will choose the worker who matches the task subset with the lowest CF as the winner.

\item
\emph{Type II Selection (Lines 9-10):} If the CFs of all task subsets exceed $\frac{T}{|\mathcal{T}|}$, the platform will choose a winning worker from among those matched to all task subsets. This worker must possess the lowest bid and have a task subset that includes at least one unassigned task in $\mathcal{T}$.
\end{itemize}

Subsequently, the matching pair that was selected as per the aforementioned criteria are incorporated into the winning set $\mathcal{S}$ (Line 12) and deleted from the matching set $\mathcal{P}$ (Line 13), and the corresponding tasks are also deleted from the task set $\mathcal{T}$ (Line 14). The platform then proceeds to calculate the payment for each winner based on the members of the winning set.

\textbf{Payment Determination Phase:}
As shown in Alg.~\ref{alg3}, for each winning pair $(\Gamma_{j},b_i)\in\mathcal{S}$, the platform defines a \emph{copy set} $\mathcal{T}_{j}=\Gamma_{j}$ and builds a \emph{covering set} $\mathcal{W}_{i}=\{\ell|\forall \ell\in\mathcal{W}\backslash\{i\},\ \Gamma_{\ell}\cap\mathcal{T}_{j}\neq\emptyset\}$, see Line 2 and Line 3 separately. This means, the intersection of the task set matched by workers in \emph{covering set} $\mathcal{W}_{i}$ and the task set matched by worker $i$ is not $0$. It then derives a \emph{replaced set} denoted as $\mathcal{R}_{i}$ consisting of workers in $\mathcal{W}_{i}$ with the least CFs in each iteration such that $\Gamma_{j}\subseteq\cup_{\ell\in\mathcal{R}_{i}}\Gamma_{\ell}$. As shown in Lines 4-9, worker $i$'s task set $\Gamma_{j}$ can be replaced by workers in the \emph{replaced set} $\mathcal{R}_{i}$. Since each worker can match multiple task sets at the same time to form multiple matching pairs, among the multiple matching pairs composed of worker $i$, more than one matching pair may become the winning pair, so the total reward of worker $i$ is equal to the cumulative sum of the rewards of all winning pairs he formed. For each winning pair, worker $i$'s reward is the maximum value between $b_i$ and the total bids of workers in the \emph{replaced set} $\mathcal{R}_{i}$. Therefore, as shown in Line 10, the payment to winner $i$ is $p_{i}=p_{i}+\max\{b_{i},p_{\mathcal{R}_{i}}\}$, where $p_{\mathcal{R}_{i}}=\sum_{\ell\in\mathcal{R}_{i}}b_{\ell}$.

\begin{remark}\label{rm1}
Building on real-life applications, scenarios such as environmental monitoring, or smart city implementations are ripe for MCS. In these contexts, tasks may show up uncertainly (e.g., identifying a city block or mapping). Given their unpredictable nature, there's a need to collect data efficiently. Moreover, the sensitivity of the worker's bid involved necessitates strong privacy mechanisms. It is in such scenarios that our algorithm works, providing not just incentive mechanisms but also ensuring the protection of worker bids via differential privacy. This becomes paramount to prevent inference attacks and protect the privacy of participating workers.
\end{remark}

\begin{example}
In this example, the platform has a task set $\mathcal{T}=\{\tau_{1},\tau_{2},\tau_{3},\tau_{4},\tau_{5}\}$ with five tasks and divides them into seven task subsets $\Gamma_{1}=\{\tau_{1},\tau_{2}\}$, $\Gamma_{2}=\{\tau_{2},\tau_{3}\}$, $\Gamma_{3}=\{\tau_{3},\tau_{1},\tau_{4}\}$, $\Gamma_{4}=\{\tau_{4},\tau_{5}\}$, $\Gamma_{5}=\{\tau_{4}\}$, $\Gamma_{6}=\{\tau_{2},\tau_{5}\}$ and $\Gamma_{7}=\{\tau_{2},\tau_{4},\tau_{5}\}$. There are seven workers whose costs are $c_{1}=1.4$, $c_{2}=1.8$, $c_{3}=2.8$, $c_{4}=2.6$, $c_{5}=3.1$, $c_{6}=3.3$ and $c_{7}=3.6$. As HERALD* is truthful, which will be demonstrated later, it follows that $b_i=c_i$. Before the arrival of the actually requested tasks, the platform first matches workers for all possible task subsets. We refer to this as the \textbf{task-worker matching phase} which is shown in Alg.~\ref{alg1}. For each task subset $\Gamma_{j} \in \mathcal{Y}$, the probability that worker $i$ with the unique bid $b_i$ is matched is $Pr(b_i)$. For each round of matching, the platform eventually assigns the task subset to the worker randomly according to the computed probability distribution. To prevent any monopolistic behavior and ensure the quality of the sensing task, we require that each task $\tau_{i}\in\mathcal{T}$ be matched by at least two different people. We assume that the matching set $\mathcal{P}=\{(\Gamma_{1},b_1),(\Gamma_{2},b_2),(\Gamma_{3},b_3),(\Gamma_{4},b_4),(\Gamma_{5},b_5),(\Gamma_{6},b_6),(\Gamma_{7},b_7)\}$.

We make the assumption that the task arrival follows a uniform distribution. If we set the number of tasks that arrive simultaneously to one, i.e., one task arrives at each time, then the task can be $\tau_{1}$, $\tau_{2}$, $\tau_{3}$, $\tau_{4}$ or $\tau_{5}$ with the same probability of $\frac{1}{5}$. Consequently, the selection threshold is $T=125.44$. The platform then executes the \textbf{winning selection phase} as described in Alg.~\ref{alg2}. In the first iteration, after computing the cost-effectiveness of all matching pairs, the condition in Line 6 of HERALD* is satisfied. Therefore, the platform performs a \texttt{type I selection} and selects matching pair $(\Gamma_{1},b_1)$ as the winning pair. In the second iteration, the condition in Line 6 still holds, and thus another \texttt{type I selection} is performed, selecting matching pair $(\Gamma_{4},b_4)$ as the winning pair. By repeating this process, we obtain that the HERALD* algorithm selects the final winning set $\mathcal{S}=\{(\Gamma_{1},b_1),(\Gamma_{2},b_2),(\Gamma_{4},b_4)\}$.

Next, the \textbf{payment determination phase} is carried out by the platform as shown in Alg.~\ref{alg3}. For instance, for worker $1$, whose covering set is $\mathcal{W}_{1}=\{2,3,6,7\}$, the replace set is $\mathcal{R}_{1}=\{2,3\}$. Thus, the payment for worker $1$ is calculated as $p_{1}=1.8+2.8=4.6$. Similarly, the payments for worker $2$ and worker $4$ are computed as $p_{2}=1.4+2.8=4.2$ and $p_{4}=3.6$, respectively. Moreover, when two tasks arrive simultaneously, they can be $\{\tau_{1},\tau_{1}\}$, $\{\tau_{2},\tau_{2}\}$, $\{\tau_{3},\tau_{3}\}$, $\{\tau_{4},\tau_{4}\}$, $\{\tau_{5},\tau_{5}\}$ with the same probability $\frac{1}{25}$, and $\{\tau_{1},\tau_{2}\}$, $\{\tau_{1},\tau_{3}\}$, $\{\tau_{1},\tau_{4}\}$,  $\{\tau_{1},\tau_{5}\}$, $\{\tau_{2},\tau_{3}\}$, $\{\tau_{2},\tau_{4}\}$, $\{\tau_{2},\tau_{5}\}$, $\{\tau_{3},\tau_{4}\}$, $\{\tau_{3},\tau_{5}\}$, $\{\tau_{4},\tau_{5}\}$ with the same probability $\frac{2}{25}$. The selection threshold is $T=181.248$. Subsequently, the HERALD* platform can execute the stages of \textbf{winning selection phase} and \textbf{payment determination phase} in a sequential manner in order to derive the set of winning entries and their respective payment amounts.
\end{example}

\begin{remark}\label{rm2}
It is apparent that when the input parameter $k$ (the number of tasks arriving simultaneously) is set to the total number of tasks $n$ in the platform's task set, HERALD* exhibits a probability of $\frac{A_n^n}{n^n}$, which is reduced to an offline incentive mechanism. Several prior studies have explored offline scenarios where the platform has complete knowledge of the task information. Hence, this implies that HERALD* can be employed in a wider range of scenarios than the conventional offline incentive mechanisms.
\end{remark}

\subsection{Design of Score Functions}
To apply the exponential mechanism to achieve the differential privacy of bidding results, it is necessary to devise score functions. Two score functions, namely a linear score function, and a logarithmic score function are created for this purpose. We will show that they have theoretical bounds on differential privacy and produce different impacts in simulations.

\textbf{Linear score function:}
$f_{lin}(x)=-x$. For any worker $i \in \mathcal{W}$, the probability that worker $i$ with bid $b_i$ is matched with any task subset $\Gamma_j \in \mathcal{Y}$ is

\begin{equation}\label{LIN score functuon}
\footnotesize
\begin{split}
Pr(b_i)\propto \left\{
             \begin{array}{lc}
             exp(-\frac{\epsilon b_i}{2 \Delta u b_{max}}),&\text{if $i \in \mathcal{W}$},\\
             {0},&\text{otherwise}.
             \end{array}
        \right.
\end{split}
\end{equation}

Since $u = f_{lin}(x) = -x$, we have
\begin{footnotesize}
\begin{equation}
\begin{split}
\Delta u&=max_{r \in R} max_{x,y:{||x-y||}_1\le 1} |u(x,r)-u(y,r)| \\
&= \frac{b_{max}-b_{min}}{b_{max}}. 
\end{split}
\end{equation}
\end{footnotesize}
In order to guarantee that the score function's value is non-negative, we apply the following normalization process.

\begin{footnotesize}
\begin{equation}\label{Normalize LIN score functuon}
\begin{split}
Pr(b_i)= \left\{
             \begin{array}{lc}
             \frac{exp(-\frac{\epsilon b_i}{2 (b_{max}-b_{min})})}{\sum_{j \in \mathcal{W}}exp(-\frac{\epsilon b_j}{2 (b_{max}-b_{min})})},&\text{if $i \in \mathcal{W}$},\\
             {0},&\text{otherwise}.
             \end{array}
        \right.
\end{split}
\end{equation}
\end{footnotesize}

\textbf{Logarithmic score function:}
$f_{\ln}(x)=-\ln(x)$. For any worker $i \in \mathcal{W}$, the probability that worker $i$ with bid $b_i$ is matched with any task subset $\Gamma_j \in \mathcal{Y}$ is
\begin{equation}\label{LN score functuon}
\footnotesize
\begin{split}
Pr(b_i)\propto \left\{
             \begin{array}{lc}
             exp(\frac{-\epsilon \ln\frac{b_i}{b_{max}}}{2 \Delta u}),&\text{if $i \in \mathcal{W}$},\\
             {0},&\text{otherwise}.
             \end{array}
        \right.
\end{split}
\end{equation}
Since $u = f_{\ln}(x)=-\ln(x)$, we have
\begin{equation}
\footnotesize
\begin{split}
\Delta u&=max_{r \in R}max_{x,y:{||x-y||}_1\le 1} 
|u(x,r)-u(y,r)| \\
&= \ln \frac{b_{max}}{b_{max}}-\ln \frac{b_{min}}{b_{max}} = -\ln\frac{1}{b_{max}} = \ln b_{max}, 
\end{split}
\end{equation}
where $b_{min}$ is normalized to $1$ and $b_{max}$ is a constant. We also need to normalize the score function to ensure that its value is non-negative
\begin{equation}\label{Normalize LN score functuon}
\footnotesize
\begin{split}
Pr(b_i)= \left\{
             \begin{array}{lc}
             \frac{exp\left(\frac{-\epsilon \ln \frac{b_i}{{b_{max}}} }{2 \ln b_{max}}\right)}{\sum_{j \in \mathcal{W}}exp\left(\frac{-\epsilon \ln \frac{b_j}{{b_{max}}} }{2 \ln b_{max}}\right)},&\text{if $i \in \mathcal{W}$},\\
             {0},&\text{otherwise}.
             \end{array}
        \right.
\end{split}
\end{equation}

\subsection{Analysis}
This subsection will provide evidence that HERALD* conforms to the characteristics outlined in Section \ref{subsecdes}.

\begin{theorem}\label{X-Btruth}
HERALD* is truthful.
\end{theorem}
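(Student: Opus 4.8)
The plan is to prove truthfulness by verifying the two standard Myerson-type conditions for a single-parameter mechanism: that the winning (allocation) rule is \emph{monotone} in each worker's bid, and that the \emph{payment} rule is the critical-value (threshold) payment induced by that allocation. Truthfulness then follows because a worker can only gain by bidding truthfully under a monotone allocation with threshold payments. Throughout I would fix an arbitrary worker $i$ and treat the bids $\overrightarrow{b}_{-i}$ of all other workers as fixed, analyzing how $i$'s utility $u_i = p_i - c_i$ responds to deviations of $b_i$ away from the true cost $c_i$.

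First I would establish \textbf{monotonicity} of the winning selection in Alg.~\ref{alg2}. The key observation is that, holding $\overrightarrow{b}_{-i}$ fixed, raising $b_i$ can never help worker $i$ get selected: in a Type I Selection the winner is the matching pair with the \emph{minimum} cost-effectiveness $b_i/|\Gamma_j\cap\mathcal{T}|$, and in a Type II Selection the winner is the one with the \emph{lowest bid}; in both rules the selection statistic is nondecreasing in $b_i$. Hence if worker $i$ wins at bid $b_i$, then $i$ also wins (for the same matching pair) at any bid $b_i' < b_i$, which is exactly the monotonicity property. I would also note the existence of a critical bid $b_i^{*}$ (a threshold) such that $i$ wins if and only if $b_i \le b_i^{*}$, with $\overrightarrow{b}_{-i}$ held fixed.

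Next I would argue that the \textbf{payment} $p_i = \max\{b_i, p_{\mathcal{R}_i}\}$ accumulated in Alg.~\ref{alg3} (summed over each winning pair of worker $i$) is precisely this critical value, i.e.\ is independent of $b_i$ for a winning worker. The point is that $p_{\mathcal{R}_i} = \sum_{\ell\in\mathcal{R}_i} b_\ell$ is built from the \emph{replaced set} $\mathcal{R}_i$ of substitute workers drawn from $\mathcal{W}\setminus\{i\}$, so it does not depend on $b_i$; and the threshold at which $i$ would be displaced by its cheapest replacements equals exactly this quantity. I would then run the standard two-case utility comparison: a winning worker who bids truthfully receives $u_i = \max\{c_i, p_{\mathcal{R}_i}\} - c_i \ge 0$, and since the payment does not depend on the declared bid while winning, overbidding only risks losing (dropping to $u_i=0$) and underbidding cannot increase the fixed payment. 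A losing worker has $u_i=0$ and could only switch to winning by bidding below $c_i$, which — because the payment in that regime is capped by the threshold relative to $c_i$ — yields nonpositive utility. Thus truthful bidding weakly dominates, establishing Definition~\ref{IC}.

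The main obstacle I anticipate is handling the \textbf{two-phase interaction} cleanly. Because worker $i$'s bid $b_i$ first enters probabilistically through the exponential-mechanism matching in Alg.~\ref{alg1} (where a lower bid raises $Pr(b_i)$ and thus the chance of being matched to more subsets) and only \emph{then} enters the winning/payment phases, one must argue that truthfulness holds either conditional on the realized matching set $\mathcal{P}$ or in expectation over it. I would argue conditionally: fix the realized matching $\mathcal{P}$, prove the monotone-allocation-plus-threshold-payment argument above, and observe that raising the matching probability by underbidding cannot create a profitable deviation since each realized winning pair already yields nonnegative, bid-independent surplus. A secondary subtlety is the bookkeeping when worker $i$ appears in \emph{multiple} winning pairs: since the payment is additive across pairs and each term is an independent critical-value payment for its own copy set $\mathcal{T}_j$, the per-pair argument extends termwise, and the total utility remains maximized at $b_i=c_i$.
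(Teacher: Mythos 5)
Your core route is the same as the paper's: Appendix~A also proves truthfulness by verifying the Singer/Myerson characterization --- a monotone selection rule plus critical-value payments --- for Algs.~\ref{alg2} and~\ref{alg3} with the matching set $\mathcal{P}$ held fixed. The two steps you assert without proof are exactly where the paper's work lies, and both need more than ``the selection statistic is nondecreasing in $b_i$'': (i) monotonicity requires a case analysis because lowering a bid can move a pair from the Type~II regime into the Type~I regime (the pair then still wins, but for a different reason: it becomes the unique pair with CF at most $T/|\mathcal{T}|$); (ii) the claim that bidding above $p_{\mathcal{R}_i}$ forfeits the win is established in the paper by an exchange/averaging argument --- the minimum-CF member of $\mathcal{R}_i$ has CF at most $\sum_{\ell\in\mathcal{R}_i} b_\ell\big/\sum_{\ell\in\mathcal{R}_i}|\Gamma_\ell\cap\mathcal{T}|$, hence at most the deviator's CF --- together with a parallel case analysis. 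These are elaborations rather than gaps.

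The genuine gap is your handling of the two-phase interaction, and that step as written fails. You claim that the higher matching probability obtained by underbidding ``cannot create a profitable deviation since each realized winning pair already yields nonnegative, bid-independent surplus.'' The inference runs exactly the other way: because the per-pair surplus is nonnegative and does not depend on the declared bid, inflating the probability of being matched --- which underbidding does under both score functions \eqref{Normalize LIN score functuon} and \eqref{Normalize LN score functuon}, since both are decreasing in the bid --- can only increase expected utility. For instance, if conditional on being matched worker $i$ always wins with $p_{\mathcal{R}_i}>c_i$, then the expected utility is $(p_{\mathcal{R}_i}-c_i)\cdot\Pr(\text{matched}\mid b_i)$, which is strictly larger for $b_i<c_i$ than for $b_i=c_i$; truthful bidding then does not maximize expected utility once Alg.~\ref{alg1} is taken into account. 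So conditional (per-$\mathcal{P}$) truthfulness does not lift to unconditional truthfulness via your observation, and no rewording of that sentence will make it do so. Note that the paper's own proof has the same scope restriction --- it analyzes only Algs.~\ref{alg2}--\ref{alg3} with $\mathcal{P}$ fixed and never discusses the bid-dependence of the matching distribution --- so your instinct to raise the issue was sound; but the honest resolutions are to state the theorem conditionally on the realized matching (as the paper implicitly does) or to decouple the matching probabilities from the bids, not the argument you give.
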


The proof is given in Appendix A.

\begin{lemma}\label{X-Bir}
HERALD* is individually rational.
\end{lemma}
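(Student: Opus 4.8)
The plan is to verify Definition~\ref{IR} directly from the utility expression in~\eqref{utility}, splitting on whether worker $i$ wins. If worker $i$ is a loser, then by construction the payment stays at its initialized value $p_i = 0$ and $u_i = 0 \geq 0$, so the claim is immediate. The entire substance of the lemma therefore lies in the case where worker $i$ participates in at least one winning matching pair, and here the argument rests on the structure of the payment rule in Line 10 of Alg.~\ref{alg3}.

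First I would confirm that the \emph{replaced set} $\mathcal{R}_i$ produced in Lines 5--8 is well-defined for every winning pair $(\Gamma_j, b_i)$. This uses the standing assumption from Section~\ref{subsecsys} that for each worker with task subset $\Gamma_i$ there exist other workers whose subsets satisfy $\Gamma_i \subseteq \cup_j \Gamma_j$, which guarantees that the covering set $\mathcal{W}_i$ can collectively cover the copy set $\mathcal{T}_j = \Gamma_j$; hence the while loop terminates with a valid $\mathcal{R}_i$ and the quantity $p_{\mathcal{R}_i} = \sum_{\ell \in \mathcal{R}_i} b_\ell$ is meaningful. This step is mostly bookkeeping but is a prerequisite for the payment term to make sense.

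Next comes the key inequality. For each winning pair the payment increment is $\max\{b_i, p_{\mathcal{R}_i}\} \geq b_i$, simply because the maximum of $b_i$ and any other quantity dominates $b_i$. Invoking truthfulness (Theorem~\ref{X-Btruth}), each worker bids its true cost, so $b_i = c_i$. Since a winning worker contributes at least one such term to the cumulative payment $p_i = \sum \max\{b_i, p_{\mathcal{R}_i}\}$, we obtain $p_i \geq b_i = c_i$, and therefore $u_i = p_i - c_i \geq 0$, which is exactly the required condition.

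The single-pair inequality is essentially immediate once the $\max$ in Line 10 is in place, so I do not expect it to be the obstacle. The point requiring the most care is correctly chaining the two ingredients: one must invoke truthfulness to justify substituting $b_i = c_i$ (rather than comparing the payment to the reported bid), and one must ensure the well-definedness of $\mathcal{R}_i$ so that the payment is never smaller than the bid through a degenerate covering step. A secondary subtlety is the accounting when a worker wins several pairs with the same bid but distinct subsets; since each term in the cumulative sum already dominates $b_i = c_i$ and there is at least one such term, individual rationality is preserved, but stating the cost model for multiply-matched workers precisely is the only place where the argument is not purely mechanical.
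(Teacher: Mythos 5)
Your proposal is correct and follows essentially the same route as the paper: invoke Theorem~\ref{X-Btruth} to substitute $b_i = c_i$, then observe that each winning pair's payment increment $\max\{b_{i},p_{\mathcal{R}_{i}}\}$ dominates $b_i$, giving $p_i \geq c_i$ and hence $u_i \geq 0$. Your additional points (the trivial loser case, well-definedness of $\mathcal{R}_i$ via the covering assumption, and the accounting for multiply-winning workers) are details the paper leaves implicit, but the core argument is identical.
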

\begin{proof}
Theorem \ref{X-Btruth} demonstrates that each worker bids their actual cost $c_{i}$. The individual rationality of HERALD* is ensured by the fact that the payment made to each winner $i$ is equal to $p_{i}=p_{i}+\max\{b_{i},p_{\mathcal{R}_{i}}\}\geq b_{i}=c_{i}$.
\end{proof}

Besides proving truthfulness and individual rationality, we also demonstrate that HERALD* preserves the intended differential privacy of the bidding outcomes. Initially, we examine the impact of the linear score function on the differential privacy outcomes.

\begin{theorem}\label{linear DP}
For any constant $\epsilon > 0$, HERALD* with the linear score function preserves $\frac{\epsilon l}{2}$-differential privacy, where $\epsilon > 0$ is a constant and $l$ is the number of task subsets.
\end{theorem}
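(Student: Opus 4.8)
The plan is to show that the task--worker matching phase (Alg.~\ref{alg1}), which is the only randomized step in HERALD*, is an instance of the exponential mechanism applied $l$ times (once per task subset $\Gamma_j$), and then to invoke composition. First I would fix two bid profiles $A$ and $B$ that differ in a single worker's bid (so that $\|A-B\|_1 \le 1$ in the sense of Definition~\ref{DP}), and analyze a single round of matching. For one task subset $\Gamma_j$, the probability that worker $i$ is selected is $Pr(b_i) = \frac{\exp(\epsilon u(b_i)/2\Delta u)}{\sum_{j\in\mathcal{W}}\exp(\epsilon u(b_j)/2\Delta u)}$ with $u=f_{lin}=-x$ and $\Delta u = (b_{max}-b_{min})/b_{max}$, which is exactly the exponential mechanism $\mathcal{M}_E$ of Definition~\ref{EP}. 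By Theorem~\ref{EPDP}, each such single round preserves $(\epsilon,0)$-differential privacy.

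The key step is then to bound the ratio of output probabilities for the full matching across all $l$ rounds. Since the $l$ task subsets are matched independently using the same per-round mechanism, I would write the probability of any complete matching outcome as a product over the $l$ rounds and bound the ratio $Pr[M(A)\in\mathcal{O}]/Pr[M(B)\in\mathcal{O}]$ by the product of the $l$ per-round ratios, each bounded by $\exp(\epsilon)$ through the standard exponential-mechanism argument. The one subtlety specific to the linear function is the normalization in \eqref{Normalize LIN score functuon}: because $b_{min}=1$ and the score is normalized so $\Delta u$ effectively becomes $b_{max}-b_{min}$, I would verify directly that the single-round privacy guarantee is $\frac{\epsilon}{2}$ rather than $\epsilon$ (the factor $2$ coming from the $2\Delta u$ in the exponent, interpreted as a sensitivity-$\tfrac12$ calibration for the normalized linear score), so that the numerator and denominator exponentials each contribute a factor of at most $\exp(\epsilon/2)$ when the differing bid is swapped.

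By sequential composition of differentially private mechanisms, running the per-round mechanism $l$ times degrades the privacy parameter additively, yielding $l\cdot\frac{\epsilon}{2} = \frac{\epsilon l}{2}$-differential privacy overall, which is the claimed bound. Finally I would note that the winning-selection and payment-determination phases (Alg.~\ref{alg2} and Alg.~\ref{alg3}) are deterministic post-processing of the matching set $\mathcal{P}$, so by the post-processing invariance of differential privacy they introduce no additional privacy loss.

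The main obstacle I anticipate is pinning down precisely why the per-round guarantee is $\frac{\epsilon}{2}$ and not $\epsilon$: this requires carefully tracking how the normalization in \eqref{Normalize LIN score functuon} rewrites the exponent $-\frac{\epsilon b_i}{2\Delta u b_{max}}$ into $-\frac{\epsilon b_i}{2(b_{max}-b_{min})}$ and confirming that the effective sensitivity halves the exponent relative to a naive application of Theorem~\ref{EPDP}. Equivalently, I must verify that when the single differing bid changes, both the numerator term and the normalizing denominator shift in a way that bounds the log-ratio by $\epsilon/2$ per round, so that the $l$-fold composition gives exactly $\frac{\epsilon l}{2}$ and not $\epsilon l$.
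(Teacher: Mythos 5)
Your overall skeleton (write the matching probability as a product over the $l$ rounds, then compose, then invoke post-processing immunity for Alg.~\ref{alg2} and Alg.~\ref{alg3}) matches the structure of the paper's proof, but the step you yourself flag as the ``main obstacle'' is precisely the heart of the theorem, and the justification you sketch for it is not correct. The factor of two does \emph{not} come from the normalization in \eqref{Normalize LIN score functuon}, nor from reading the $2\Delta u$ in the exponent as a ``sensitivity-$\tfrac12$ calibration'': Theorem~\ref{EPDP} already accounts for that factor and still only yields $\epsilon$-differential privacy per invocation, because in the generic exponential-mechanism argument the numerator contributes $\exp(\epsilon/2)$ \emph{and} the normalizing sum contributes another $\exp(\epsilon/2)$. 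Indeed your own phrasing --- ``the numerator and denominator exponentials each contribute a factor of at most $\exp(\epsilon/2)$'' --- gives $\exp(\epsilon)$ per round, and composing $l$ such rounds gives only $\epsilon l$-differential privacy, not the claimed $\frac{\epsilon l}{2}$.

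What actually rescues the factor of two is the monotonicity of the linear score, exploited through a case analysis on the direction of the single differing bid, and this is exactly what the paper does. Writing $\frac{Pr[M(\overrightarrow{b})=\mathcal{I}]}{Pr[M(\overrightarrow{b}')=\mathcal{I}]}$ as the product over rounds of numerator ratios times the product over rounds of denominator-sum ratios, observe that changing one bid $b_k \to b_k'$ moves worker $k$'s weight $\exp\bigl(-\tfrac{\epsilon b_k}{2(b_{max}-b_{min})}\bigr)$ and the normalizing sum in the \emph{same} direction, so the two products can never both exceed $1$. If $b_k > b_k'$, the first product is $\le 1$ and only the denominator product needs bounding (the paper does this by rewriting each denominator ratio as an expectation and applying $e^x \le 1+(e-1)x$ and $1+x\le e^x$); if $b_k \le b_k'$, the denominator product is $\le 1$ and the numerator product telescopes directly to $\exp\bigl(\tfrac{\epsilon l}{2}\bigr)$. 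If you supply this one-sided per-round bound --- e.g., for $b_k > b_k'$, with $\Delta = b_{max}-b_{min}$ the denominator ratio is at most $1 + \bigl(\exp(-\tfrac{\epsilon b_k'}{2\Delta})-\exp(-\tfrac{\epsilon b_k}{2\Delta})\bigr)/\exp(-\tfrac{\epsilon b_k}{2\Delta}) = \exp\bigl(\tfrac{\epsilon (b_k-b_k')}{2\Delta}\bigr) \le \exp(\epsilon/2)$ --- then your composition and post-processing packaging goes through, and would in fact give a slightly cleaner bound than the paper's own Case~1, which picks up an extra $(e-1)$ factor from the $e^x \le 1+(e-1)x$ step. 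But as written, with the per-round $\epsilon/2$ claim unproven and mis-attributed to the normalization, the proposal has a genuine gap.
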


The proof is given in Appendix B. Next, we analyze the effect of the logarithmic score function on the differential privacy results.

\begin{theorem}\label{logarithmic DP}
For any constant $\epsilon > 0$, HERALD* with the logarithmic score function preserves $\frac{\epsilon l}{2}$-differential privacy, where $\epsilon > 0$ is a constant and $l$ is the number of task subsets.
\end{theorem}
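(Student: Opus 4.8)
The plan is to reduce the claim to two ingredients: a single-round privacy bound of $\frac{\epsilon}{2}$ for the exponential selection under the logarithmic score, and composition across the $l$ matching rounds. Since the mechanism assigns each of the $l$ task subsets independently according to the same distribution $Pr(\cdot)$ given in \eqref{Normalize LN score functuon}, the probability of any full matching outcome factorizes as a product of $l$ per-round probabilities; bounding each factor by $\exp(\frac{\epsilon}{2})$ then yields the overall $\exp(\frac{\epsilon l}{2})$ bound required by Definition~\ref{DP} with $\delta=0$.

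First I would fix two bid profiles $\vec{b}$ and $\vec{b}'$ that are neighbouring in the sense of Definition~\ref{DP}, i.e.\ they agree on every coordinate except the bid of a single worker, whose value changes from $b_i$ to $b_i'$, with both lying in $[b_{min},b_{max}]=[1,b_{max}]$. Writing the logarithmic score as $u(b)=\ln\frac{b_{max}}{b}$, I would compute the per-coordinate score gap $|u(b_i)-u(b_i')|=|\ln\frac{b_i'}{b_i}|\le \ln b_{max}=\Delta u$, so that the normalized exponent in \eqref{Normalize LN score functuon} changes by at most $\frac{\epsilon|u(b_i)-u(b_i')|}{2\Delta u}\le\frac{\epsilon}{2}$. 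This is the logarithmic analogue of the sensitivity computation underlying Theorem~\ref{linear DP}.

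The crux is then the single-round bound. Setting $a=\exp(\frac{\epsilon u(b_i)}{2\Delta u})$, $a'=\exp(\frac{\epsilon u(b_i')}{2\Delta u})$, and $Z=\sum_{j\neq i}\exp(\frac{\epsilon u(b_j)}{2\Delta u})$ for the unchanged contribution of the other workers, I would bound the ratio $Pr[\,\cdot\mid\vec b\,]/Pr[\,\cdot\mid\vec b'\,]$ of selecting any fixed worker $r$ by a case split. When $r=i$ the ratio equals $\frac{a}{a'}\cdot\frac{a'+Z}{a+Z}$, which lies between $1$ and $\frac{a}{a'}\le\exp(\frac{\epsilon}{2})$; when $r\neq i$ the ratio equals $\frac{a'+Z}{a+Z}$, which is likewise at most $\exp(\frac{\epsilon}{2})$ because $a'\le\exp(\frac{\epsilon}{2})a$. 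The structural observation making $\frac{\epsilon}{2}$ (rather than $\epsilon$) suffice per round is that changing one worker's bid perturbs only the \emph{single} term of that worker in both the numerator and the normalizing sum, so only one factor of $\exp(\frac{\epsilon}{2})$ arises, unlike the generic exponential-mechanism argument behind Theorem~\ref{EPDP} where both the numerator and the normalizer shift.

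Finally, because the $l$ rounds are independent and the one differing bid affects each round identically, I would multiply the $l$ per-round bounds (equivalently, invoke sequential composition) to obtain $Pr[M(\vec b)\in\mathcal{O}]\le\exp(\frac{\epsilon l}{2})\,Pr[M(\vec b')\in\mathcal{O}]$ for every output event $\mathcal{O}\in Range(M)$, which is exactly $\frac{\epsilon l}{2}$-differential privacy. I expect the main obstacle to be the single-round case analysis: verifying that the $\frac{\epsilon}{2}$ bound holds uniformly over \emph{all} outputs $r$ (both $r=i$ and $r\neq i$) and in both directions of the ratio, so that the sharper constant is genuinely justified rather than the looser $\epsilon$ per round.
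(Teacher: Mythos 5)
Your proof is correct, and it takes a genuinely different route from the paper's. The paper never isolates a single round: it writes the ratio $\frac{Pr[M(\overrightarrow{b})=\mathcal{I}]}{Pr[M(\overrightarrow{b}')=\mathcal{I}]}$ for an entire length-$l$ outcome, splits it into the product of the $l$ numerator ratios times the product of the $l$ normalizer ratios, and then case-splits on the \emph{direction} of the single bid change: when $b_k\le b_k'$ the normalizer product is at most $1$ and the numerator product telescopes to $\exp(\frac{\epsilon l}{2})$; when $b_k> b_k'$ the numerator product is at most $1$ and the normalizer product is bounded via an expectation rewriting combined with $e^x\le 1+(e-1)x$ and $1+x\le e^x$. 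You instead prove a per-round guarantee by case-splitting on the \emph{output} ($r=i$ versus $r\neq i$), exploiting the fact that a neighboring profile perturbs exactly one term of the numerator and one term of the normalizing sum, and then compose over the $l$ independent draws. Your route buys two things: it is more elementary (only ratio manipulations, no expectation inequalities), and it is tighter --- you obtain $\exp(\frac{\epsilon}{2})$ per round in every case, hence exactly the claimed $\frac{\epsilon l}{2}$-differential privacy, whereas the paper's second case only reaches $\exp(\frac{\epsilon(e-1)l}{2})$ with $e-1>1$ (and implicitly needs the perturbed exponent to lie in $[0,1]$ for $e^x\le 1+(e-1)x$ to apply), so strictly speaking your argument closes a constant-factor gap that the published proof leaves open. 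What the paper's template buys is uniformity and robustness: it is applied verbatim to the linear score in Theorem~\ref{linear DP}, and its expectation-based bound would survive even if one input change shifted several workers' scores, whereas your sharper $\frac{\epsilon}{2}$ constant genuinely relies on only one score term moving --- which does hold in this mechanism. Both proofs rest on the same structural premise, namely that the $l$ assignments are independent draws from the distribution in \eqref{Normalize LN score functuon}, so you assume nothing beyond what the paper itself does.
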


The proof is given in Appendix C. In addition to its truthfulness, individual rationality, and differential privacy, HERALD* exhibits a low level of computational complexity, as can be observed.

\begin{proposition}\label{complexity-A}
The computational complexity of the HERALD* is $\mathcal{O}(lm + l^2 n^2 + ln^3)$.
\end{proposition}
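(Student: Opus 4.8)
The plan is to bound the running time of the three phases of HERALD*---task--worker matching (Alg.~\ref{alg1}), winning selection (Alg.~\ref{alg2}), and payment determination (Alg.~\ref{alg3})---separately, and then add them, since the phases are executed sequentially. I would attribute the term $lm$ to the first phase, the term $ln^3$ to the second, and the term $l^2n^2$ to the third, so that the sum reproduces the claimed $\mathcal{O}(lm+l^2n^2+ln^3)$. Throughout I would use the standing bounds $l<mn$, $|\mathcal{T}|\le n$, and $|\mathcal{P}|=l$.

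First I would bound the matching phase. Computing the normalizer $S$ in Line~2 costs $\mathcal{O}(m)$, and the loop of Lines~3--5 evaluates $Pr(b_i)$ for each of the $m$ workers in $\mathcal{O}(m)$ total. The loop of Lines~6--9 then draws one worker from the distribution over the $m$ workers for each of the $l$ task subsets, costing $\mathcal{O}(m)$ per subset and hence $\mathcal{O}(lm)$ overall, which dominates and yields the $lm$ term.

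Next I would treat the winning-selection phase. Since both Type~I and Type~II selections retire a subset that meets the remaining task set, the WHILE loop removes at least one uncollected task per iteration and therefore runs $\mathcal{O}(n)$ times; each iteration recomputes the cost-effectiveness of the $\mathcal{O}(l)$ surviving matching pairs, each needing an intersection $|\Gamma_j\cap\mathcal{T}|$ of cost $\mathcal{O}(n)$, so the loop itself is $\mathcal{O}(ln^2)$. The delicate step is the threshold $T=64\,\mathbb{E}[C_{\mathcal{OPT}}(\mathcal{A},W)]$ of Line~2: I would argue that, under the assumed uniform arrival distribution and consistently with the competitive-ratio analysis of Theorem~\ref{expected_ratio}, the expected optimal cover cost is evaluated by a greedy cover whose single run costs $\mathcal{O}(ln^2)$, aggregated over the $\mathcal{O}(n)$ task configurations induced by the distribution, for a total of $\mathcal{O}(ln^3)$ that absorbs the loop cost and gives the $ln^3$ term.

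Finally I would bound the payment phase, whose outer FOR loop ranges over the winning pairs; since $|\mathcal{S}|=\mathcal{O}(l)$, there are $\mathcal{O}(l)$ of them. For each winning pair, building the covering set $\mathcal{W}_i$ costs $\mathcal{O}(ln)$, and the inner WHILE loop of Lines~5--8 greedily covers the copy set $\mathcal{T}_j$ in $\mathcal{O}(n)$ iterations, each scanning the $\mathcal{O}(l)$ candidate pairs and computing an $\mathcal{O}(n)$ intersection to locate the minimum-CF worker, giving $\mathcal{O}(ln^2)$ per winning pair and $\mathcal{O}(l^2n^2)$ in total. Summing the three phases yields $\mathcal{O}(lm+ln^3+l^2n^2)$. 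The step I expect to be the main obstacle is the threshold: because $C_{\mathcal{OPT}}(\mathcal{A},W)$ is a minimum cover-type optimum, I must justify that $\mathbb{E}[C_{\mathcal{OPT}}(\mathcal{A},W)]$ is computed (or approximated) within polynomial time and specifically in $\mathcal{O}(ln^3)$, whereas the remaining bookkeeping---bounding the WHILE-loop counts by $n$ and the number of winning pairs by $l$---is routine by comparison.
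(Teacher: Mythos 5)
Your overall strategy is the same as the paper's: bound the three phases of Alg.~\ref{alg1}--Alg.~\ref{alg3} separately, sum them, and invoke $l<mn$. Your matching-phase bound ($\mathcal{O}(lm)$) coincides with the paper's, and your payment-phase accounting ($\mathcal{O}(l)$ winning pairs at $\mathcal{O}(ln^2)$ each, hence $\mathcal{O}(l^2n^2)$) is a legitimate, in fact slightly tighter, counterpart of the paper's bound for that phase. The bookkeeping difference is in where the dominant terms come from: in the paper, \emph{both} $l^2n^2$ and $ln^3$ are charged to the payment determination phase (covering-set construction $\mathcal{O}(ln^2)$ plus replaced-set formation $\mathcal{O}(nl+n^3)$ per winning pair, times $l$ pairs), the winning selection phase is charged only $\mathcal{O}(nl^2+n^3)$, and the threshold computation in Line 2 of Alg.~\ref{alg2} is not counted at all.

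That last point is where your proposal has a genuine gap. You attribute the $ln^3$ term to computing $T=64\,\mathbb{E}[C_{\mathcal{OPT}}(\mathcal{A},W)]$, arguing that a greedy cover of cost $\mathcal{O}(ln^2)$ aggregated over ``$\mathcal{O}(n)$ task configurations'' suffices. Neither half of this holds up: the number of possible arrival multisets $\mathcal{A}$ of $k$ tasks drawn from $n$ is $\binom{n+k-1}{k}$, which is $\Theta(n^k)$ rather than $\mathcal{O}(n)$, and $C_{\mathcal{OPT}}(\mathcal{A},W)$ is a minimum-weight cover, so a greedy run yields only a $\Theta(\ln n)$-approximation, not the optimum whose expectation defines $T$; substituting the approximation would also silently alter the threshold on which the competitive-ratio analysis of Theorem~\ref{expected_ratio} rests. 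So your $ln^3$ attribution is not justified as written, and if Line 2 genuinely had to be evaluated exactly, neither your bound nor the paper's would follow. The repair is the one the paper makes implicitly: treat $T$ as given (an oracle quantity outside the operation count) and exclude Line 2. Your remaining loop analysis then bounds winning selection by $\mathcal{O}(ln^2)$, and the total $\mathcal{O}(lm+l^2n^2+ln^2)$ is still within the claimed $\mathcal{O}(lm+l^2n^2+ln^3)$, so Proposition~\ref{complexity-A} survives without your problematic step.
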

\begin{proof}
In order to determine the computational complexity of HERALD*, we must analyze the task-worker matching phase (as outlined in Alg.~\ref{alg1}), the winning selection phase (in Alg.~\ref{alg2}), and the payment determination phase (as described in Alg.~\ref{alg3}), separately.

\begin{itemize}
  \item [1)] \emph{task-worker Matching Phase:} In Alg.~\ref{alg1}, the first loop (Lines 3-5) calculates the probability distribution over $m$ iterations with a key operation in Line 4 having $\mathcal{O}(m)$ complexity. The second loop (Lines 6-9) completes task-worker matching in $l$ iterations, where each matching in Line 7 also has $\mathcal{O}(m)$ complexity. Thus, the total computational complexity of this phase is $\mathcal{O}(m+lm)$, which simplifies to $\mathcal{O}(lm)$, accounting for both loops' combined computational requirements.

  
  \item [2)] \emph{Winning Selection Phase:} In Alg.~\ref{alg2}, the main loop (Lines 3--15) ends in up to $n$ rounds. Each round includes type I (Lines 7--8) and type II (Lines 9--10) selections, executed $l$ times, with complexity $\mathcal{O}(l)$ for minimum value identification. Lines 12-14, handling match search and modifications, contribute $\mathcal{O}(n^2)$ complexity. Overall, this phase's complexity is $\mathcal{O}(n l^2 + n^3)$, considering the detailed computations required.
  

  \item [3)] \emph{Payment Determination Phase:} In Alg.~\ref{alg3}, the main loop (Lines 1--9) completes in at most $l$ iterations. Each iteration includes constructing a covering set with complexity $\mathcal{O}(ln^2)$ (Line 3) and forming a replaced set over $n$ iterations (Lines 5--9), contributing $\mathcal{O}(nl + n^3)$ in complexity. Therefore, the overall computational complexity of Alg.~\ref{alg3} is $\mathcal{O}(n^2l^2 + ln^3)$, reflecting the combined computational requirements of constructing covering and replace sets.
  

\end{itemize}
By merging the task-worker matching phase, winning selection phase, and payment determination phase, HERALD* has a computational complexity of $\mathcal{O}(lm + l^2 n^2 + ln^3)$. We have specified in Section \ref{subsecsys} that $l < mn$. Some empirical running times for the three algorithms are documented in Appendix F for further reference.
\end{proof}

In the following parts, we will demonstrate the competitive ratio of expected social cost attained by HERALD* when the tasks in $\mathcal{T}$ arrive according to a uniform distribution. Specifically, we let $T=64\mathbb{E}[C_{\mathcal{OPT}}(\mathcal{A},W)]$, where $\mathcal{A}\subseteq\mathcal{T}$ is a subset of $k$ tasks that could potentially arrive concurrently from the task set $\mathcal{T}$. In order to calculate the competitive ratio of the expected social cost of HERALD*, we analyze the costs of type I selection and type II selection independently.

\begin{lemma}\label{cost_type1}
HERALD* achieves a competitive ratio of $\mathcal{O}(\ln n)$ on the expected social cost through type I selection when the tasks in the task set $\mathcal{T}$ are distributed uniformly.
\end{lemma}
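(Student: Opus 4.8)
The plan is to recognize \emph{Type I Selection} as the classical greedy algorithm for weighted set cover over the ground collection $\mathcal{P}$, in which the ``density'' of a matching pair $(\Gamma_j,b_i)$ is precisely its cost-effectiveness $b_i/|\Gamma_j\cap\mathcal{T}|$, and then to import the standard harmonic-number charging analysis of that algorithm. By Theorem~\ref{X-Btruth}, truthfulness gives $b_i=c_i$, so the social cost of the pairs picked by Type I steps equals the sum, over the newly covered tasks, of the densities at which they were covered.

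First I would install a charging scheme: when a Type I step picks the minimum-density pair $(\Gamma_j,b_i)$, I assign to each task $\tau$ that it newly covers a price $p(\tau)=b_i/|\Gamma_j\cap\mathcal{T}|$ computed at that iteration, so that the total Type I cost equals $\sum_\tau p(\tau)$ over the Type-I-covered tasks. Next I would bound these prices against an optimal cover: fix a minimum-cost cover realizing $C_{\mathcal{OPT}}(\mathcal{A},W)$ and one of its sets $(\Gamma^{*},c^{*})$; because every Type I step selects the globally minimum density, whenever it covers a task of $\Gamma^{*}$ the price is at most $c^{*}/|\Gamma^{*}\cap\mathcal{T}|$, i.e.\ at most $c^{*}$ divided by the number of still-uncovered tasks of $\Gamma^{*}$. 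Ordering the tasks of $\Gamma^{*}$ by coverage time and summing the shrinking denominators telescopes to $c^{*}H_{|\Gamma^{*}|}\le c^{*}H_n$ with $H_n=\sum_{t=1}^{n}1/t=\mathcal{O}(\ln n)$, and summing over all sets of the optimal cover bounds the total Type I price by $H_n\,C_{\mathcal{OPT}}(\mathcal{A},W)$. A quick observation closes this step: the threshold $T$ only decides whether an iteration is Type I or Type II and never alters the fact that a Type I step takes the minimum density, so the charging inequality holds at each Type I step irrespective of the gating.

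Finally I would pass to the expectations in Definition~\ref{KEP}. Since the winning loop covers all of $\mathcal{T}$ while the ratio is measured against the optimum for a random arriving subset $\mathcal{A}$, I would use the uniform arrival distribution to express $\mathbb{E}_{\mathcal{A}}\big[C(\mathcal{S}(\mathcal{A},W))\big]$ as a weighted sum over the selected pairs in which each pair's full cost is scaled by its probability $\Pr[\Gamma_j\cap\mathcal{A}\neq\emptyset]$ of touching $\mathcal{A}$, and then relate this weighted sum to $\mathbb{E}_{\mathcal{A}}[C_{\mathcal{OPT}}(\mathcal{A},W)]$ through the per-task price bound. Because the harmonic factor $H_n$ is deterministic, it survives both $\mathbb{E}_{\mathcal{A}}[\cdot]$ and the outer $\mathbb{E}_{\mathcal{P}\in\digamma}[\cdot]$, delivering the claimed ratio-of-expectations bound $\mathcal{O}(\ln n)$.

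The step I expect to be the main obstacle is exactly this last reconciliation. Greedy is run to cover the whole set $\mathcal{T}$, so a pair that is very cost-effective for $\mathcal{T}$ may intersect $\mathcal{A}$ in only one task; its entire cost is then billed to $\mathcal{S}(\mathcal{A},W)$ even though it covers almost none of $\mathcal{A}$, which breaks any naive pointwise comparison with $C_{\mathcal{OPT}}(\mathcal{A},W)$. The uniform distribution must be used to argue that, in expectation over $\mathcal{A}$, the number of tasks of each selected pair that land in $\mathcal{A}$ is proportional to the pair's total coverage, so that the ``wasted'' coverage does not inflate the harmonic factor; making this averaging rigorous, and doing it for the ratio $\mathbb{E}[C]/\mathbb{E}[C_{\mathcal{OPT}}]$ rather than $\mathbb{E}[C/C_{\mathcal{OPT}}]$, is the delicate part, whereas the set-cover charging that produces the $\ln n$ is routine.
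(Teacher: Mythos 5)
There is a genuine gap, and it sits exactly where you predicted it would: your plan charges the Type I selections against an optimal cover realizing $C_{\mathcal{OPT}}(\mathcal{A},W)$, but the winning-selection loop covers all of $\mathcal{T}$ whereas that optimum only covers the random subset $\mathcal{A}$. Tasks in $\mathcal{T}\setminus\mathcal{A}$ that Type I steps pay for have no set of the optimal cover to be charged to, so the harmonic telescoping you describe cannot even be set up; and no pointwise repair is possible, since for $\mathcal{A}$ a single task the realized ratio $C(\mathcal{S}(\mathcal{A},W))/C_{\mathcal{OPT}}(\mathcal{A},W)$ can be arbitrarily large. The uniform-distribution averaging you hope will rescue this is never carried out in your proposal, and it is not the route the paper takes.

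The paper's proof needs neither an optimal cover nor any proportionality argument, because the ingredient you explicitly discarded --- the gate $T=64\,\mathbb{E}[C_{\mathcal{OPT}}(\mathcal{A},W)]$ --- is the entire proof. By the very definition of a Type I step, each selected pair satisfies $c_{i}\leq|\Gamma_{i}\cap\widetilde{\mathcal{T}}_{i}|\,T/|\widetilde{\mathcal{T}}_{i}|$ at the moment of selection, where $\widetilde{\mathcal{T}}_{i}$ denotes the tasks still uncovered just before worker $i$ is chosen. Summing over the Type I winners $\mathcal{S}_{I}$ and using the disjoint-interval telescoping bound $\sum_{i}|\Gamma_{i}\cap\widetilde{\mathcal{T}}_{i}|/|\widetilde{\mathcal{T}}_{i}|\leq\sum_{t=1}^{n}1/t$ gives
\begin{equation*}
\sum_{i\in\mathcal{S}_{I}}c_{i}\;\leq\;64\,\mathbb{E}[C_{\mathcal{OPT}}(\mathcal{A},W)]\sum_{t=1}^{n}\frac{1}{t}\;=\;\mathbb{E}[C_{\mathcal{OPT}}(\mathcal{A},W)]\cdot\mathcal{O}(\ln n),
\end{equation*}
a bound that holds for every realization and whose right-hand side already contains the denominator of the competitive ratio as a fixed constant. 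Dividing by $\mathbb{E}[C_{\mathcal{OPT}}(\mathcal{A},W)]$ and taking expectations over $\mathcal{A}$ and over the matching sets preserves the bound, which is the whole lemma; the arrival distribution enters only through the constant baked into $T$. In short, your remark that ``the threshold $T$ only decides whether an iteration is Type I or Type II'' and is irrelevant to the charging is precisely backwards: the threshold is what ties the Type I cost to the expected optimum, while the minimum-density property of greedy, on which your entire charging scheme leans, plays no role at all in the paper's argument.
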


The proof is given in Appendix D. We still need to establish a bound on the expected social cost of workers in the winning set $\mathcal{S}$ chosen by HERALD* via type II selection. To accomplish this, we must introduce the following notations.

We define $\mathcal{S}_{II}=\{1,\ldots,\ell\}$ as the group of workers in the winning set $\mathcal{S}$ selected by HERALD* via type II selection in the specified sequence. We also define $\widetilde{\mathcal{T}}_{i}$ as the set of tasks whose sensory data is not gathered right before worker $i$ is selected. Additionally, $n_{i}=|\widetilde{\mathcal{T}}_{i}|$ represents the number of tasks in $\widetilde{\mathcal{T}}_{i}$, and $k_{i}=n_{i}\frac{k}{n}$ is the expected count of requested tasks arriving from $\widetilde{\mathcal{T}}_{i}$. We use $\mathcal{A}_{i}$ to denote the subset of $\mathcal{A}$ that contains requested tasks belonging only to $\widetilde{\mathcal{T}}_{i}$. Furthermore, we define $\mathcal{S}^{*}(\mathcal{A},W)$ as the winning set with the least social cost for any $\mathcal{A}$ set. Then, let $\mathcal{S}^{\prime}(\mathcal{A}_{i},W)$ be the subset of $\mathcal{S}^{*}(\mathcal{A},W)$ such that for each task $\tau_{j}\in\mathcal{A}_{i}$, the worker in $\mathcal{S}^{\prime}(\mathcal{A}_{i},W)$ has the corresponding task subset containing task $\tau_{j}$ and has the least cost among workers in $\mathcal{S}^{*}(\mathcal{A},W)$.

\begin{lemma}\label{cost_type2}
When the arrivals of tasks in the task set $\mathcal{T}$ follow a uniform distribution, HERALD* achieves a competitive ratio of $\mathcal{O}(\ln ln)$ on the expected social cost through type II selection.
\end{lemma}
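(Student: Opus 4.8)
The plan is to bound the expected social cost of the type II workers $\mathcal{S}_{II}=\{1,\ldots,\ell\}$ by relating each worker's cost to the optimal cost restricted to the tasks still uncovered at the moment that worker is selected. The key structural fact is that type II selection is triggered precisely when \emph{every} remaining matching pair has cost-effectiveness exceeding $\frac{T}{|\mathcal{T}|}$, i.e. $\frac{b_i}{|\Gamma_j\cap\mathcal{T}|} > \frac{T}{|\widetilde{\mathcal{T}}_i|}$ for all pairs. First I would use this threshold inequality together with the definition $T=64\,\mathbb{E}[C_{\mathcal{OPT}}(\mathcal{A},W)]$ to argue that the bid $b_i$ of the selected worker is controlled by the expected optimal cost scaled by the fraction of tasks that worker covers. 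Since type II chooses the worker with the \emph{lowest bid} covering at least one uncovered task, I would compare $b_i$ against the cheapest member of the restricted optimum $\mathcal{S}^{\prime}(\mathcal{A}_i,W)$, which by construction covers the tasks in $\mathcal{A}_i\subseteq\widetilde{\mathcal{T}}_i$.

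Next I would set up the harmonic-sum telescoping that produces the logarithmic factor. Because each iteration removes at least one task from $\mathcal{T}$, the sizes $n_i=|\widetilde{\mathcal{T}}_i|$ form a strictly decreasing sequence, and I would bound the per-step cost contribution by something proportional to $\frac{\mathbb{E}[C_{\mathcal{OPT}}(\mathcal{A},W)]}{n_i}$ times the number of tasks newly covered. Summing over $i\in\mathcal{S}_{II}$ then yields a factor $\sum_i \frac{(\text{tasks covered at step }i)}{n_i}$, which telescopes into a harmonic series $H_n=\mathcal{O}(\ln n)$. The extra factor of $l$ in the claimed $\mathcal{O}(\ln ln)$ ratio I expect to arise because a single worker may appear in $\mathcal{S}_{II}$ through several distinct matching pairs (as noted after Definition~\ref{KEP} and in the payment phase), so there are up to $l$ winning pairs rather than $n$; combining the count of pairs with the harmonic bound gives $\mathcal{O}(\ln(ln))$. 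I would make this precise by invoking $\mathbb{E}[C(\mathcal{S}_{II})]=\sum_{i\in\mathcal{S}_{II}}\mathbb{E}[c_i]$ via linearity and substituting the per-worker bound.

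The final step is to take expectations over both the random matching set $\mathcal{P}\in\digamma$ and the random arriving set $\mathcal{A}\subseteq\mathcal{T}$, using the uniform-distribution assumption to evaluate $k_i=n_i\frac{k}{n}$ as the expected number of arriving tasks drawn from $\widetilde{\mathcal{T}}_i$. Here I would use that under the uniform distribution the expected optimal cost restricted to $\widetilde{\mathcal{T}}_i$ scales like $\frac{n_i}{n}$ relative to the full $\mathbb{E}[C_{\mathcal{OPT}}(\mathcal{A},W)]$, so that the local threshold $\frac{T}{n_i}$ and the global quantity $\mathbb{E}[C_{\mathcal{OPT}}(\mathcal{A},W)]$ can be linked cleanly. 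Dividing $\mathbb{E}[C(\mathcal{S}_{II})]$ by $\mathbb{E}[C_{\mathcal{OPT}}(\mathcal{A},W)]$ and maximizing over $k$ as in Definition~\ref{KEP} should deliver the $\mathcal{O}(\ln ln)$ bound.

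The main obstacle I anticipate is the double expectation coupling the selection order to the random arrivals: the sets $\widetilde{\mathcal{T}}_i$ and the quantities $n_i$ are themselves random and depend on which matching $\mathcal{P}$ was drawn and on the greedy selection trajectory, so the telescoping harmonic argument must be carried out \emph{inside} the expectation without assuming the decrement sequence is deterministic. Establishing that $\mathbb{E}[C_{\mathcal{OPT}}(\mathcal{A}_i,W)] \le \frac{n_i}{n}\,\mathbb{E}[C_{\mathcal{OPT}}(\mathcal{A},W)]$ (or a comparable subadditivity/monotonicity relation for the restricted optimum under uniform arrivals) is the technical crux, since it is what converts the local cost-effectiveness threshold into a statement about the global optimum and ultimately justifies pulling the harmonic factor out cleanly.
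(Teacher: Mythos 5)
Your proposal has a genuine gap at its core: the charging mechanism you describe---bounding the selected worker's bid via the threshold and telescoping a harmonic sum over the shrinking sets $\widetilde{\mathcal{T}}_i$---is essentially the paper's proof of Lemma~\ref{cost_type1} for \emph{type~I} selection, and it cannot be transplanted to type~II. Type~II selection is triggered precisely when every remaining matching pair has cost-effectiveness \emph{above} the threshold, i.e. $\frac{b_i}{|\Gamma_j\cap\mathcal{T}|} > \frac{T}{|\mathcal{T}|}$. This inequality gives a \emph{lower} bound on the selected bid, not the upper bound your first step requires, so "the bid $b_i$ of the selected worker is controlled by the expected optimal cost scaled by the fraction of tasks that worker covers" is exactly what fails in the type~II regime; the per-task charging and the ensuing harmonic series never get off the ground. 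Your attribution of the $\ln l$ factor to workers appearing in multiple winning pairs is also incorrect, and your proposed crux inequality $\mathbb{E}[C_{\mathcal{OPT}}(\mathcal{A}_i,W)]\le \frac{n_i}{n}\,\mathbb{E}[C_{\mathcal{OPT}}(\mathcal{A},W)]$ is neither needed nor used in the paper: $\mathcal{S}^{\prime}(\mathcal{A}_i,W)$ is defined as a subset of the global optimum $\mathcal{S}^{*}(\mathcal{A},W)$, so its cost is bounded by $C_{\mathcal{OPT}}$ by construction.

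The paper's argument charges per \emph{worker}, not per task, and is probabilistic in a way your outline misses. Since type~II picks the lowest bid among pairs covering an uncovered task, each type~II worker's cost is at most that of the corresponding worker in $\mathcal{S}^{*}(\mathcal{A},W)$; the work then lies in bounding how many type~II workers contribute \emph{in expectation over the random arrival set} $\mathcal{A}$. Concretely, the expected cost is written as $\sum_i c_i\,\text{Pr}[\mathcal{A}\cap(\Gamma_i\cap\widetilde{\mathcal{T}}_i)\neq\emptyset]$, the sequence is split at the index $j$ where the expected number of remaining requested tasks $k_i=n_i\frac{k}{n}$ falls below $8\ln 2n$, the tail $i>j$ is bounded by $8\ln 2n\cdot\mathbb{E}[C_{\mathcal{OPT}}(\mathcal{A},W)]$ (at most $8\ln 2n$ tasks remain in expectation, each done by a worker cheaper than its optimal counterpart), and the head $i\le j$ is handled by summation by parts (Abel summation with $c_0=0$) combined with two lemmas imported from the set-covering literature (Lemmas 3.4 and 3.5 of the Grandoni et al. reference), the latter supplying the bound $k_i\le 16\,\mathbb{E}[|\mathcal{S}^{\prime}(\mathcal{A}_i,W)|]\ln l$. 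It is this covering lemma---not pair multiplicity---that produces the $\ln l$ term, which together with the $\ln 2n$ tail gives $\mathcal{O}(\ln l + \ln n)=\mathcal{O}(\ln ln)$; a final averaging over the random matchings $\mathcal{P}\in\digamma$ preserves the bound. Without the arrival-probability weighting, the $k_i$ splitting, and these two external lemmas, your outline has no route to the stated ratio.
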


The proof is given in Appendix E. By combining Lemma \ref{cost_type1} and Lemma \ref{cost_type2}, the following theorem is established.
\begin{theorem}\label{expected_ratio}
HERALD* achieves a competitive ratio of $\mathcal{O}(\ln ln)$ on expected social cost when the arrivals of tasks in the task set $\mathcal{T}$ follow a uniform distribution.
\end{theorem}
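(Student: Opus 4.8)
The plan is to establish Theorem~\ref{expected_ratio} by simply combining the two competitive-ratio bounds already proven for the two mutually exclusive selection rules. Since every worker added to the winning set $\mathcal{S}$ in Alg.~\ref{alg2} is selected either by Type~I Selection (Lines 7--8) or by Type~II Selection (Lines 9--10), the winning set partitions as $\mathcal{S} = \mathcal{S}_{I} \cup \mathcal{S}_{II}$, where $\mathcal{S}_{I}$ and $\mathcal{S}_{II}$ collect the pairs chosen under each rule respectively. Consequently the total social cost decomposes additively as $C(\mathcal{S}(\mathcal{A},W)) = C(\mathcal{S}_{I}) + C(\mathcal{S}_{II})$, and by linearity of expectation the numerator in the competitive-ratio expression of Definition~\ref{KEP} splits as $\mathbb{E}[C(\mathcal{S}(\mathcal{A},W))] = \mathbb{E}[C(\mathcal{S}_{I})] + \mathbb{E}[C(\mathcal{S}_{II})]$.

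Next I would invoke Lemma~\ref{cost_type1}, which bounds the Type~I contribution by $\mathcal{O}(\ln n)\cdot \mathbb{E}[C_{\mathcal{OPT}}(\mathcal{A},W)]$, and Lemma~\ref{cost_type2}, which bounds the Type~II contribution by $\mathcal{O}(\ln ln)\cdot \mathbb{E}[C_{\mathcal{OPT}}(\mathcal{A},W)]$, both under the uniform-arrival assumption that is the hypothesis of the theorem. Adding these two bounds and dividing by the common denominator $\mathbb{E}[C_{\mathcal{OPT}}(\mathcal{A},W)]$ gives an overall competitive ratio of $\mathcal{O}(\ln n) + \mathcal{O}(\ln ln) = \mathcal{O}(\ln ln)$, since $\ln ln = \ln l + \ln n \ge \ln n$ and the larger term dominates asymptotically. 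Taking the maximum over $k$ as required by Definition~\ref{KEP} preserves this bound because each lemma's estimate holds uniformly in $k$.

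The only subtlety worth flagging is the additive decomposition of the optimum. Because the denominator $\mathbb{E}[C_{\mathcal{OPT}}(\mathcal{A},W)]$ is the \emph{same} offline optimum for the full arriving set $\mathcal{A}$ in both lemmas, I would be careful to state that each lemma already expresses its bound as a multiple of this single quantity rather than of a restricted optimum; this is what makes the sum of the two ratios legitimate and avoids any double-counting of the benchmark. The main obstacle is therefore not in this final theorem itself---which is essentially a one-line synthesis---but lies upstream in Lemma~\ref{cost_type2}, whose proof must handle the residual tasks $\widetilde{\mathcal{T}}_{i}$ and the replacement structure $\mathcal{S}^{\prime}(\mathcal{A}_{i},W)$ to produce the extra $\ln l$ factor; granting that lemma, the combination here is immediate.
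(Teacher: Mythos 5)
Your proposal is correct and matches the paper's own argument: the paper establishes Theorem~\ref{expected_ratio} exactly by combining Lemma~\ref{cost_type1} and Lemma~\ref{cost_type2}, splitting the winning set's cost over the two selection types and letting the $\mathcal{O}(\ln ln)$ term dominate $\mathcal{O}(\ln n)$. Your added care about both lemmas being stated against the same benchmark $\mathbb{E}[C_{\mathcal{OPT}}(\mathcal{A},W)]$ is a sound (and implicitly assumed) point in the paper's synthesis.
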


Based on the results from Theorem~\ref{expected_ratio}, we can deduce that HERALD* yields a low expected social cost. Therefore, HERALD* can be applied to many scenarios with uncertain sensing tasks.

\section{Performance Evaluation}
The subsequent section will showcase the benchmark methods utilized for evaluating the performance of HERALD* and expound on the simulation settings employed in the experiment. Furthermore, the outcomes of the simulation will be presented.

\subsection{Baseline Methods}

As very few prior works relate directly to uncertain tasks in the MCS system, we chose to repurpose mechanisms from related optimization problems, such as set covering, to evaluate HERALD*. We compared HERALD* with CONE and COSY in a simulation, all adapted from the set covering problems in \cite{grandoni2013set}. Despite being part of our original incentive mechanism, CONE and COSY's links to optimization problems make them suitable baseline mechanisms.

\emph{COst-effectiveNEss greedy auction (CONE):}
When it comes to uncertain tasks without real-time constraints, the platform is aware that the tasks belonging to set $\mathcal{T}$ are expected to arrive in the future with a probability distribution. To obtain sensory data for these tasks, the platform computes the CF for each matching pair and picks the winning pair, denoted by $(\Gamma_{j}, b_i)$, which has the lowest CF $\frac{b_{i}}{|\Gamma_{j}\cap\mathcal{T}|}$ value among those of matching pairs in each iteration. The platform then proceeds to acquire the sensory data from worker $i$.

\emph{COSt greedY auction (COSY):}
When it comes to uncertain tasks without real-time constraints, the platform acquires sensory data by examining the bids of matching pairs and determining the winning pair, denoted by $(\Gamma_{j}, b_i)$. This pair has the lowest bid value $b_{i}$ among those of winning pairs and corresponds to a task subset $\Gamma_{j}$ that includes at least one task that has not yet been covered in the current iteration. After selecting the winning pair, the platform proceeds to collect the sensory data from worker $i$.

The payment determination process in both CONE and COSY is identical to that of HERALD*. It is evident that both CONE and COSY are truthful and individually rational.

\subsection{Simulation Settings}
Table \ref{table_1} outlines key metrics for various scenarios, including the cost $c_{i}$ that worker $i$ performs the assigned subset $\Gamma_{j}$ of matching tasks, along with the number of tasks $|\Gamma_{j}|$ in each subset, the number of workers $m$ and the number of sensing tasks $n$.


We present an assessment of HERALD*, which highlights how the expected social cost and expected total payment are affected by the number of workers and sensing tasks. Specifically, we conduct two evaluations: the first (setting I) fixes the number of sensing tasks at $n = 120$ and varies the number $m$ of workers from $60$ to $150$ in steps of $5$. The second (setting II) maintains the number of workers at $m = 80$ while incrementally varying the number $n$ of sensing tasks from $80$ to $160$ in steps of $5$. In both scenarios, we randomly and independently sample the cost $c_{i}$ of worker $i$ and the number of tasks in task subset $\Gamma_{j}$ from uniform distributions within the intervals $[1,5]$ and $[15,20]$, respectively.


We will now examine how worker costs affect the expected social cost and expected total payment yielded by HERALD*. To investigate the impact of worker costs $c_{i}$, we consider three different intervals: $[1,5]$, $[5,10]$ and $[10,15]$, in setting III. In this setting, we also randomly sample the number $|\Gamma_{j}|$ of tasks in the subset from the interval $[15,20]$. Additionally, we set the number $n$ of sensing tasks to $120$, while varying the number $m$ of workers from $60$ to $150$.

In setting IV, we assess how the number of tasks in subsets influences HERALD*'s social cost and total payment. Specifically, we examine three task intervals in setting IV: $[10,15]$, $[15,20]$, and $[20,25]$. Workers' costs are sampled from $[1,5]$. We fix the sensing tasks at $n = 120$ and vary worker numbers $m$ from $60$ to $150$. Across all settings, we maintain a differential privacy coefficient of $\epsilon = 0.1$.

We also investigate how the privacy parameter $\epsilon$ affects HERALD*'s social cost and total payment. Due to space limitations, please see Appendix G for details.


\begin{table}[ht]
\caption{Simulation Settings for HERALD*.}\label{table_1}
\begin{tabular}{|l|l|l|l|l|}
\hline
Settings &\makecell[l]{individual \\ cost $c_{i}$}& \makecell[l]{number $|\Gamma_{j}|$ of \\ each task subset} & \makecell[l]{number $m$ \\ of workers} & \makecell[l]{number $n$ \\ of sensing \\ tasks}\\
\hline
I & $[1,5]$ & $[15,20]$& $[60,150]$& $120$ \\
\hline
II & $[1,5]$ & $[15,20]$& $80$& $[80,160]$ \\
\hline
III & \makecell[l]{$[1,5],[5,10]$,\\ $[10,15]$} &$[15,20]$& $[60,150]$& $120$ \\
\hline
IV & $[1,5]$ & \makecell[l]{$[10,15],[15,20]$,\\ $[20,25]$}& $[60,150]$& $120$ \\
\hline
\end{tabular}
\vspace{-0.2in}
\end{table}

\begin{figure}[ht]
\centering
\begin{minipage}{0.85\linewidth}
\centering
 \includegraphics[width=0.85\linewidth]{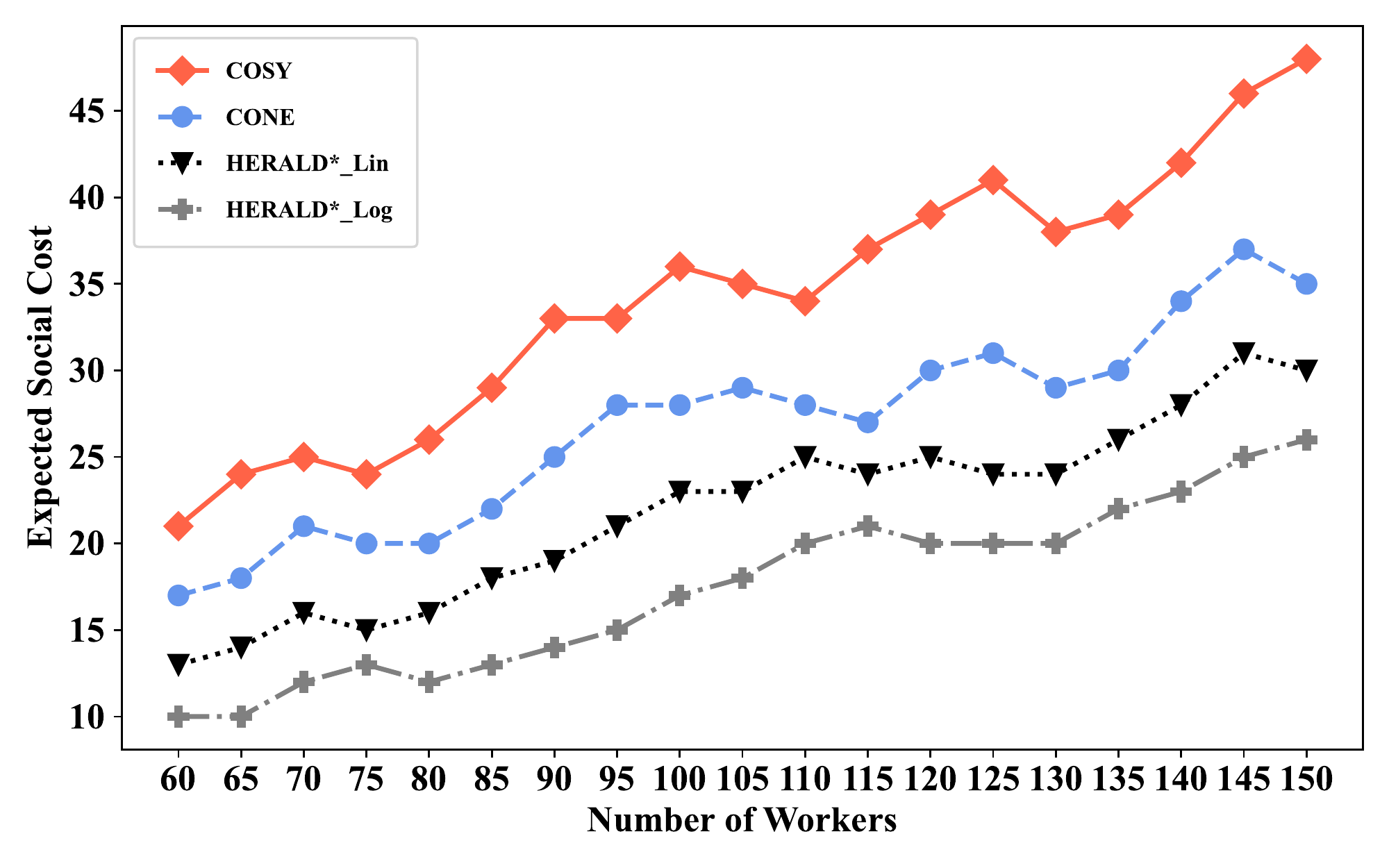}
 \centerline{\footnotesize{\quad (a)}}
 \end{minipage}%
 \qquad
\centering
\begin{minipage}{0.85\linewidth}
\centering
\includegraphics[width=0.85\linewidth]{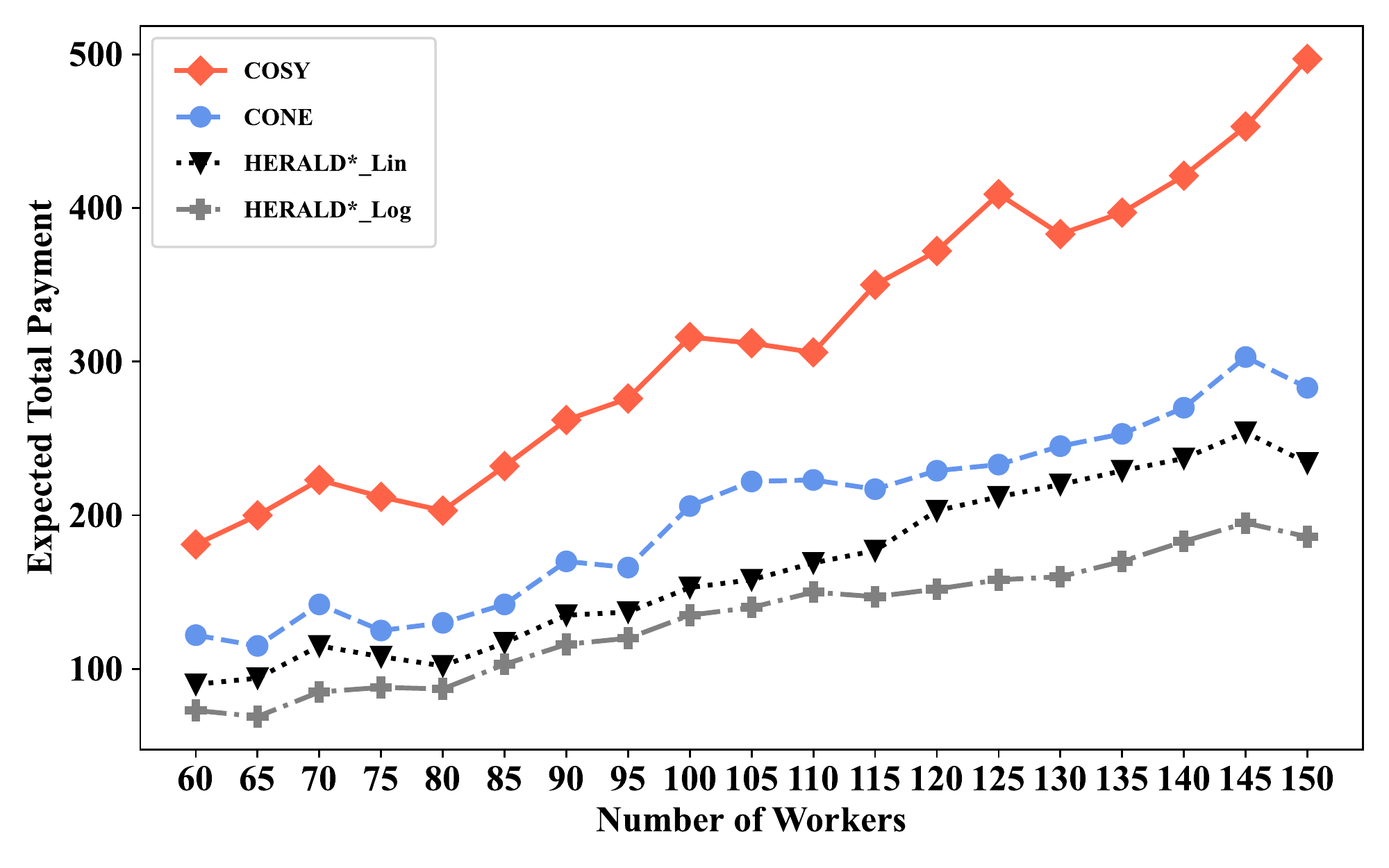}
 \centerline{ \footnotesize{\quad (b)}}
\end{minipage}%
\caption{(a). Expected social cost versus different numbers of workers for uncertain tasks. (b). Expected total payment versus different numbers of workers for uncertain tasks.}\label{Setting1}
\vspace{-0.2in}
\end{figure}

\begin{figure}[ht]
\centering
\begin{minipage}{0.85\linewidth}
\centering
 \includegraphics[width=0.85\linewidth]{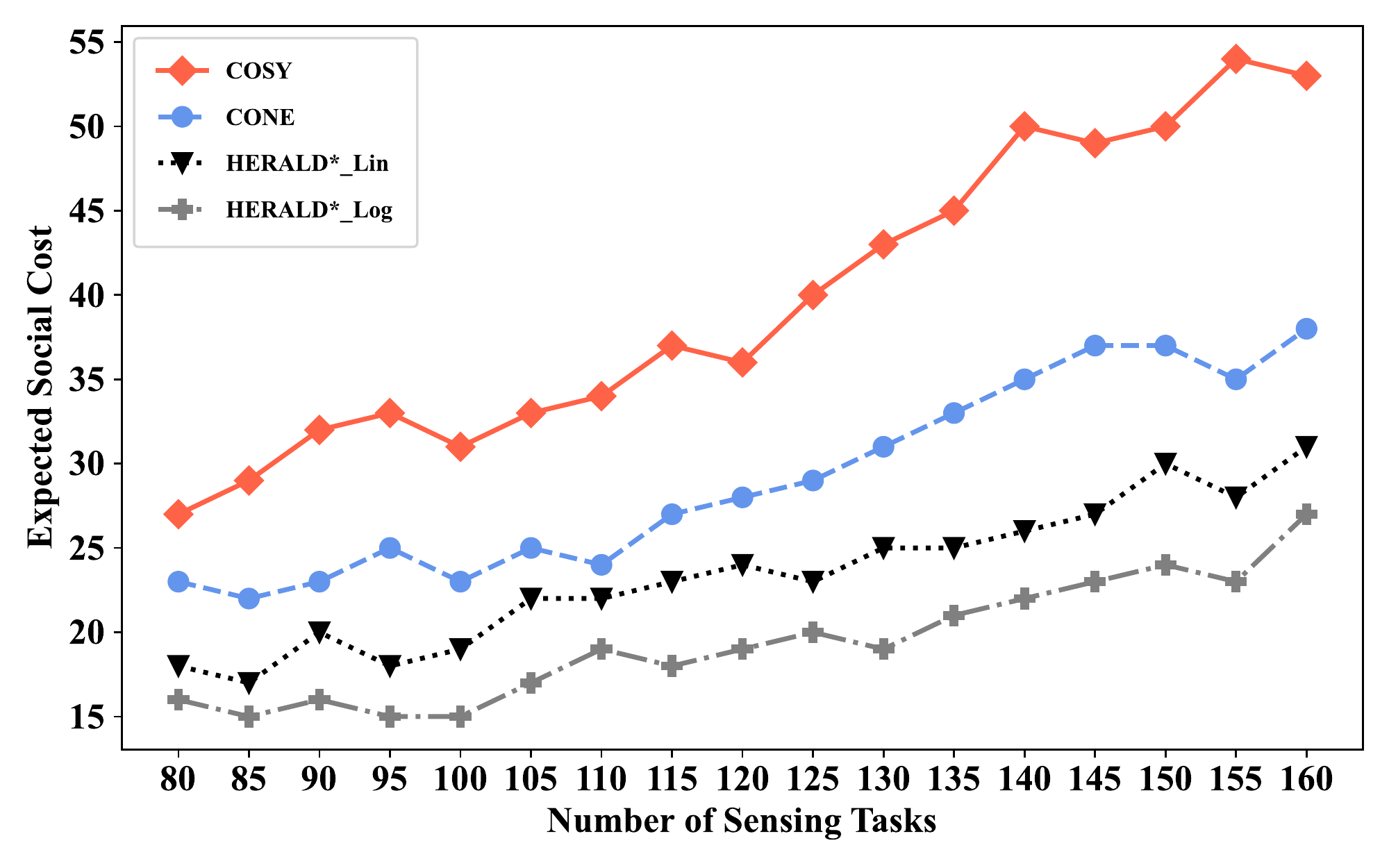}
 \centerline{\footnotesize{\quad (a)}}
 \end{minipage}%
 \qquad
\centering
\begin{minipage}{0.85\linewidth}
\centering
\includegraphics[width=0.85\linewidth]{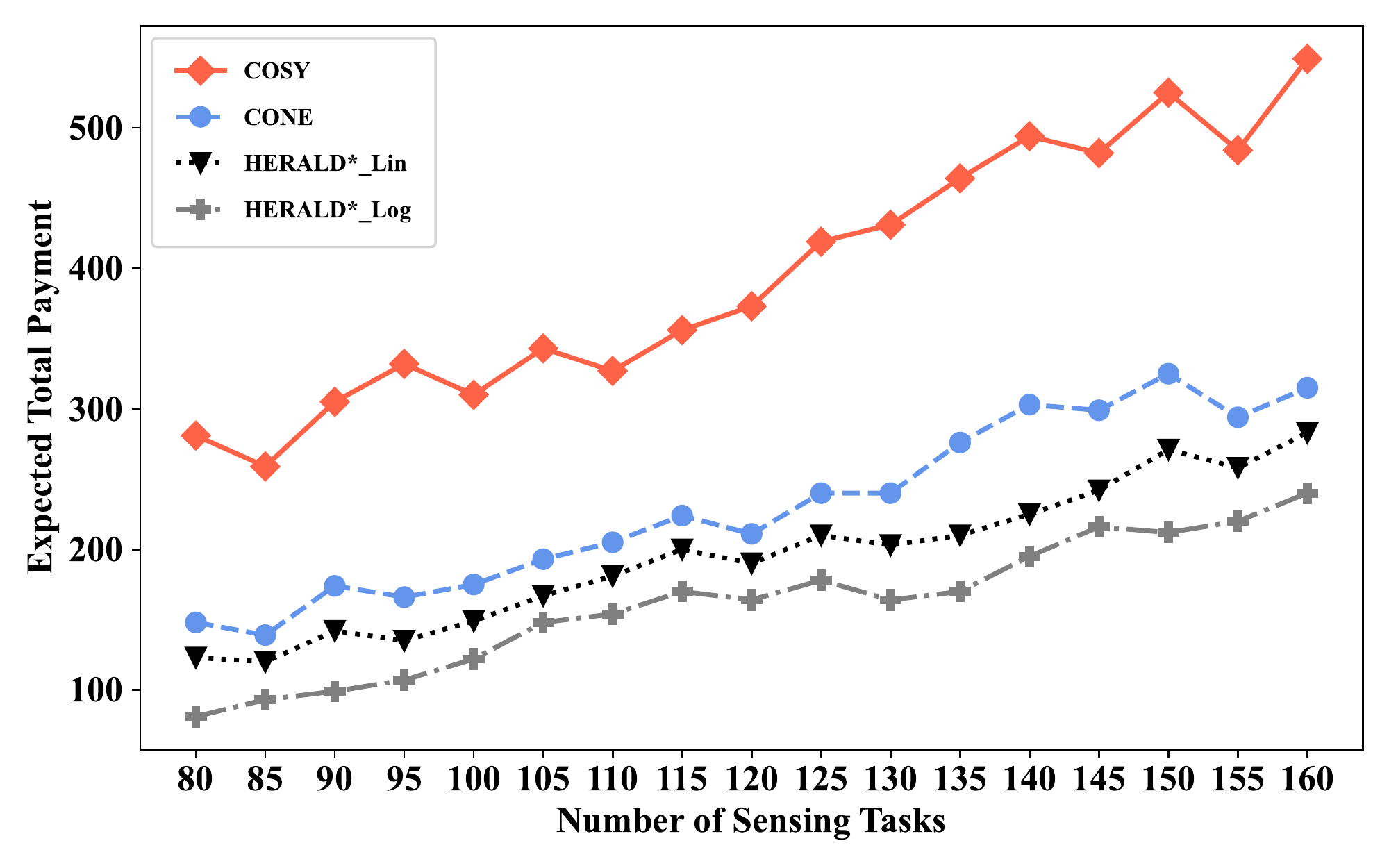}
 \centerline{ \footnotesize{\quad (b)}}
\end{minipage}%
\caption{(a). Expected social cost versus different numbers of sensing tasks for uncertain tasks. (b). Expected total payment versus different numbers of sensing tasks for uncertain tasks.}\label{Setting2}
\vspace{-0.2in}
\end{figure}

\begin{figure}[ht]
\centering
\begin{minipage}{0.85\linewidth}
\centering
 \includegraphics[width=0.85\linewidth]{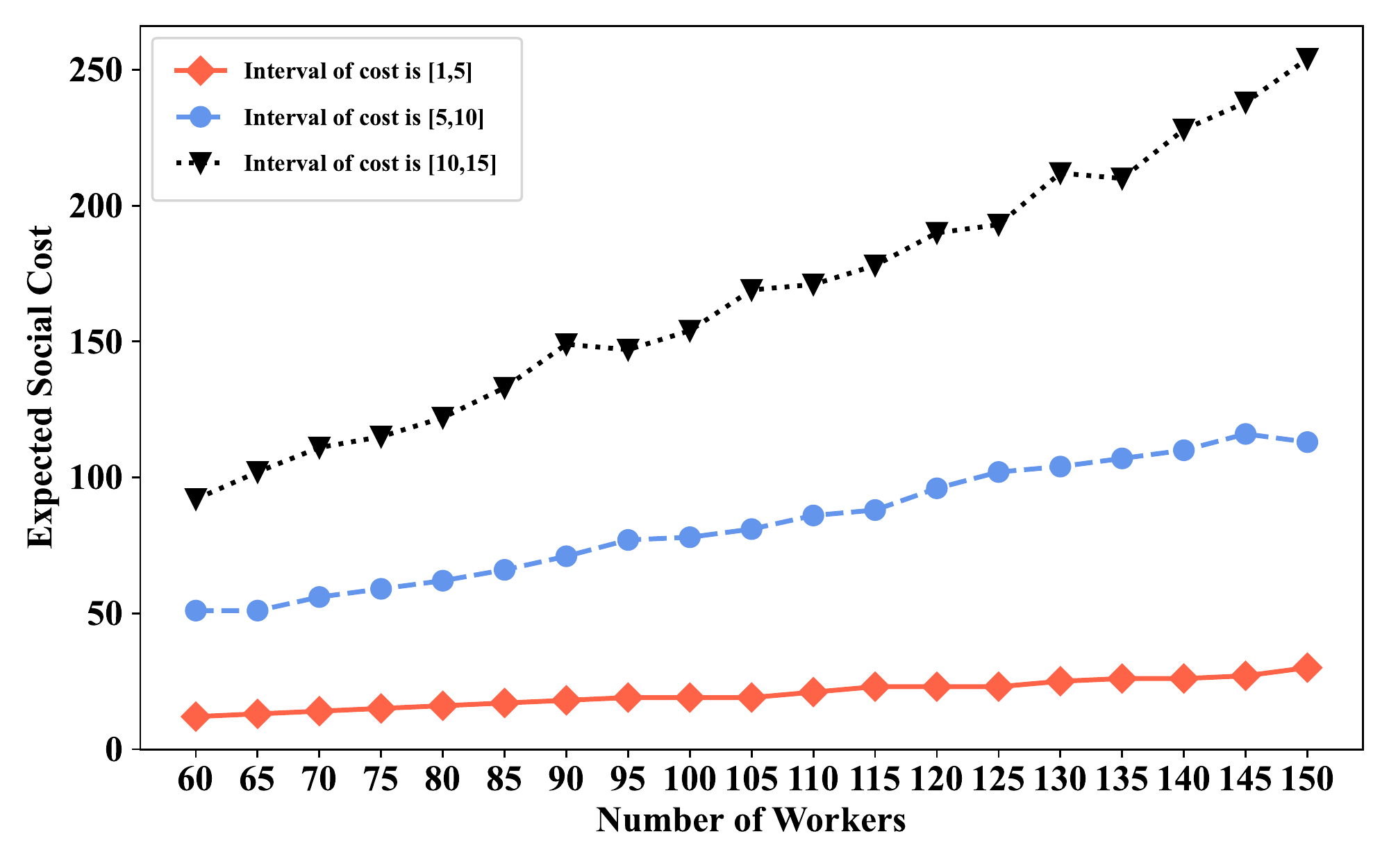}
 \centerline{\footnotesize{\quad (a)}}
 \end{minipage}%
 \qquad
\centering
\begin{minipage}{0.85\linewidth}
\centering
\includegraphics[width=0.85\linewidth]{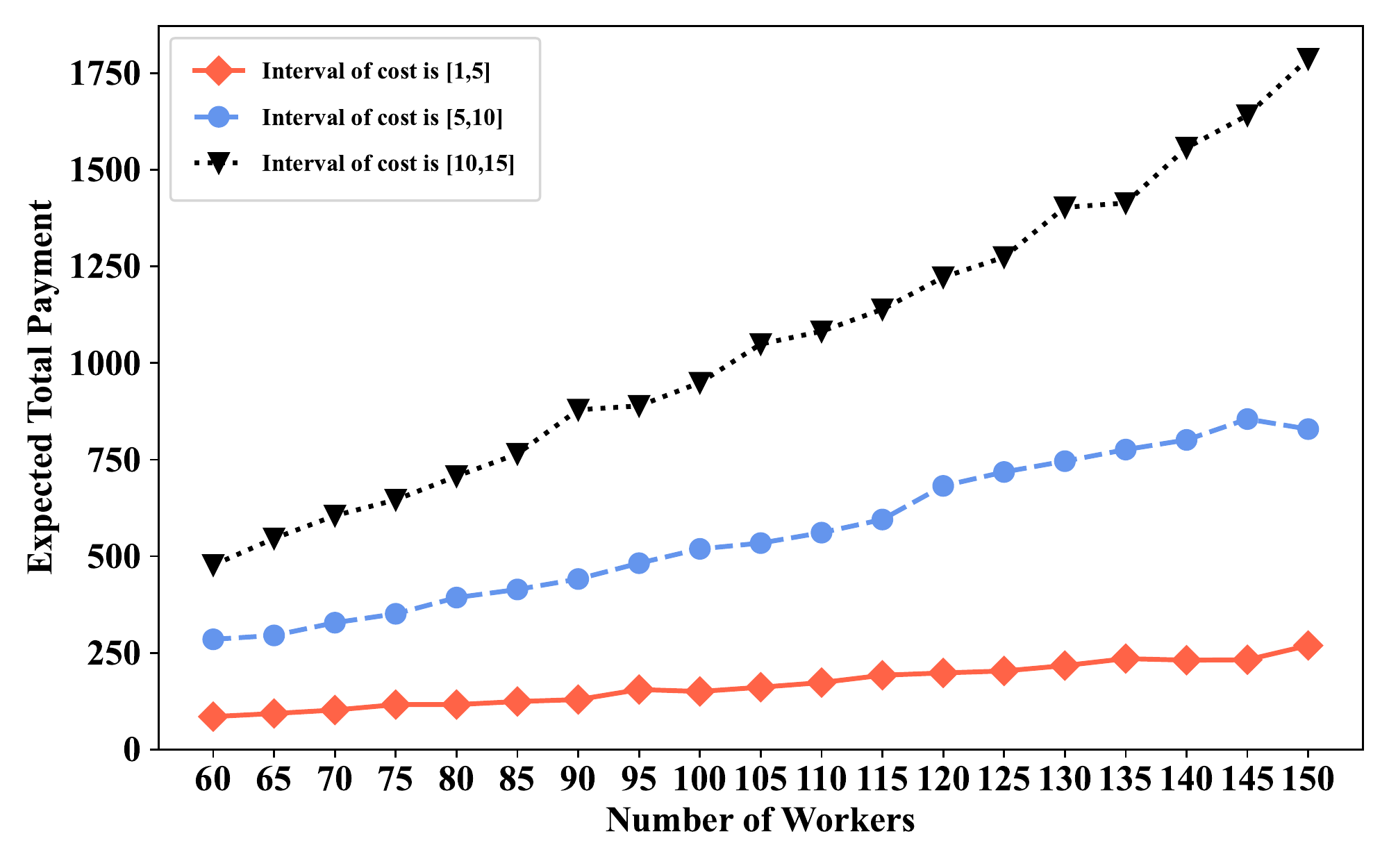}
 \centerline{ \footnotesize{\quad (b)}}
\end{minipage}%
\caption{(a). The impact of worker's cost on the expected social cost obtained by HERALD* for uncertain tasks  with the liner score function. (b). The impact of worker's cost on the expected total payment obtained by HERALD* for uncertain tasks  with the liner score function.}\label{Lin_Setting3}
\vspace{-0.2in}
\end{figure}

\begin{figure}[ht]
\centering
\begin{minipage}{0.85\linewidth}
\centering
 \includegraphics[width=0.85\linewidth]{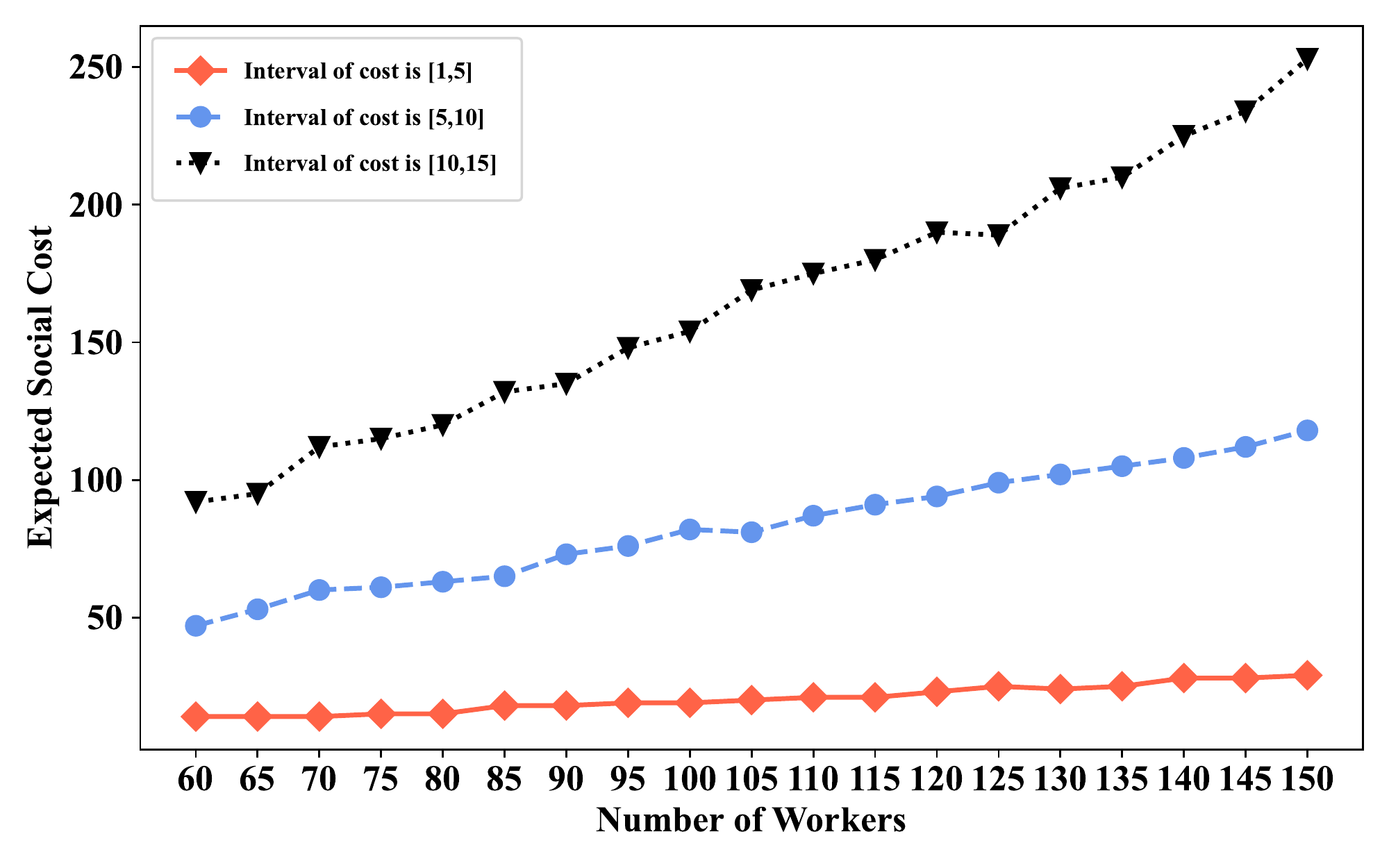}
 \centerline{\footnotesize{\quad (a)}}
 \end{minipage}%
 \qquad
\centering
\begin{minipage}{0.85\linewidth}
\centering
\includegraphics[width=0.85\linewidth]{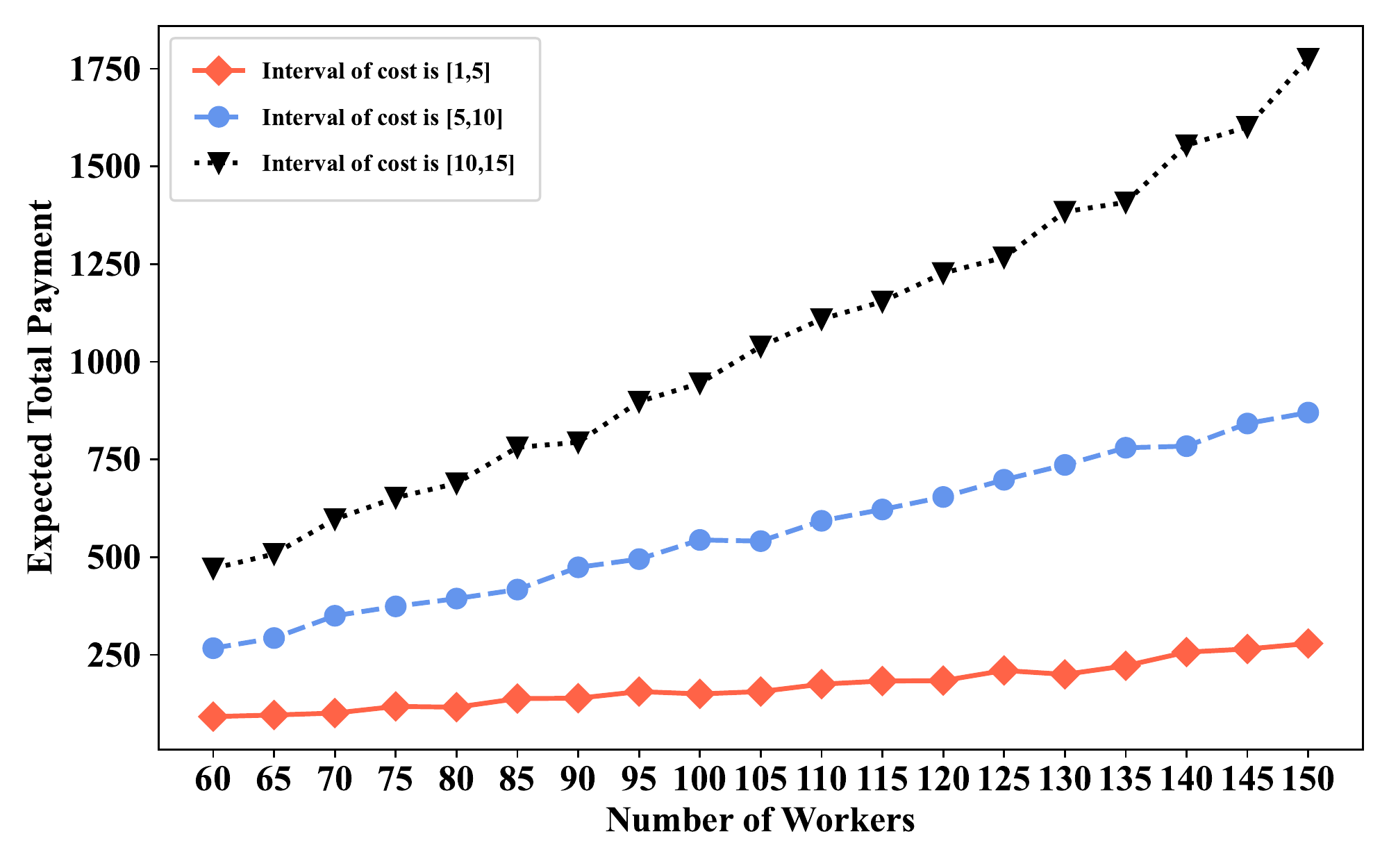}
 \centerline{ \footnotesize{\quad (b)}}
\end{minipage}%
\caption{(a). The impact of worker's cost on the expected social cost obtained by HERALD* for uncertain tasks with the logarithmic score function. (b). The impact of worker's cost on the expected total payment obtained by HERALD* for uncertain tasks with the logarithmic score function.}\label{Log_Setting3}
\vspace{-0.2in}
\end{figure}

\begin{figure}[ht]
\centering
\begin{minipage}{0.85\linewidth}
\centering
 \includegraphics[width=0.85\linewidth]{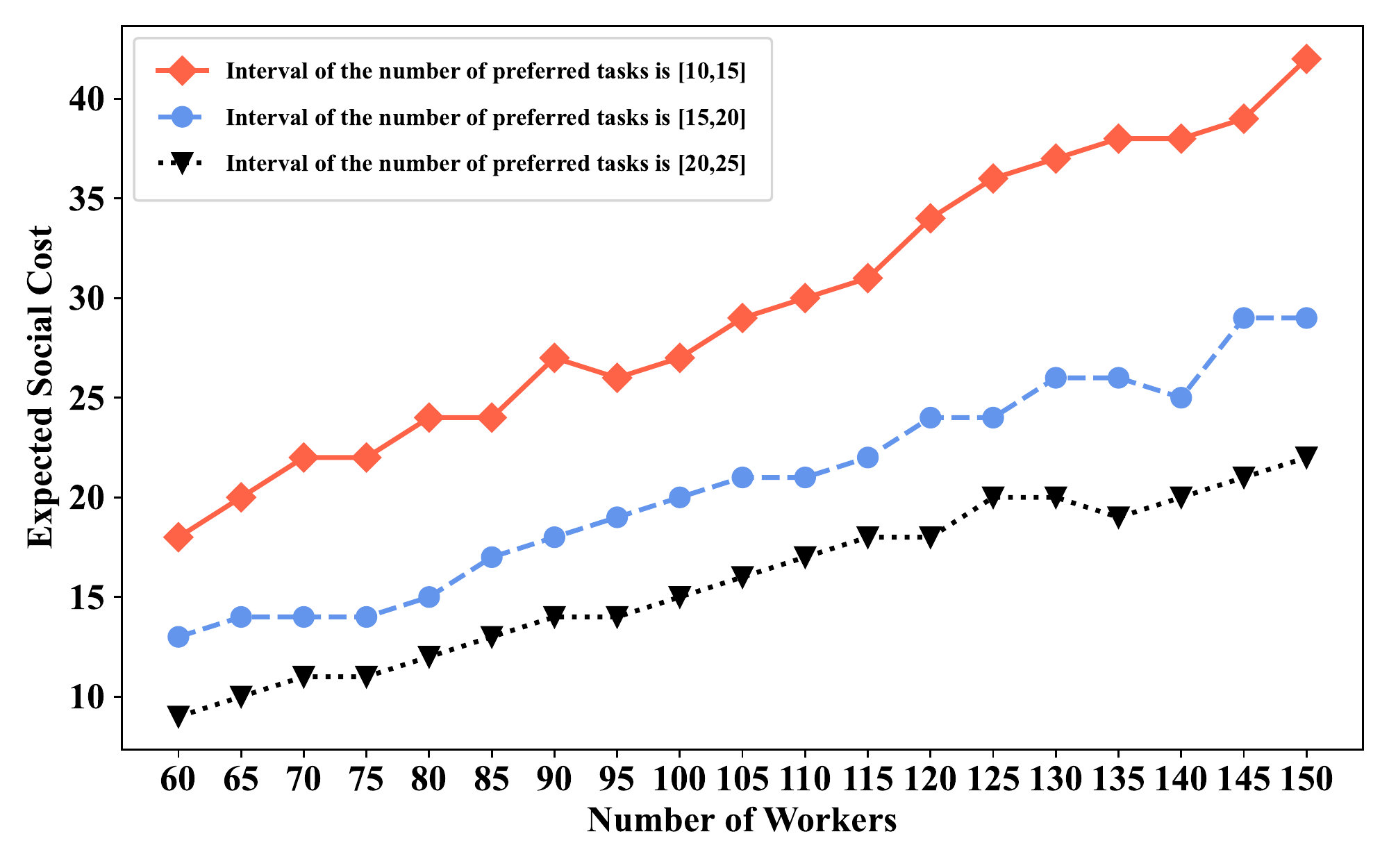}
 \centerline{\footnotesize{\quad (a)}}
 \end{minipage}%
 \qquad
\centering
\begin{minipage}{0.85\linewidth}
\centering
\includegraphics[width=0.85\linewidth]{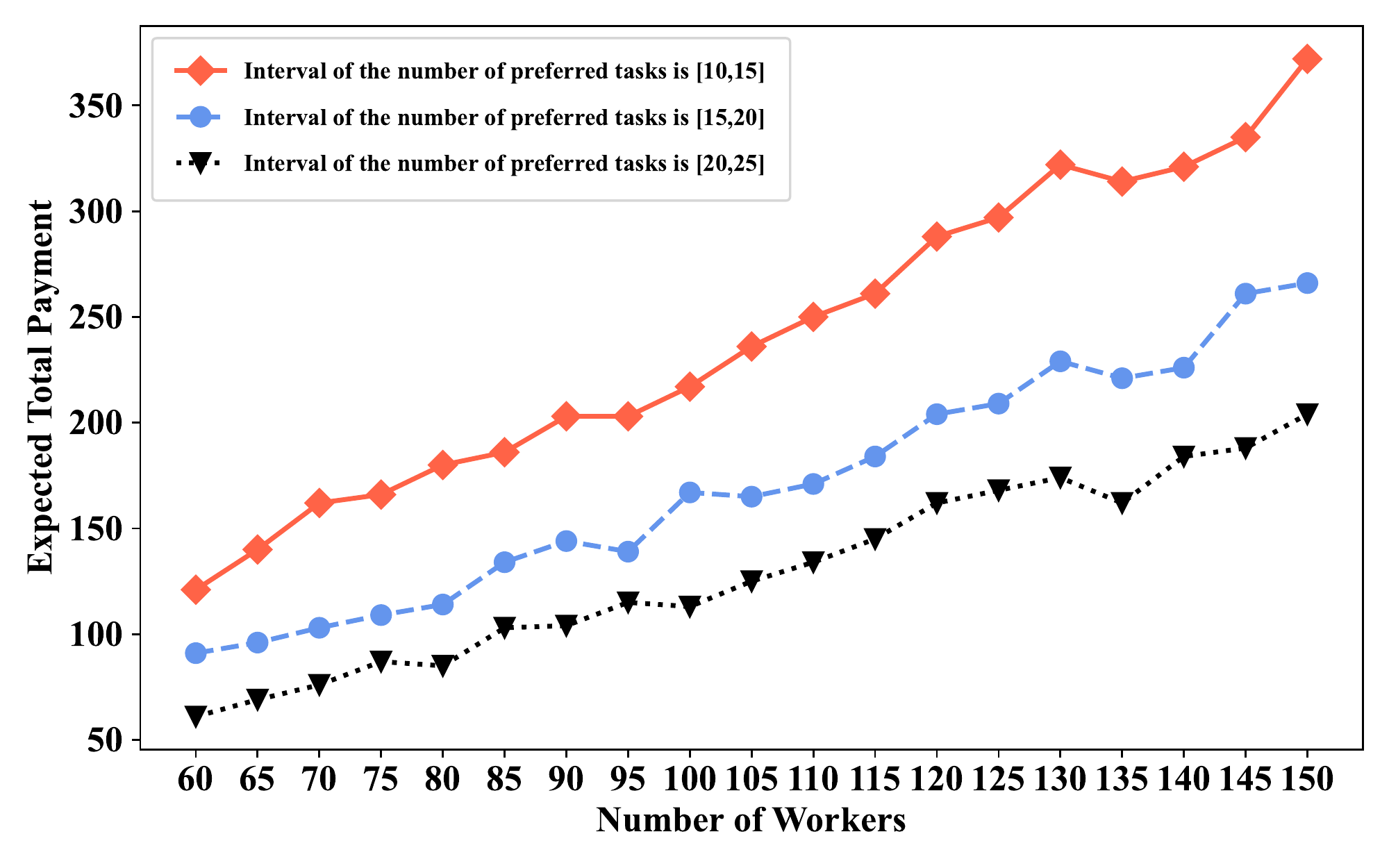}
 \centerline{ \footnotesize{\quad (b)}}
\end{minipage}%
\caption{(a). The impact of the number of worker's matching tasks on the expected social cost obtained by HERALD* for uncertain tasks with the liner score function. (b). The impact of the number of worker's matching tasks on the expected total payment obtained by HERALD* for uncertain tasks with the liner score function.}\label{Lin_Setting4}
\vspace{-0.2in}
\end{figure}

\begin{figure}[ht]
\centering
\begin{minipage}{0.85\linewidth}
\centering
 \includegraphics[width=0.85\linewidth]{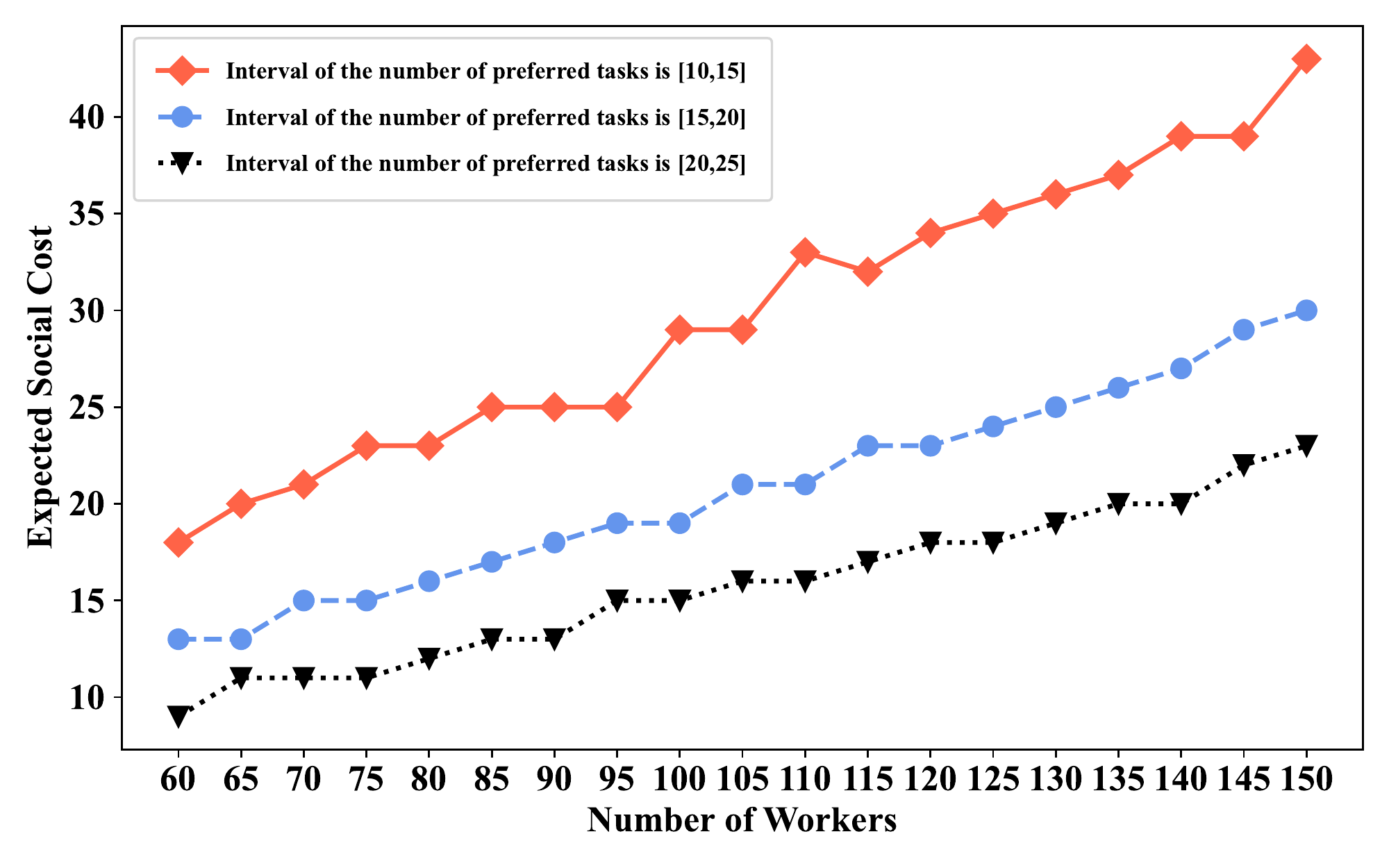}
 \centerline{\footnotesize{\quad (a)}}
 \end{minipage}%
 \qquad
\centering
\begin{minipage}{0.85\linewidth}
\centering
\includegraphics[width=0.85\linewidth]{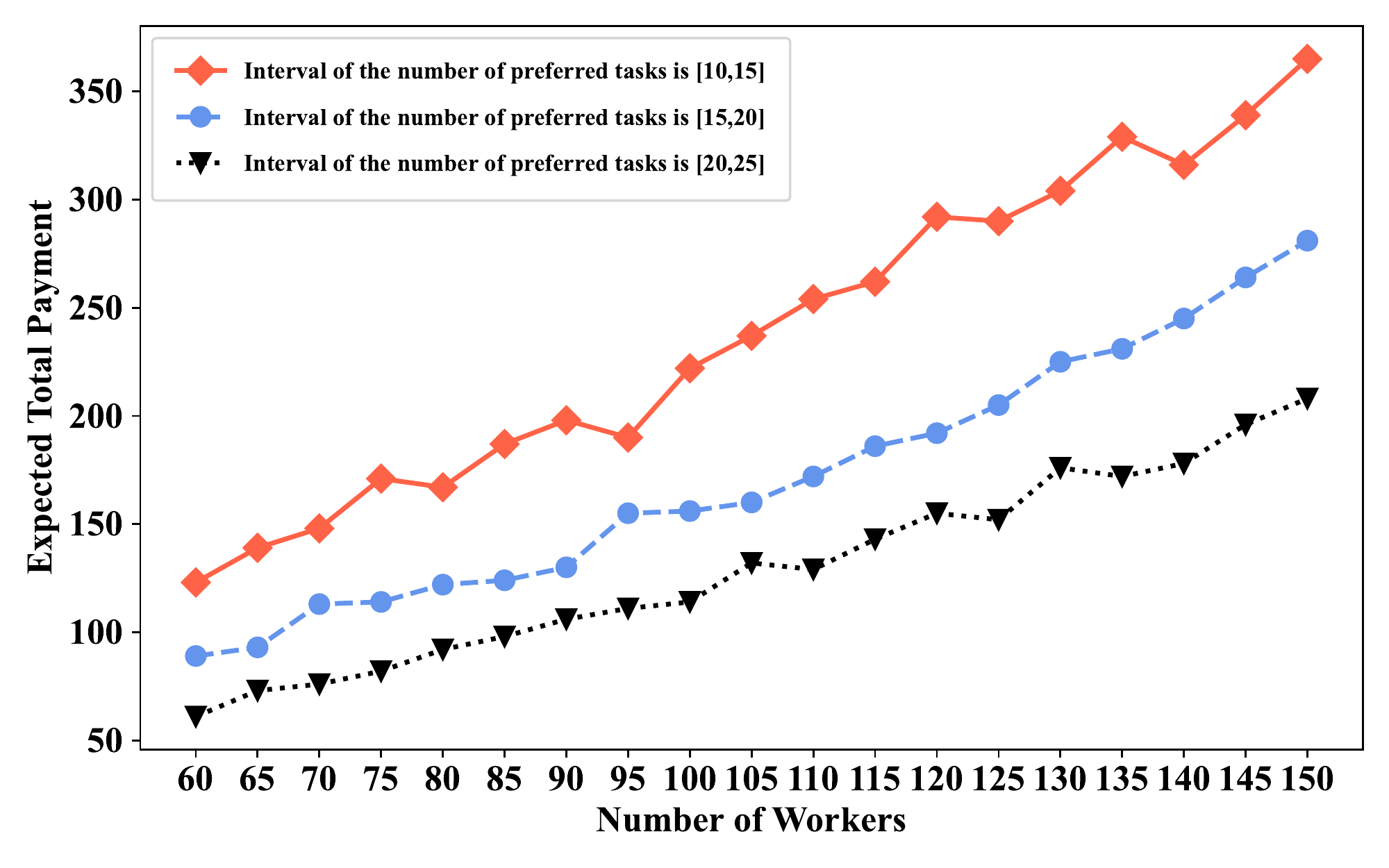}
 \centerline{ \footnotesize{\quad (b)}}
\end{minipage}%
\caption{(a). The impact of the number of worker's matching tasks on the expected social cost obtained by HERALD* for uncertain tasks with the logarithmic score function. (b). The impact of the number of worker's matching tasks on the expected total payment obtained by HERALD* for uncertain tasks with the logarithmic score function.}\label{Log_Setting4}
\vspace{-0.2in}
\end{figure}

\subsection{Simulation Results}
In Fig.~\ref{Setting1}, we analyze the effect of the number of workers. Specifically, Fig.~\ref{Setting1}(a) and Fig.~\ref{Setting1}(b) illustrate the impact on the expected social cost and expected total payment obtained by HERALD*, with results averaged over 100 runs. It is observed that HERALD* performs better than CONE and COSY. Interestingly, the expected social cost and expected total payment obtained by HERALD* using the logarithmic score function is lower than those obtained using the linear score function. This is due to the logarithmic score function giving higher chances of selection to users with low bids, resulting in a preference for such users.

Fig.~\ref{Setting2} examines the impact of the number of tasks on HERALD*'s performance, with results averaged over 100 runs. As depicted in Fig.~\ref{Setting2}(a) and Fig.~\ref{Setting2}(b), which show the expected social cost and expected total payment, respectively. Similar to the findings in Fig.~\ref{Setting1}, HERALD* outperforms CONE and COSY in this setting. Additionally, as the number of tasks increases, the expected social cost and expected total payment of HERALD* also increase due to the need for more workers to collect sensory data. Furthermore, consistent with the earlier results, the expected social cost and expected total payment of HERALD* with the logarithmic score function are lower than those of HERALD* with the linear score function for the same reasons.

Fig.~\ref{Lin_Setting3} depicts the effect of workers' cost on the performance of HERALD* under the linear score function. Specifically, Fig.~\ref{Lin_Setting3}(a) and Fig.~\ref{Lin_Setting3}(b) illustrate the impact of workers' cost on the expected social cost and expected total payment generated by HERALD*. The results indicate that as the workers' cost increases, both the expected social cost and expected total payment of HERALD* also increase. This is because a higher workers' cost implies that more social cost is required for the same tasks, and the platform has to pay more to the workers compared to the scenario with a lower workers' cost. Similar findings are observed for HERALD* under the logarithmic score function, as shown in Fig.~\ref{Log_Setting3}.

Fig.~\ref{Lin_Setting4} displays the influence of the number of matching tasks per worker on HERALD*'s expected social cost and expected total payment under the linear score function. Fig.~\ref{Lin_Setting4} (a) and \ref{Lin_Setting4}(b) highlight this impact, showing a decline in both the expected social cost and expected total payment as the number of tasks per worker increases. This trend is attributed to the reduced need for workers as task allocation per worker rises, leading to lower costs and payments in HERALD*. A similar trend is observed under the logarithmic score function, as depicted in Fig.~\ref{Log_Setting4}.

\section{Conclusion and Future Work}
In this manuscript, we introduce HERALD*, a novel incentive mechanism for a task allocation system in which tasks without real-time constraints arrive randomly according to a probability distribution. Our investigation indicates that HERALD* meets several desirable properties, such as truthfulness, individual rationality, differential privacy, low computational complexity, and low social cost. More specifically, we have shown that HERALD* guarantees $\frac{\epsilon l}{2}$-differential privacy for both linear and logarithmic score functions, and it achieves a competitive ratio of $\ln ln$ on expected social cost. We have also validated the effectiveness of HERALD* through both theoretical analysis and extensive simulations.

Moving forward, we plan to tackle the crucial privacy concerns associated with storing sensory data. Our future endeavors include integrating HERALD* with advanced data privacy algorithms (such as \cite{sun2021two,wang2022triple,wang2023differentially}). This integration aims to elevate our platform's data protection standards. Additionally, we plan to tackle ``response fatigue'' to improve user experience and system usability. We will introduce user-customizable interaction levels, intelligent notifications, and extensive user support to reduce decision overload and enhance the practicality of HERALD* in real-world applications.

%
%
%

\begin{thebibliography}{8}
\bibitem{rdinfocom}
R. Ding, Z. Yang, Y. Wei, H. Jin and X. Wang, ``Multi-Agent Reinforcement Learning for Urban Crowd Sensing with For-Hire Vehicles," \emph{Proc. IEEE Int. Conf. Comput. Commun. (INFOCOM),} 2021.


\bibitem{fantmc}
G. Fan, H. Jin, Q. Liu, W. Qin, X. Gan, H. Long, L. Fu, and X. Wang, ``Joint Scheduling and Incentive Mechanism for Spatio-Temporal Vehicular Crowd Sensing," \emph{IEEE Trans. Mobile Comput.}, vol. 20, no. 4, pp. 1449-1464, 2021.


\bibitem{pantmc}
M. S. Pan and K. Y. Li, ``ezNavi: An Easy-to-Operate Indoor Navigation System Based on Pedestrian Dead Reckoning and Crowdsourced User Trajectories," \emph{IEEE Trans. Mobile Comput.}, vol. 20, no. 2, pp. 488-501, 2021.

\bibitem{yutmc}
S. Yu, X. Chen, S. Wang, L. Pu and D. Wu, ``An Edge Computing-Based Photo Crowdsourcing Framework for Real-Time 3D Reconstruction," \emph{IEEE Trans. Mobile Comput.}, vol. 21, no. 2, pp. 421-432, 2022.


\bibitem{liutmc}
Y. Liu, Z. Yu, B. Guo, Q. Han, J. Su and J. Liao, ``CrowdOS: A Ubiquitous Operating System for Crowdsourcing and Mobile Crowd Sensing,'' \emph{IEEE Trans. Mobile Comput.}, vol. 21, no. 3, pp. 878-894, 2022.


\bibitem{zhang2021optimizing}
W. Zhang, D. Yang, W. Wu, H. Peng, N. Peng, H. Zhang, and X. Shen, ``Optimizing federated learning in distributed industrial IoT: A multi-agent approach,'' \emph{IEEE Journal on Selected Areas in Communications}, vol. 39, no. 12, pp. 3688--3703, 2021.


\bibitem{yang2023burst}
D. Yang, Z. Cheng, W. Zhang, H. Zhang, and X. Shen, ``Burst-Aware Time-Triggered Flow Scheduling With Enhanced Multi-CQF in Time-Sensitive Networks,'' \emph{IEEE/ACM Transactions on Networking}, 2023.


\bibitem{yang2023detfed}
D. Yang,  W. Zhang, Q. Ye, C. Zhang, N. Zhang, C. Huang, H. Zhang, and X. Shen, ``DetFed: Dynamic Resource Scheduling for Deterministic Federated Learning over Time-sensitive Networks,'' \emph{IEEE Transactions on Mobile Computing}, 2023.


\bibitem{zhibohuang2019infocom}
Z. Wang, J. Li, J. Hu, J. Ren, Z. Li, and Y. Li, ``Towards privacy-preserving incentive for mobile crowdsensing under an untrusted platform," in \emph{Proc. IEEE Int. Conf. Comput. Commun. (INFOCOM),} 2019.


\bibitem{ruitingzhou2018tmc}
R. Zhou, Z. P. Li, and C. Wu, ``A truthful online mechanism for location-aware tasks in mobile crowd sensing," in \emph{IEEE Trans. Mob. Comput.,} vol.17, no. 8, pp. 1737-1749, 2018.

\bibitem{LeiYang2018mobihoc}
L. Yang, M. Zhang, S. He, M. Li, and J. Zhang, ``Crowd-empowered privacy-preserving data aggregation for mobile crowdsensing," in \emph{Proc. 19th ACM Int. Symp. Mobile Netw. Comput. (MobiHoc),} 2018.

\bibitem{Cheung2018tmc}
M. H. Cheung, F. Hou, and J. Huang, ``Delay-sensitive mobile crowdsensing: Algorithm design and economics," in \emph{IEEE Trans. Mob. Comput.,} vol. 17, no. 12, pp. 2761-2774, 2018.

\bibitem{Ma_ton2018}
Q. Ma, L. Gao, Y. Liu, and J. Huang, ``Incentivizing Wi-Fi network crowdsourcing: A contract theoretic approach," \emph{IEEE/ACM Trans. Netw.,} vol. 26, no. 3, pp. 1035-1048, 2018.





\bibitem{Qu_tmc2020}
Y. Qu, S. Tang, C. Dong, P. Li, S. Guo, H. Dai, and F. Wu, ``Posted pricing for chance constrained robust crowdsensing," in \emph{IEEE Trans. Mob. Comput.,} vol. 19, no. 1, pp. 188-199, 2020.



\bibitem{Restuccia2019tmc}
F. Restuccia, P. Ferraro, S. Silvestri, S. K. Das and G. L. Re, ``IncentMe: Effective mechanism design to stimulate crowdsensing participants with uncertain mobility," in \emph{IEEE Trans. Mob. Comput.,} vol. 18, no.7, pp. 1571-1584, 2019.

\bibitem{wenqiangjin2019infocom}
W. Jin, M. Xiao, M. Li, and L. Guo, ``If you do not care about it, sell it: Trading location privacy in mobile crowd sensing," in \emph{Proc. IEEE Int. Conf. Comput. Commun. (INFOCOM),} 2019.

\bibitem{liangwang2018tmc}
L. Wang, Z. Yu, Q. Han, B. Guo, and H. Xiong, ``Multi-objective optimization based allocation of heterogeneous spatial crowdsourcing tasks," in \emph{IEEE Trans. Mob. Comput.,} vol. 17, no. 7, pp. 1637-1650, 2018.


\bibitem{guo2020fedcrowd}
Y. Guo, H. Xie, Y. Miao, C. Wang, and X. Jia, ``Fedcrowd: A federated and privacy-preserving crowdsourcing platform on blockchain," \emph{IEEE Transactions on Services Computing,} vol. 15, no. 4, pp. 2060--2073, 2022.



\bibitem{gao2021trustworker}
S. Gao, X. Chen, J. Zhu, X. Dong, and J. Ma, ``TrustWorker: A trustworthy and privacy-preserving worker selection scheme for blockchain-based crowdsensing," \emph{IEEE Transactions on Services Computing,} vol. 15, no. 6, pp. 3577--3590, 2022.




\bibitem{Jin_tmc2019}
H. Jin, L. Su, D. Chen, H. Guo, K. Nahrstedt, and J. Xu, ``Thanos: Incentive mechanism with quality awareness for mobile crowd sensing," \emph{IEEE Trans. Mob. Comput.,} vol. 18, no. 8, pp. 1951-1964, 2019.




\bibitem{jiang2021incentive}
L. Jiang, X. Niu, J. Xu, D. Yang and L. Xu, ``Incentive mechanism design for truth discovery in crowdsourcing with copiers," \emph{IEEE Transactions on Services Computing,} vol. 15, no. 5, pp. 2838--2853, 2022.


\bibitem{basik2018fair}
F. Basık, B. Gedik, H. Ferhatosmanoğlu, and K. Wu, ``Fair Task Allocation in Crowdsourced Delivery," \emph{IEEE Transactions on Services Computing,} vol. 14, no. 4, pp. 1040-1053, 2021.



\bibitem{Bhattacharjee_tmc2020}
S. Bhattacharjee, N. Ghosh, V. K. Shah, and S. K. Das, ``QnQ: Quality and quantity based unified approach for secure and trustworthy mobile crowdsensing," in \emph{IEEE Trans. Mob. Comput.,} vol. 19, no. 1, pp. 200-216, 2020.



\bibitem{Gong_ton2019}
X. Gong, N. B. Shroff, ``Truthful mobile crowdsensing for strategic users with private data quality" \emph{IEEE/ACM Trans. Netw.,} vol. 27, no. 5, pp. 1959-1972, 2019.


\bibitem{liu2021truthful}
X. Liu, and J. Liu, ``A Truthful Double Auction Mechanism for Multi-Resource Allocation in Crowd Sensing Systems," \emph{IEEE Transactions on Services Computing,} vol. 15, no. 5, pp. 2579-2590, 2022.




\bibitem{yidanhu2019infocom}
Y. Hu, and R. Zhang, ``Differentially-private incentive mechanism for crowdsourced radio environment map construction," in \emph{Proc. IEEE Int. Conf. Comput. Commun. (INFOCOM),} 2019.



\bibitem{Han_tmc2019}
K. Han, H. Liu, S. Tang, M. Xiao, and J. Luo, ``Differentially private mechanisms for budget limited mobile crowdsourcing," in \emph{IEEE Trans. Mobile Comput.,} vol. 18, no. 4, pp. 934-946, 2019.


\bibitem{bai2017sensor}
X. Bai, J. Yin, and Y. P. Wang, ``Sensor Guardian: prevent privacy inference on Android sensors," \emph{EURASIP Journal on Information Security,} vol. 2017, no. 1, pp. 1--17, 2017.


\bibitem{lin2016bidguard}
J. Lin, D. Yang, M. Li, J. Xu, and G. Xue, ``BidGuard: A framework for privacy-preserving crowdsensing incentive mechanisms," in \emph{2016 IEEE conference on communications and network security (CNS),} 2016.


\bibitem{yang2020secure}
M. Yang, I. Tjuawinata, K. Lam, J. Zhao, and L. Sun, ``Secure Hot Path Crowdsourcing With Local Differential Privacy Under Fog Computing Architecture," \emph{IEEE Transactions on Services Computing,} vol. 15, no. 4, pp. 2188-2201, 2022.


\bibitem{wei2019differential}
J. Wei, Y. Lin, X. Yao, and J. Zhang, ``Differential privacy-based location protection in spatial crowdsourcing," \emph{IEEE Transactions on Services Computing,} vol. 15, no. 1, pp. 45--58, 2022.


\bibitem{wu2022privacy}
H. Wu, L. Wang, K. Cheng, D. Yang, J. Tang, and G. Xue, ``Privacy-Enhanced and Practical Truth Discovery in Two-Server Mobile Crowdsensing," \emph{IEEE Transactions on Network Science and Engineering,} vol. 9, no. 3, pp. 1740-1755, 2022.



\bibitem{shu2018privacy}
J. Shu, X. Jia, K. Yang, and H. Wang, ``Privacy-preserving task recommendation services for crowdsourcing," \emph{IEEE Transactions on Services Computing,} vol. 14, no. 1, pp. 235--247, 2021.



\bibitem{wu2021privacyaware}
H. Wu, B. Düdder, L. Wang,  S. Sun, and  G. Xue, ``Blockchain-based reliable and privacy-aware crowdsourcing with truth and fairness assurance," \emph{IEEE Internet of Things Journal}, vol. 9, no. 5, pp. 3586-3598, 2021.


\bibitem{dwork2014algorithmic}
C. Dwork, and A. Roth, ``The algorithmic foundations of differential privacy," \emph{Foundations and Trends{\textregistered} in Theoretical Computer Science,} vol. 9, no. 3--4, pp. 211--407, 2014.


\bibitem{Singer2010focs}
Y. Singer, ``Budget feasible mechanisms," in \emph{Proc. IEEE Symp. Found. Comput. Sci. (FOCS),} 2010.



\bibitem{grandoni2013set}
F. Grandoni, A. Gupta, S. Leonardi, P. Miettinen, P. Sankowski, and M. Singh, ``Set covering with our eyes closed," \emph{SIAM Journal on Computing.,} vol. 42, no. 3, pp. 808--830, 2013.


\bibitem{sun2021two} 
Z. Sun, Y. Wang, Z. Cai, T. Liu,  X. Tong, \&  N. Jiang, (2021). A two‐stage privacy protection mechanism based on blockchain in mobile crowdsourcing. International Journal of Intelligent Systems, 36(5), 2058-2080.

\bibitem{wang2022triple} 
W. Wang, Y. Wang, P. Duan, T. Liu,  X. Tong, \& Z. Cai, (2022). A triple real-time trajectory privacy protection mechanism based on edge computing and blockchain in mobile crowdsourcing. IEEE Transactions on Mobile Computing.

\bibitem{wang2023differentially} 
M. Wang, T. Zhu, X. Zuo, M. Yang, S. Yu, \& W. Zhou, (2023). Differentially private crowdsourcing with the public and private blockchain. IEEE Internet of Things Journal.


\bibitem{Casella2022}
E. Casella, AR. Khamesi, S. Silvestri, DA. Baker, \& SK. Das, (2022). Hvac power conservation through reverse auctions and machine learning. In 2022 IEEE International Conference on Pervasive Computing and Communications (PerCom) (pp. 89-100). IEEE.
\end{thebibliography}
%


\vspace{-33pt}


\begin{IEEEbiography}[{\includegraphics[width=1in,height=1.25in,clip,keepaspectratio]{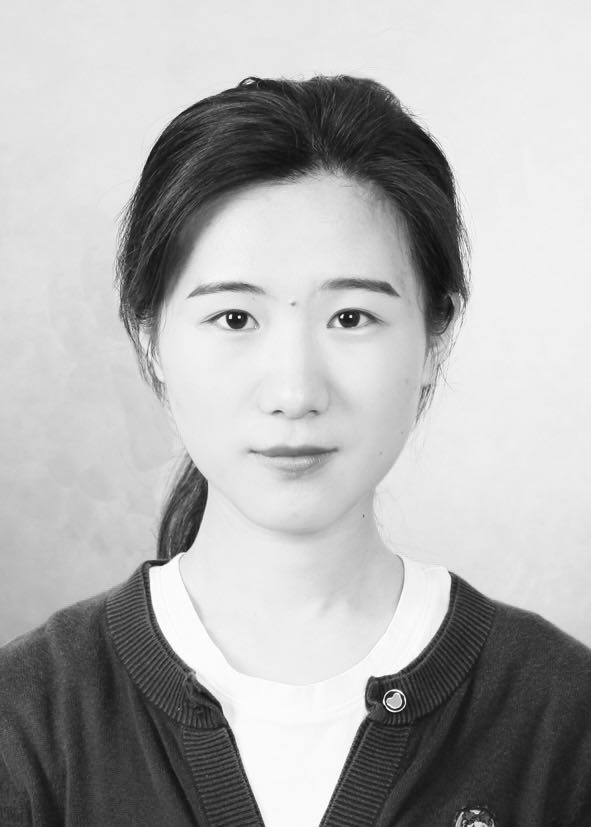}}]{Xikun Jiang} received her Ph.D. degree from the Department of Computer Science and Engineering, Shanghai Jiao Tong University, China in 2023. Presently, she holds a postdoctoral position in the Department of Computer Science (DIKU) at the University of Copenhagen (UCPH), Denmark. Her research interests include mobile crowdsensing, online markets, and machine learning. 
\end{IEEEbiography}

\vspace{-33pt}
\begin{IEEEbiography}[{\includegraphics[width=1in,height=1.25in,clip,keepaspectratio]{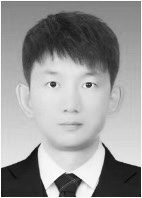}}]{Chenhao Ying} received Ph.D. degree from Department of Computer Science and Engineering, Shanghai Jiao Tong University, China in 2022. Before that, he received the B.E. degree from the Department of Communication Engineering, Xidian University, China, in 2016. He is a research assistant professor of Department of Computer Science and Engineering, Shanghai Jiao Tong University, China. His current research interests include mobile crowd sensing, communication coding algorithms, and wireless communications.
\end{IEEEbiography}


\vspace{-33pt}
\begin{IEEEbiography}[{\includegraphics[width=1in,height=1.25in,clip,keepaspectratio]{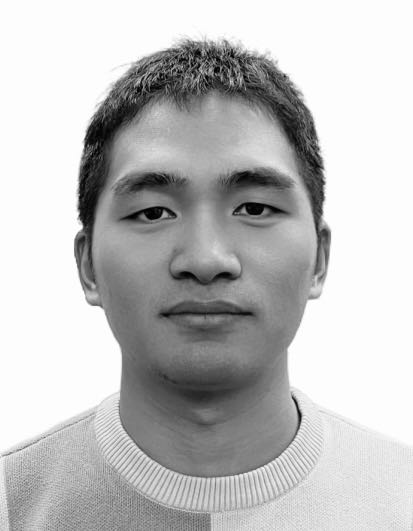}}]{Lei Li} is PhD candidate of Department of Computer Science, University of Copenhagen, Copenhagen, Denmark from 2020. Previously, after he received the master degree of University of Science of Thehology of China, he acted as a researcher of SenseTime for deep learning of image analysis. Now he is working on deep learning for 3D point cloud and multi-model analysis. His research interests include machine learning, computer visionand image processing.
\end{IEEEbiography}

\vspace{-33pt}
\begin{IEEEbiography}[{\includegraphics[width=1in,height=1.25in,clip,keepaspectratio]{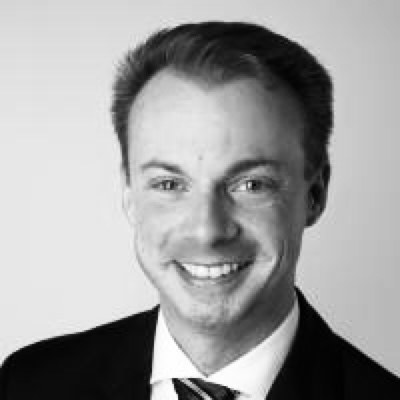}}]{Boris D$\ddot{u}$dder} is an associate professor at the department of computer science (DIKU) at the University of Copenhagen (UCPH), Denmark. He is head of the research group Software Engineering \& Formal Methods at DIKU. His primary research interests are formal methods and programming languages in software engineering of trustworthy distributed systems, where he is studying automated program generation for adaptive systems with high-reliability guarantees. He is working on the computational foundations of reliable and secure Big Data ecosystems. His research is bridging the formal foundations of computer science and complex industrial applications. 
\end{IEEEbiography}

\vspace{-33pt}
\begin{IEEEbiography}[{\includegraphics[width=1in,height=1.25in,clip,keepaspectratio]{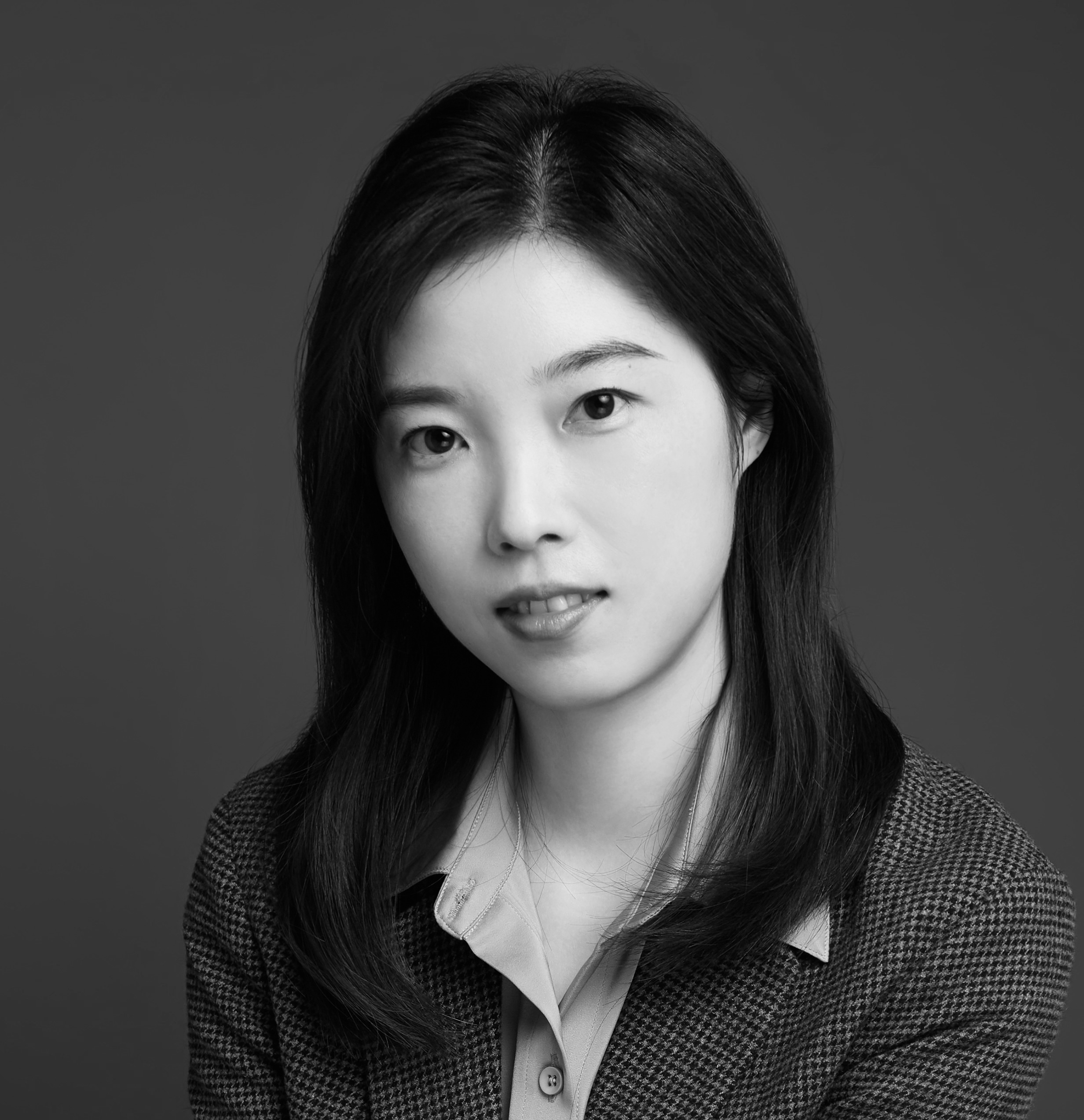}}]{Haiqin Wu} received her B.E. degree in Computer Science and Ph.D. degree in Computer Application Technology from Jiangsu University in 2014 and 2019, respectively. She is an Associate Professor at the Shanghai Key Laboratory of Trustworthy Computing (Software Engineering Institute), East China Normal University, China. Before joining ECNU, she was a postdoctoral researcher in the Department of Computer Science, University of Copenhagen, Denmark. She was also a visiting student in the School of Computing, Informatics, and Decision Systems Engineering at Arizona State University, US. Her research interests include data security and privacy protection, mobile crowdsensing/crowdsourcing, and blockchain-based applications. 
\end{IEEEbiography}

\vspace{-33pt}
\begin{IEEEbiography}[{\includegraphics[width=1in,height=1.25in,clip,keepaspectratio]{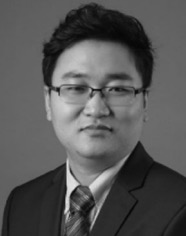}}]{Haiming Jin} received the BS degree from Shanghai Jiao Tong University, Shanghai, China, in 2012, and the PhD degree from the University of Illinois at UrbanaChampaign (UIUC), Urbana, IL, in 2017. He is currently a tenure-track associate professor with the John Hopcroft Center for Computer Science and the Department of Electronic Engineering, Shanghai Jiao Tong University. Before this, he was a post-doctoral research associate with the Coordinated Science Laboratory, UIUC. His research interests include crowd and social sensing systems, reinforcement learning, and mobile pervasive and ubiquitous computing. 
\end{IEEEbiography}


\vspace{-33pt}
\begin{IEEEbiography}[{\includegraphics[width=1in,height=1.25in,clip,keepaspectratio]{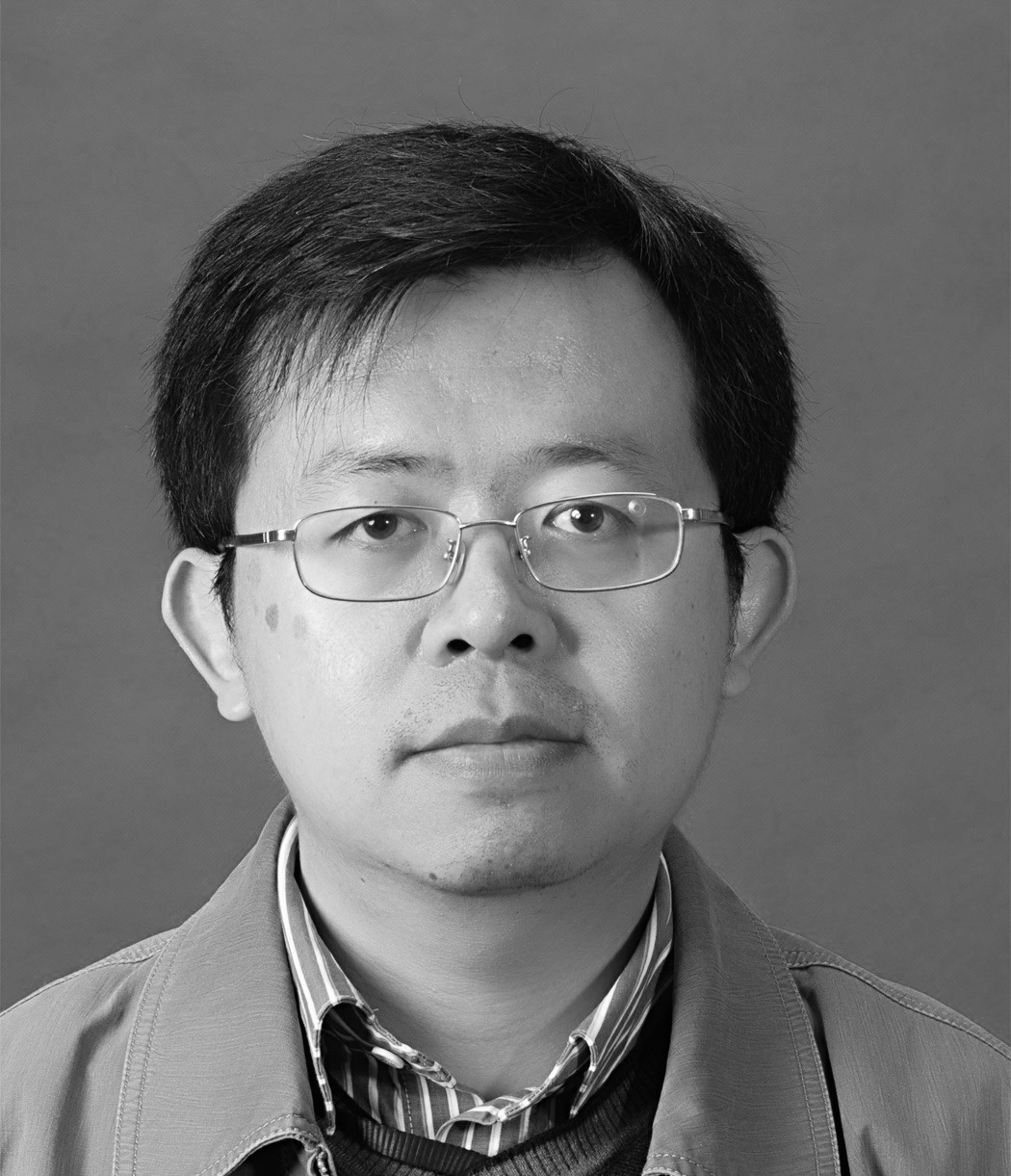}}]{Yuan Luo (Member, IEEE)} received the B.S. degree in applied mathematics, and the M.S. and Ph.D. degrees in probability statistics from Nankai University, Tianjin, China, in 1993, 1996, and 1999, respectively. From July 1999 to April 2001, he was a Post-Doctoral Researcher with the Institute of Systems Science, Chinese Academy of Sciences, Beijing, China. From May 2001 to April 2003, he was a Post-Doctoral Researcher with the Institute for Experimental Mathematics, University of Duisburg–Essen, Essen, Germany. Since June 2003, he has been with the Department of Computer Science and Engineering, Shanghai Jiao Tong University, Shanghai, China. Since 2006, he has been a Full Professor and the Vice Director of the Department, from 2016 to 2018 and since 2021. His current research interests include coding theory, information theory, and big data analysis.
\end{IEEEbiography}


\newpage

\section{appendix}

\subsection{Proof of Theorem 2}

To prove the truthfulness of HERALD*, we will demonstrate its adherence to the criteria outlined in Theorem~\ref{truthful}.

\begin{theorem}[\cite{Singer2010focs}]\label{truthful}
A mechanism satisfies truthfulness only if the following conditions are met:
\begin{itemize}
\item[1)]
The selection rule is monotonic: If a matching pair wins by offering a bid of $b_{i}$, it will also win if it bids $b_{i}^{\prime}\leq b_{i}$;
\item[2)]
Each winning pair is compensated with the critical value: A matching pair will not win if it bids higher than this value.
\end{itemize}
\end{theorem}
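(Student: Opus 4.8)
\subsection*{Proof proposal for Theorem~\ref{truthful}}

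The plan is to establish the necessity direction as worded: assuming the mechanism is truthful in the sense of Definition~\ref{IC}, both the monotonicity of the selection rule and the critical-value payment must follow. This is the single-parameter (Myerson-style) characterization specialized to the procurement/reverse-auction setting of HERALD*, where a \emph{lower} bid should make winning \emph{more} likely. Throughout, I would fix the bids $b_{-i}$ of all matching pairs other than the one under consideration and regard both the winning indicator $x_i(\cdot)\in\{0,1\}$ and the payment $p_i(\cdot)$ as functions of that pair's own report alone. Writing the utility from \eqref{utility} uniformly as $u_i(b_i;c_i)=p_i(b_i)-c_i\,x_i(b_i)$ (so that a loser contributes $0$), truthfulness reads $u_i(c_i;c_i)\ge u_i(b_i;c_i)$ for every alternative report $b_i$ and every true cost $c_i$.

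First I would prove monotonicity by an exchange argument on two arbitrary reports $b_i'<b_i$. Instantiating truthfulness once for a pair whose true cost is $b_i'$ (so the truthful report $b_i'$ weakly beats the deviation $b_i$) and once for a pair whose true cost is $b_i$ (so $b_i$ weakly beats $b_i'$), I obtain two inequalities. Adding them cancels the payment terms $p_i(b_i)$ and $p_i(b_i')$ and leaves $(b_i'-b_i)\big(x_i(b_i)-x_i(b_i')\big)\ge 0$. Since $b_i'-b_i<0$, this forces $x_i(b_i')\ge x_i(b_i)$; hence if the pair wins at $b_i$ it also wins at any $b_i'\le b_i$, which is exactly condition~(1).

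Next I would derive the critical-value payment. Monotonicity guarantees a threshold $\theta_i=\sup\{b_i:x_i(b_i)=1\}$ such that the pair wins for reports below $\theta_i$ and loses above it; the ``loses above the threshold'' part is precisely the second half of condition~(2). To pin the payment, I would first show it is constant on the winning region: for two winning reports, applying truthfulness in both directions gives $p_i(b_i)\ge p_i(b_i')$ and $p_i(b_i')\ge p_i(b_i)$, so $p_i(\cdot)\equiv p_i^\ast$ there. I would then squeeze $p_i^\ast$ against $\theta_i$ by comparing a winning report just below the threshold with a losing report just above it: truthfulness for a true cost slightly below $\theta_i$ yields $p_i^\ast\ge b_i$ for all such $b_i$, hence $p_i^\ast\ge\theta_i$, while truthfulness for a true cost slightly above $\theta_i$ yields $0\ge p_i^\ast-b_i''$, hence $p_i^\ast\le\theta_i$. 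Together these give $p_i^\ast=\theta_i$, the critical value.

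The main obstacle I anticipate is not the algebra but the careful bookkeeping of the reverse-auction sign conventions and the behavior exactly at the threshold. Because a lower bid must \emph{increase} the chance of winning (unlike a forward auction), the two truthfulness inequalities must be oriented so that their sum produces the correctly signed cross term, and the tie case $b_i=\theta_i$ must be resolved by a fixed tie-breaking rule so that $\theta_i$ is simultaneously the infimum of losing reports and the supremum of winning reports. I would also confirm that $p_i(\cdot)$ is genuinely a function of the own report given fixed $b_{-i}$, so that the cancellation of payment terms in the exchange argument is valid.
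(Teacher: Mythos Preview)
The paper does not prove Theorem~\ref{truthful} at all; it is quoted verbatim as a known result from \cite{Singer2010focs} and then invoked as a black box in Appendix~A to establish Theorem~\ref{X-Btruth}. Your Myerson-style exchange argument is the standard and correct proof of the necessity direction for single-parameter procurement auctions, so there is nothing in the paper to compare it against.

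One observation worth making, since you clearly understand the characterization: the paper states only the ``only if'' direction, yet in Appendix~A it uses the theorem in the \emph{sufficiency} direction (verifying monotonicity and critical-value payment in order to \emph{conclude} truthfulness). The full Myerson characterization is of course an ``if and only if'', and your proof sketch already contains the ingredients for the converse as well---once the payment is pinned to the threshold $\theta_i$ and is constant on the winning region, a direct case analysis shows that misreporting cannot improve utility. If you want your write-up to match how the result is actually \emph{used} in the paper, you should also record that converse.
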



\begin{proof}

\textbf{Monotonicity:} Given a matching pair $(\Gamma_{j},b_i)$, we will prove that if it wins with a bid of $b_{i}$, it will also win with a bid of $b_{i}^{\prime}\leq b_{i}$. We will demonstrate this in the following two scenarios.

\emph{Case 1:}
During a winning selection phase iteration, if the CF of the winning pair $(\Gamma_{j},b_i)$ satisfies $\frac{b_{i}}{|\Gamma_{j}\cap\mathcal{T}|}\leq \frac{T}{|\mathcal{T}|}$, then it implies that it possesses the smallest CF among all the matching pairs. Consequently, it will also win with a bid of $b_{i}^{\prime}\leq b_{i}$.

\emph{Case 2:}
During an iteration, if the CF of the winning pair $(\Gamma_{j},b_i)$ satisfies $\frac{b_{i}}{|\Gamma_{j}\cap\mathcal{T}|}>\frac{T}{|\mathcal{T}|}$, then it indicates that it possesses the least cost among the matching pairs and that there exists no matching pair $(\Gamma_{J},b_I)$ such that $\frac{b_{I}}{|\Gamma_{J}\cap\mathcal{T}|}\leq \frac{T}{|\mathcal{T}|}$. We must subsequently examine two sub-cases.

\emph{Subcase 2.1:} If the bid $b_{i}^{\prime}\leq b_{i}$ satisfies $\frac{b_{i}^{\prime}}{|\Gamma_{j}\cap\mathcal{T}|}>\frac{T}{|\mathcal{T}|}$, then it will also win with a bid of $b_{i}^{\prime}$, as $b_{i}^{\prime}$ is the lowest and there is no matching pair $(\Gamma_{J},b_I)$ such that $\frac{b_{I}}{|\Gamma_{J}\cap\mathcal{T}|}\leq \frac{T}{|\mathcal{T}|}$.

\emph{Subcase 2.2:} If the bid $b_{i}^{\prime}\leq b_{i}$ satisfies $\frac{b_{i}^{\prime}}{|\Gamma_{j}\cap\mathcal{T}|}\leq \frac{T}{|\mathcal{T}|}$, then it will also win by bidding $b_{i}^{\prime}$, as it is the sole matching pair with a CF that is less than or equal to $\frac{T}{|\mathcal{T}|}$.

\textbf{Critical Value:} If a matching pair $(\Gamma_{j},b_i)$ wins, it follows that worker $i$'s payment is $p_{i}=p_{i}+\max\{b_{i},p_{\mathcal{R}_{i}}\}$, where $p_{\mathcal{R}_{i}}=\sum_{\ell\in\mathcal{R}_{i}}b_{\ell}$. If worker $i$ increases his/her bid to $\widetilde{b}_{i}$ such that $\widetilde{b}_{i}\leq p_{\mathcal{R}_{i}}$, his/her payments remain constant. However, if $\widetilde{b}_{i}> p_{\mathcal{R}_{i}}$, we must examine the following two cases during each iteration of the winning selection phase.

\emph{Case 1:} When CF of matching pair $(\Gamma_{j},\widetilde{b}_{i})$ satisfies $\frac{\widetilde{b}_{i}}{|\Gamma_{j}\cap\mathcal{T}|}\leq \frac{T}{|\mathcal{T}|}$, we will prove that there is a matching pair $(\Gamma_{q},b_k)$ where $k\in\mathcal{R}_{i}$ such that $\frac{b_{k}}{|\Gamma_{q}\cap\mathcal{T}|}\leq \frac{\widetilde{b}_{i}}{|\Gamma_{j}\cap\mathcal{T}|}$. We have $\frac{\widetilde{b}_{i}}{|\Gamma_{j}\cap\mathcal{T}|}\geq \frac{\sum_{\ell\in\mathcal{R}_{i}}b_{\ell}}{\sum_{\ell\in\mathcal{R}_{i}}|\Gamma_{\ell}\cap\mathcal{T}|}$. Then let matching pair $(\Gamma_{q},b_k)$ be the one with the minimum CF $\frac{b_{k}}{|\Gamma_{q}\cap\mathcal{T}|}$ among all matching pairs where workers belong to $\mathcal{R}_{i}$, which means that $\frac{b_{k}}{|\Gamma_{q}\cap\mathcal{T}|}\leq \frac{b_{\ell}}{|\Gamma_{\ell}\cap\mathcal{T}|}$ for $\forall \ell\in\mathcal{R}_{i}$, i.e., $b_{k}|\Gamma_{q}\cap\mathcal{T}|\leq b_{\ell}|\Gamma_{q}\cap\mathcal{T}|$. Therefore, we have $b_{k}\sum_{\ell\in\mathcal{R}_{i}}|\Gamma_{\ell}\cap\mathcal{T}|\leq|\Gamma_{q}\cap\mathcal{T}|\sum_{\ell\in\mathcal{R}_{i}}b_{\ell}$, i.e., $\frac{b_{k}}{|\Gamma_{q}\cap\mathcal{T}|}\leq\frac{\sum_{\ell\in\mathcal{R}_{i}}b_{\ell}}{\sum_{\ell\in\mathcal{R}_{i}}|\Gamma_{\ell}\cap\mathcal{T}|}$. Since $\frac{b_{k}}{|\Gamma_{q}\cap\mathcal{T}|}\leq\frac{\widetilde{b}_{i}}{|\Gamma_{j}\cap\mathcal{T}|}$, the platform will select matching pair $(\Gamma_{q},b_k)$ instead of $(\Gamma_{j},\widetilde{b}_{i})$ in this iteration.

\emph{Case 2:} When CF of matching pair $(\Gamma_{j},\widetilde{b}_{i})$ satisfies $\frac{\widetilde{b}_{i}}{|\Gamma_{j}\cap\mathcal{T}|}> \frac{T}{|\mathcal{T}|}$, we need to consider two subcases.

\emph{Subcase 2.1:} Once there exist some matching pairs $(\Gamma_{q},\widetilde{b}_{\ell})$ for $\Gamma_{q}\in \mathcal{Y}$ and  $\ell\in\mathcal{R}_{i}$ such that $\frac{b_{\ell}}{|\Gamma_{q}\cap\mathcal{T}|}\leq \frac{T}{|\mathcal{T}|}$, the platform will select a matching pair $(\Gamma_{q},\widetilde{b}_{\ell})$ among them with the minimum CF instead of the matching pair $(\Gamma_{j},\widetilde{b}_{i})$.

\emph{Subcase 2.2:} Once the CFs of all matching pairs $(\Gamma_{q},\widetilde{b}_{\ell})$ for $\Gamma_{q}\in \mathcal{Y}$ and  $\ell\in\mathcal{R}_{i}$ satisfies $\frac{b_{\ell}}{|\Gamma_{q}\cap\mathcal{T}|}> \frac{T}{|\mathcal{T}|}$, the platform will always find a matching pair $(\Gamma_{q},\widetilde{b}_{\ell})$ with the minimum bid $b_{\ell}$ such that $b_{\ell}\leq p_{\mathcal{R}_{i}}\leq \widetilde{b}_{i}$, which means that the platform will not select the matching pair $(\Gamma_{j},\widetilde{b}_{i})$.

Therefore, the conclusion holds.
\end{proof}

\subsection{Proof of Theorem 3}

\begin{proof}
Consider two input bid profiles $\overrightarrow{b}$ and $\overrightarrow{b}'$ that differ in only one bid. Let $M(\overrightarrow{b})$ and $M(\overrightarrow{b}')$ denote the task-worker matching results by HERALD* with inputs $\overrightarrow{b}$ and $\overrightarrow{b}'$, respectively. We aim to prove that HERALD* achieves differential privacy for an arbitrary sequence of task-worker matching results $\mathcal{I} = \{
(\Gamma_1,b_i), (\Gamma_2,b_j),..., (\Gamma_l,b_t)\}$ of length $l$ for $m$ workers, where each worker can match $k$ task subsets with $0\le k \le l$. To analyze the relative probability of HERALD* for the given bid inputs $\overrightarrow{b}$ and $\overrightarrow{b}'$, we consider:
\begin{equation}
\footnotesize
\begin{aligned}
&\frac{Pr[M(\overrightarrow{b})=\mathcal{I}]}{Pr[M(\overrightarrow{b}')=\mathcal{I}]}\overset{(a)}=\prod_{j=1}^l\frac{\frac{exp(-\frac{\epsilon b_{j}}{2 (b_{max}-b_{min})})}{\sum_{i \in \mathcal{W}}exp(-\frac{\epsilon b_i}{2 (b_{max}-b_{min})})}}{\frac{exp(-\frac{\epsilon b_{j}'}{2 (b_{max}-b_{min})})}{\sum_{i \in \mathcal{W}}exp(-\frac{\epsilon b_i'}{2 (b_{max}-b_{min})})}}\\
&=\prod_{j=1}^l\frac{exp(-\frac{\epsilon b_{j}}{2 (b_{max}-b_{min})})}{exp(-\frac{\epsilon b_{j}'}{2 (b_{max}-b_{min})})} \times \prod_{j=1}^l\frac{\sum_{i \in \mathcal{W}}exp(-\frac{\epsilon b_i'}{2 (b_{max}-b_{min})})}{\sum_{i \in \mathcal{W}}exp(-\frac{\epsilon b_i}{2 (b_{max}-b_{min})})},
\end{aligned}
\end{equation}
where equation (a) is derived by formula (\ref{Normalize LIN score functuon}) and $b_i$, $b_j$ denote the bids of workers matching the task subsets $\Gamma_i$, $\Gamma_j$, respectively. Then, we prove this theorem in two cases. When $b_k > b_k'$, the value of the first product is at most $1$, we have
\begin{equation}
\footnotesize
\begin{aligned}\label{lin-1}
&\frac{Pr[M(\overrightarrow{b})=\mathcal{I}]}{Pr[M(\overrightarrow{b}')=\mathcal{I}]}
\le \prod_{j=1}^l\frac{\sum_{i \in \mathcal{W}}exp(-\frac{\epsilon b_i'}{2 (b_{max}-b_{min})})}{\sum_{i \in \mathcal{W}}exp(-\frac{\epsilon b_i}{2 (b_{max}-b_{min})})}\\
&=\prod_{j=1}^l\frac{\sum_{i \in \mathcal{W}}exp(\frac{\epsilon (b_i-b_i')}{2 (b_{max}-b_{min})})exp(-\frac{\epsilon b_i}{2 (b_{max}-b_{min})})}{\sum_{i \in \mathcal{W}}exp(-\frac{\epsilon b_i}{2 (b_{max}-b_{min})})}\\
&=\prod_{j=1}^l E_{i \in \mathcal{W}}\left[exp(\frac{\epsilon (b_i-b_i')}{2 (b_{max}-b_{min})})\right]\\
&\overset{(a)}\le \prod_{j=1}^l E_{i \in \mathcal{W}}\left[1+(e-1)(\frac{\epsilon (b_i-b_i')}{2 (b_{max}-b_{min})})\right]\\
&\overset{(b)}\le exp\left( \epsilon(e-1)(\frac{ \sum_{j=1}^l E_{i \in \mathcal{W}} (b_i-b_i')}{2 (b_{max}-b_{min})}) \right)\\
&\overset{(c)}\le exp\left( \epsilon(e-1)(\frac{ \sum_{j=1}^l (b_{max}-b_{min})}{2 (b_{max}-b_{min})}) \right) = exp\left(\frac{\epsilon(e-1)l}{2}\right),
\end{aligned}
\end{equation}
where the inequality (a) holds because for all $x\le 1, e^x\le 1+(e-1)x$. The inequality (b) holds because for all $x\in R, 1+x\le e^x$. In Section \ref{subsecsys}, it was mentioned that the bid $b_{i}$ of every worker $i$ is confined within the interval $[b_{min},b_{max}]$, where $b_{min}$ is normalized to $1$ and $b_{max}$ is a fixed constant such that inequality (c) is satisfied.

When $b_k \le b_k'$, the value of the second product is at most $1$, we have
\begin{equation}
\footnotesize
\begin{aligned}\label{lin-2}
&\frac{Pr[M(\overrightarrow{b})=\mathcal{I}]}{Pr[M(\overrightarrow{b}')=\mathcal{I}]}\le \prod_{j=1}^l\frac{exp(-\frac{\epsilon b_{j}}{2 (b_{max}-b_{min})})}{exp(-\frac{\epsilon b_{j}'}{2 (b_{max}-b_{min})})}\\
&=\prod_{j=1}^l exp\left(\frac{\epsilon(b_{j}'-b_{j})}{2 (b_{max}-b_{min})}\right) =exp\left(\frac{\epsilon}{2 (b_{max}-b_{min})}\sum_{j=1}^l(b_{j}'-b_{j})\right)\\
&\overset{(a)}\le exp\left(\frac{\epsilon}{2 (b_{max}-b_{min})} \times l (b_{max}-b_{min}) \right) =exp\left(\frac{\epsilon l}{2}\right),
\end{aligned}
\end{equation}
in which the inequality (a) holds because of the same reason of inequality (c) in formula (\ref{lin-1}). Combining the formulas (\ref{lin-1}) and (\ref{lin-2}), the proof is completed.
\end{proof}

\subsection{Proof of Theorem 4}

\begin{proof}
The proof is analogous to that of the linear score function, and thus we will not repeat the default setting. Instead, we examine the HERALD* relative probability for a particular input bid profile $\overrightarrow{b}$ and its perturbed version $\overrightarrow{b}'$:

\begin{equation}
\footnotesize
\begin{aligned}
&\frac{Pr[M(\overrightarrow{b})=\mathcal{I}]}{Pr[M(\overrightarrow{b}')=\mathcal{I}]}\overset{(a)}=\prod_{j=1}^l\frac{\frac{exp\left(\frac{-\epsilon \ln \frac{b_j}{{b_{max}}} }{2 \ln b_{max}}\right)}{\sum_{i \in \mathcal{W}}exp\left(\frac{-\epsilon \ln \frac{b_i}{{b_{max}}} }{2 \ln b_{max}}\right)}}{\frac{exp\left(\frac{-\epsilon \ln \frac{b_j'}{{b_{max}}} }{2 \ln b_{max}}\right)}{\sum_{i \in \mathcal{W}}exp\left(\frac{-\epsilon \ln \frac{b_i'}{{b_{max}}} }{2 \ln b_{max}}\right)}}\\
&=\prod_{j=1}^l\frac{exp\left(\frac{-\epsilon \ln \frac{b_j}{{b_{max}}} }{2 \ln b_{max}}\right)}{exp\left(\frac{-\epsilon \ln \frac{b_j'}{{b_{max}}} }{2 \ln b_{max}}\right)} \times \prod_{j=1}^l\frac{\sum_{i \in \mathcal{W}}exp\left(\frac{-\epsilon \ln \frac{b_i'}{{b_{max}}} }{2 \ln b_{max}}\right)}{\sum_{i \in \mathcal{W}}exp\left(\frac{-\epsilon \ln \frac{b_i}{{b_{max}}} }{2 \ln b_{max}}\right)},
\end{aligned}
\end{equation}
where equation (a) is derived by formula (\ref{Normalize LN score functuon}). Then, we prove this theorem in two cases. When $b_k > b_k'$, the value of the first product is at most $1$, we have
\begin{equation}
\footnotesize
\begin{aligned}\label{ln-1}
&\frac{Pr[M(\overrightarrow{b})=\mathcal{I}]}{Pr[M(\overrightarrow{b}')=\mathcal{I}]}
\le \prod_{j=1}^l\frac{\sum_{i \in \mathcal{W}}exp\left(\frac{-\epsilon \ln \frac{b_i'}{{b_{max}}} }{2 \ln b_{max}}\right)}{\sum_{i \in \mathcal{W}}exp\left(\frac{-\epsilon \ln \frac{b_i}{{b_{max}}} }{2 \ln b_{max}}\right)}\\
&=\prod_{j=1}^l\frac{\sum_{i \in \mathcal{W}}exp\left(\frac{-\epsilon }{2}\left(\frac{\ln \frac{b_i'}{{b_{max}}}}{\ln b_{max}}-\frac{\ln \frac{b_i}{{b_{max}}}}{\ln b_{max}}\right)\right)exp\left(\frac{-\epsilon \ln \frac{b_i}{{b_{max}}} }{2 \ln b_{max}}\right)}{\sum_{i \in \mathcal{W}}exp\left(\frac{-\epsilon \ln \frac{b_i}{{b_{max}}} }{2 \ln b_{max}}\right)}\\
&=\prod_{j=1}^l E_{i \in \mathcal{W}}\left[exp\left(\frac{-\epsilon \ln \frac{b_i'}{b_i}}{2 \ln b_{max}} \right)\right]\\
&\overset{(a)}\le \prod_{j=1}^l E_{i \in \mathcal{W}}\left[1+(e-1)\left(\frac{-\epsilon \ln \frac{b_i'}{b_i}}{2 \ln b_{max}} \right)\right]\\
&\overset{(b)}\le exp\left(\frac{ \epsilon(e-1)}{2}\left(\frac{\sum_{j=1}^l E_{i \in \mathcal{W}}\ln \frac{b_i}{b_i'}}{\ln b_{max}} \right)\right)\\
&\overset{(c)}\le exp\left(\frac{ \epsilon(e-1)}{2}\left(\frac{\sum_{j=1}^l \ln \frac{b_{max}}{b_{min}}}{\ln b_{max}} \right)\right) =exp\left(\frac{ \epsilon(e-1)l}{2}\right),
\end{aligned}
\end{equation}
where inequality (a) is valid because, for all $x\le 1, e^x\le 1+(e-1)x$. The inequality (b) is valid because, for all $x\in R, 1+x\le e^x$. The inequality (c) holds because of the same reason of inequality (c) in formula (\ref{lin-1}).

When $b_k \le b_k'$, the value of the second product is at most $1$, we have
\begin{equation}
\footnotesize 
\begin{aligned}\label{ln-2}
\frac{Pr[M(\overrightarrow{b})=\mathcal{I}]}{Pr[M(\overrightarrow{b}')=\mathcal{I}]}&\le \prod_{j=1}^l\frac{exp\left(\frac{-\epsilon \ln \frac{b_j}{{b_{max}}} }{2 \ln b_{max}}\right)}{exp\left(\frac{-\epsilon \ln \frac{b_j'}{{b_{max}}} }{2 \ln b_{max}}\right)}\\
&=\prod_{j=1}^l exp\left( \frac{\epsilon}{2\ln b_{max}} \left(\ln \frac{b_{j}'}{{b_{max}}} - \ln \frac{b_{j}}{{b_{max}}} \right)\right) \\
&= \prod_{j=1}^l exp\left( \frac{\epsilon}{2\ln b_{max}} \ln \frac{b_{j}'}{b_{j}}\right)\\
&=exp\left(\frac{\epsilon}{2\ln b_{max}}\sum_{j=1}^l \ln \frac{b_{j}'}{b_{j}}\right)\\
&\overset{(a)}\le exp\left(\frac{\epsilon}{2 \ln b_{max} }\sum_{j=1}^l\left[\ln \frac{b_{max}}{b_{min}}\right]\right) \\
&= exp\left(\frac{\epsilon l}{2}\right),
\end{aligned}
\end{equation}
in which the inequality (a) holds because of the same reason of inequality (c) in formula (\ref{lin-1}). Combining the formulas (\ref{ln-1}) and (\ref{ln-2}), the proof is completed, then the proof is completed.
\end{proof}

\subsection{Proof of Lemma 2}

\begin{proof}
Assume that HERALD* selects workers for the winning set $\mathcal{S}$ using type I selection in the following order: $\mathcal{S}_{I}=\{1,\ldots,h\}$. Let $\widetilde{\mathcal{T}}_{i}$ denote the set of tasks whose sensory data has not been collected just before worker $i$ is selected. Since HERALD* carries out type I selection, $c_{i}\leq|\Gamma_{i}\cap\widetilde{\mathcal{T}}_{i}|\frac{64\mathbb{E}[C_{\mathcal{OPT}}(\mathcal{A},W)]}{|\widetilde{\mathcal{T}}_{i}|}$, where $\mathcal{A}$ is a subset of $k$ tasks possibly arriving simultaneously from $\mathcal{T}$. Hence, the social cost of workers in $\mathcal{S}_{I}$ can be bounded by
\begin{equation}
\footnotesize
\begin{split}
\sum_{i\in\mathcal{S}_{I}}c_{i}&\leq\sum_{i\in\mathcal{S}_{I}}\frac{64|\Gamma_{i}\cap\widetilde{\mathcal{T}}_{i}|\mathbb{E}[C_{\mathcal{OPT}}({\mathcal{A},W})]}{|\widetilde{\mathcal{T}}_{i}|}
\\
&\leq 64\mathbb{E}[C_{\mathcal{OPT}}(\mathcal{A},W)]\sum_{t=1}^{n}\frac{1}{t},
\end{split}
\end{equation}
which is at most $64\mathbb{E}[C_{\mathcal{OPT}}(\mathcal{A},W)]\ln n$, in which $\Gamma_i$ denotes the worker $i$'s matching task subset. Therefore, the conclusion holds due to the property of expectation.
\end{proof}

\subsection{Proof of Lemma 3}

\begin{proof}
Let us recall that the set of workers with the bids in the winning set $\mathcal{S}$ selected by HERALD* through type II selection is denoted by $\mathcal{S}_{II}=\{1,\ldots,\ell\}$. Set $k_{\ell+1}=0$ and $c_{0}=0$ for notational convenience. Let $j$ be $k_{j}\geq8\ln 2n$ but $k_{j+1}<8\ln 2n$. Then, we observe that there are at most $8\ln 2n$ tasks from $\widetilde{\mathcal{T}}_{j+1}$ in expectation. As each of these tasks is executed by a worker whose cost is not greater than the one performing it in $\mathcal{S}^{*}(\mathcal{A},W)$, the cost incurred by workers $j+1,\ldots,\ell$ is bounded by $8\ln 2n\mathbb{E}[C_{\mathcal{OPT}}(\mathcal{A},W)]$. Therefore, the expected cost incurred by using the remaining workers $1,\ldots,j$ satisfies
\begin{equation}
\footnotesize
\begin{split}
&\sum_{i=1}^{j}c_{i}\text{Pr}[\mathcal{A}\cap(\Gamma_{i}\cap\widetilde{\mathcal{T}}_{i})\neq\emptyset]\leq\hspace{-0.03in}\sum_{i=1}^{j}\hspace{-0.03in}c_{i}\mathbb{E}[|\mathcal{A}\cap(\Gamma_{i}\cap\widetilde{\mathcal{T}}_{i})|]\hspace{-0.03in}\\
&\overset{\widetilde{\mathcal{T}}_{i+1}\subseteq\widetilde{\mathcal{T}}_{i}\backslash\Gamma_{i}}{\leq}\hspace{-0.03in}\sum_{i=1}^{j}c_{i}\mathbb{E}[|\mathcal{A}\cap(\widetilde{\mathcal{T}}_{i}\backslash\widetilde{\mathcal{T}}_{i+1})|]\\
&\leq\sum_{i=1}^{j}c_{i}(k_{i}-k_{i+1})\overset{c_{0}=0}{\leq}\sum_{i=1}^{j}k_{i}(c_{i}-c_{i-1})
\\
&\overset{(a)}{\leq} \sum_{i=1}^{j}16\mathbb{E}[|\mathcal{S}^{\prime}(\mathcal{A}_{i},W)|]\ln l(c_{i}-c_{i-1})\\
&=16\ln l\biggl(c_{j}\mathbb{E}[|\mathcal{S}^{\prime}(\mathcal{A}_{j+1},W)|]
 \\
 &+\sum_{i=1}^{j}c_{i}\Big(\mathbb{E}[|\mathcal{S}^{\prime}(\mathcal{A}_{i},W)|]-\mathbb{E}[|\mathcal{S}^{\prime}(\mathcal{A}_{i+1},W)|]\Big)\biggl)\\
&\overset{(b)}{\leq}16\ln l\biggl(\mathbb{E}[C(\mathcal{S}^{\prime}(\mathcal{A}_{j+1},W))] \\
&+\sum_{i=1}^{j}\Big(\mathbb{E}[C(\mathcal{S}^{\prime}(\mathcal{A}_{i},W))]-\mathbb{E}[C(\mathcal{S}^{\prime}(\mathcal{A}_{i+1},W))]\Big)\biggl)\\
&\leq16\ln l\mathbb{E}[C_{\mathcal{OPT}}(\mathcal{A},W)],
\end{split}
\end{equation}
\noindent where inequalities (a) and (b) are based on Lemma 3.5 and Lemma 3.4 in reference \cite{grandoni2013set}, $\Gamma_i$ denotes the matching task subset of worker $i$. It was previously shown that $\sum_{i=j+1}^{\ell}c_{i}\text{Pr}[\mathcal{A}\cap(\Gamma_{i}\cap\widetilde{\mathcal{T}}_{i})\neq\emptyset] \leq 8\ln 2n\mathbb{E}[C_{\mathcal{OPT}}(\mathcal{A},W)]$. Therefore, the expected cost incurred by workers $1,\ldots,\ell$ can be bounded as follows:
\begin{equation}
\footnotesize
\sum_{i=1}^{\ell}c_{i}\text{Pr}[\mathcal{A}\cap(\Gamma_{i}\cap\widetilde{\mathcal{T}}_{i})\neq\emptyset] \leq [8\ln 2n + 16\ln l]\cdot \mathbb{E}[C_{\mathcal{OPT}}(\mathcal{A},W)].
\end{equation}
Then we have 
\begin{equation}
\footnotesize
\frac{\sum_{i=1}^{\ell}c_{i}\text{Pr}[\mathcal{A}\cap(\Gamma_{i}\cap\widetilde{\mathcal{T}}_{i})\neq\emptyset]}{\mathbb{E}[C_{\mathcal{OPT}}(\mathcal{A},W)]} \leq \mathcal{O}(\ln ln). 
\end{equation}
The above is based on one matching set $\mathcal{P}$ among all the matching results $\digamma$, we need to take the expectation over all situations, i.e., 
\begin{equation}
\footnotesize
\begin{split}
&\max_{k\in\{1,\ldots,n\}}\mathbb{E}_{\mathcal{P}\in \digamma}\frac{\mathbb{E}_{\mathcal{A}\subseteq\mathcal{T}}[C(\mathcal{S}(\mathcal{A},W))] }{\mathbb{E}_{\mathcal{A}\subseteq\mathcal{T}}[C_{\mathcal{OPT}}(\mathcal{A},W)]}\\
&=\sum_{j=1}^{m^l} P_j \times \frac{\sum_{i=1}^{\ell}c_{i}\text{Pr}[\mathcal{A}\cap(\Gamma_{i}\cap\widetilde{\mathcal{T}}_{i})\neq\emptyset]}{\mathbb{E}[C_{\mathcal{OPT}}(\mathcal{A},W)]}\\
&\leq \sum_{j=1}^{m^l} P_j \times \mathcal{O}(\ln ln) = \mathcal{O}(\ln ln),
\end{split}
\end{equation}
where $P_j$ is the probability of each task-worker matching result. This proof is completed.
\end{proof}

\subsection{Actual Running Time for HERALD*}

\begin{figure}[!t]
\centering
\begin{minipage}{0.85\linewidth}
\centering
 \includegraphics[width=0.85\linewidth]{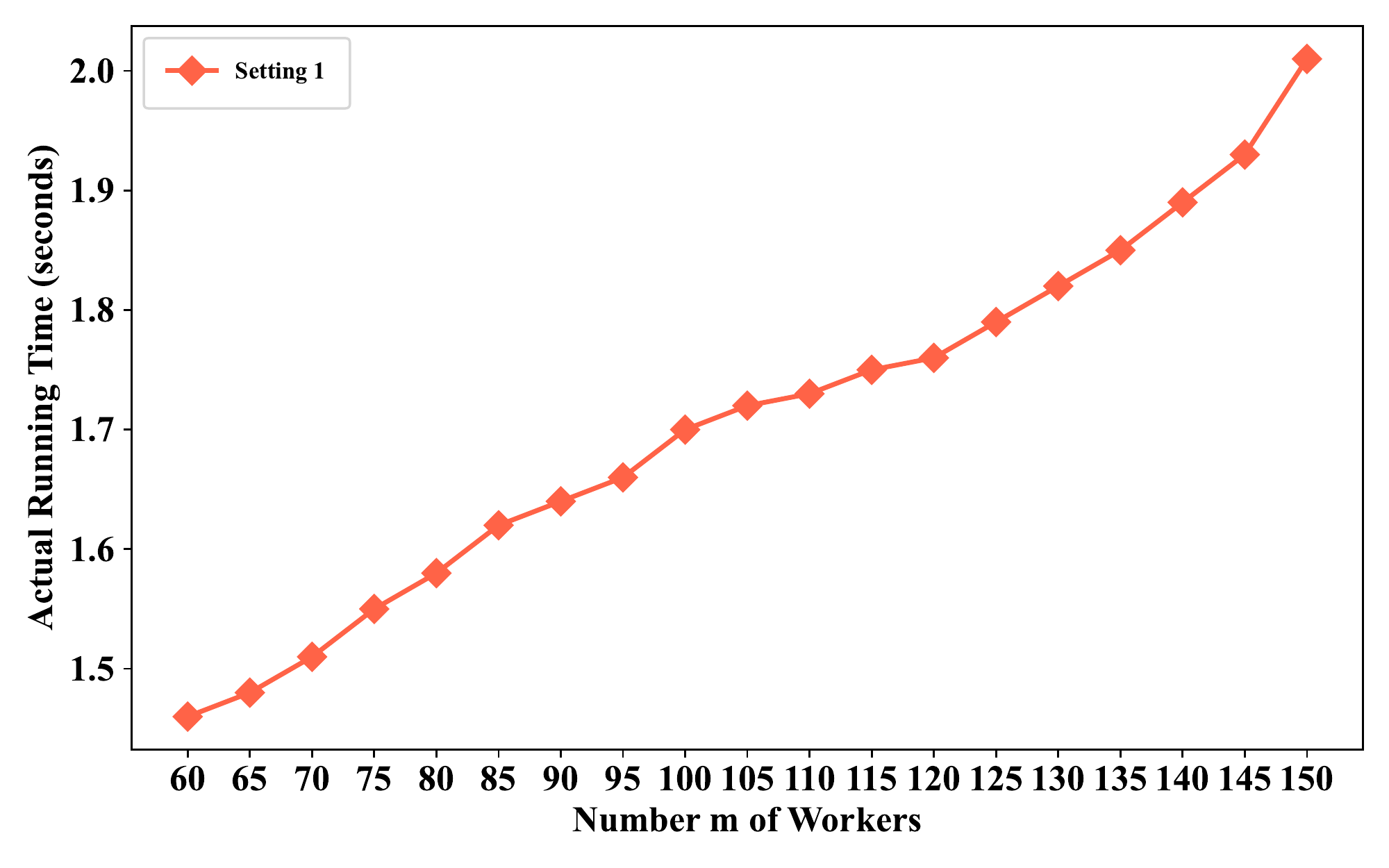}
 \centerline{\footnotesize{\quad (a)}}
 \end{minipage}%
 \qquad
\centering
\begin{minipage}{0.85\linewidth}
\centering
\includegraphics[width=0.85\linewidth]{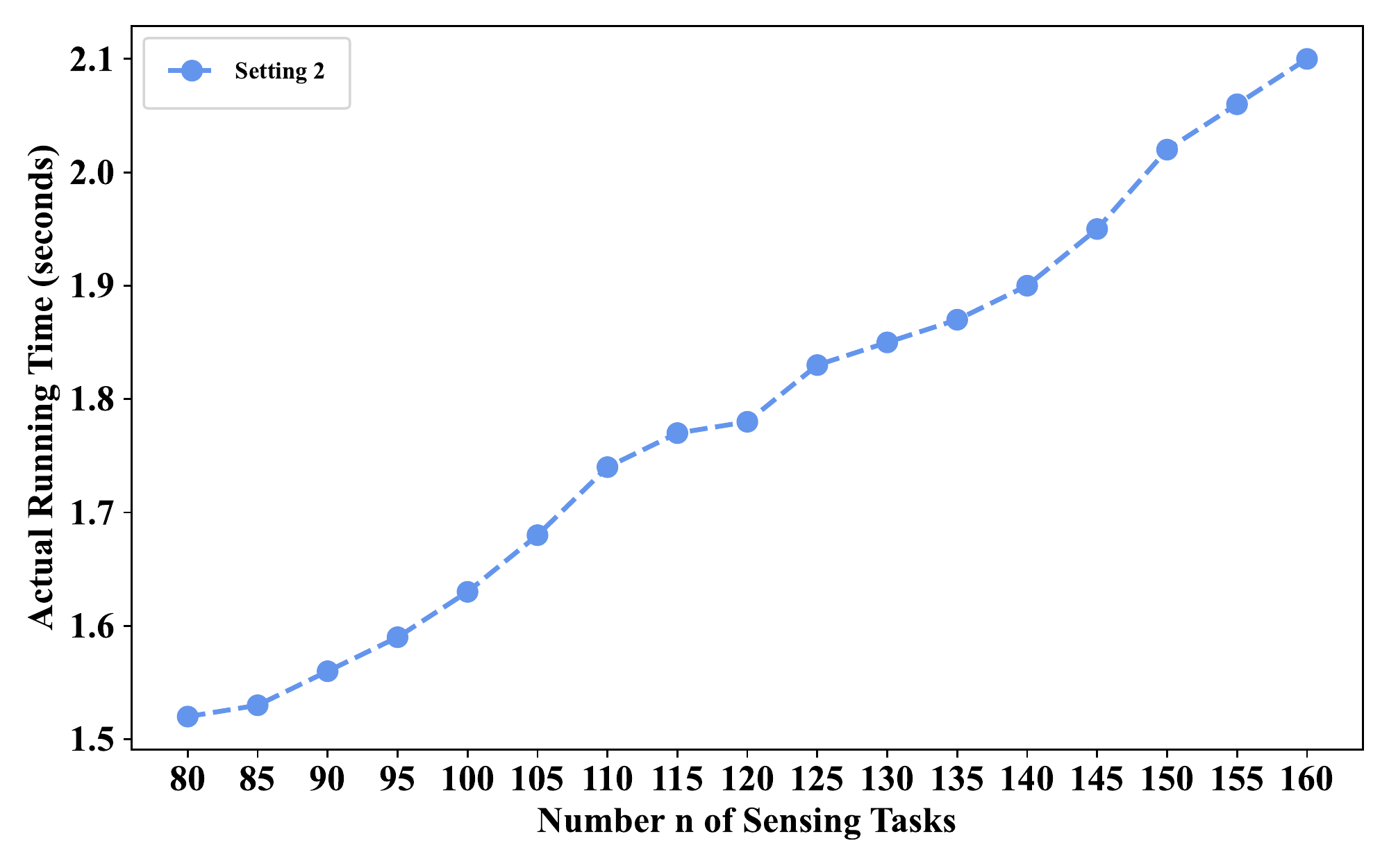}
 \centerline{ \footnotesize{\quad (b)}}
\end{minipage}%
\caption{(a). Actual running time under Setting 1. (b). Actual running time under Setting 2.}\label{Running time}
\end{figure}

To gain a comprehensive understanding of HERALD*'s operational efficiency, we conducted an extensive examination of its actual running times in settings 1 and 2. This process entailed rigorously running the algorithms 100 times across each setting, enabling us to calculate their average running durations. The gathered data, offering a granular view of performance, is vividly represented in Figure \ref{Running time}. As demonstrated in Figure \ref{Running time} (a), there is a clear proportional relationship between actual time and the number of workers. Similarly, Figure \ref{Running time} (b) indicates that the actual time corresponds to the number of sensing tasks. 

What makes these findings particularly noteworthy is their comparison with the theoretical analysis from Proposition \ref{complexity-A}. The empirical results present a significantly more efficient performance of HERALD*, with actual running times being substantially lower than what our theoretical models predicted. This disparity not only underscores the practical effectiveness of HERALD* but also points to potential areas for further optimization and refinement in our theoretical frameworks.


\subsection{Impact of Privacy Parameters $\epsilon$ on HERALD*'s Performance}

Equations (5) and (8) illustrate the probability of worker $i$ with a bid $b_i$ matching a task subset under linear and logarithmic scoring functions. It's evident that at a privacy parameter $\epsilon = 0$, all workers' probabilities follow a uniform distribution. As $\epsilon$ increases, the probability and the expected cost for a worker with bid $b_i$ decrease, thereby reducing the expected social cost and expected total payment. However, this also weakens differential privacy protection. Taking Experimental Setting 1 as an example, we set the privacy parameters to $\epsilon = 0.1$ and $\epsilon = 0.3$ respectively, fix the number of sensing tasks $n$ to $120$, and change the number of workers $m$ from $60$ to $150$ in steps $5$ to evaluate the impact of privacy parameters on the expected social costs and expected total payments of HERALD* under a logarithmic scoring function and that of a linear scoring function. As shown in the figure~\ref{Setting1_epsilon} below, the expected social cost and expected total payment obtained by HERALD* with a larger $\epsilon$ are smaller than those obtained with a smaller $\epsilon$, which is consistent with the above theoretical analysis.

\begin{figure}[ht]
\centering
\begin{minipage}{0.85\linewidth}
\centering
 \includegraphics[width=0.85\linewidth]{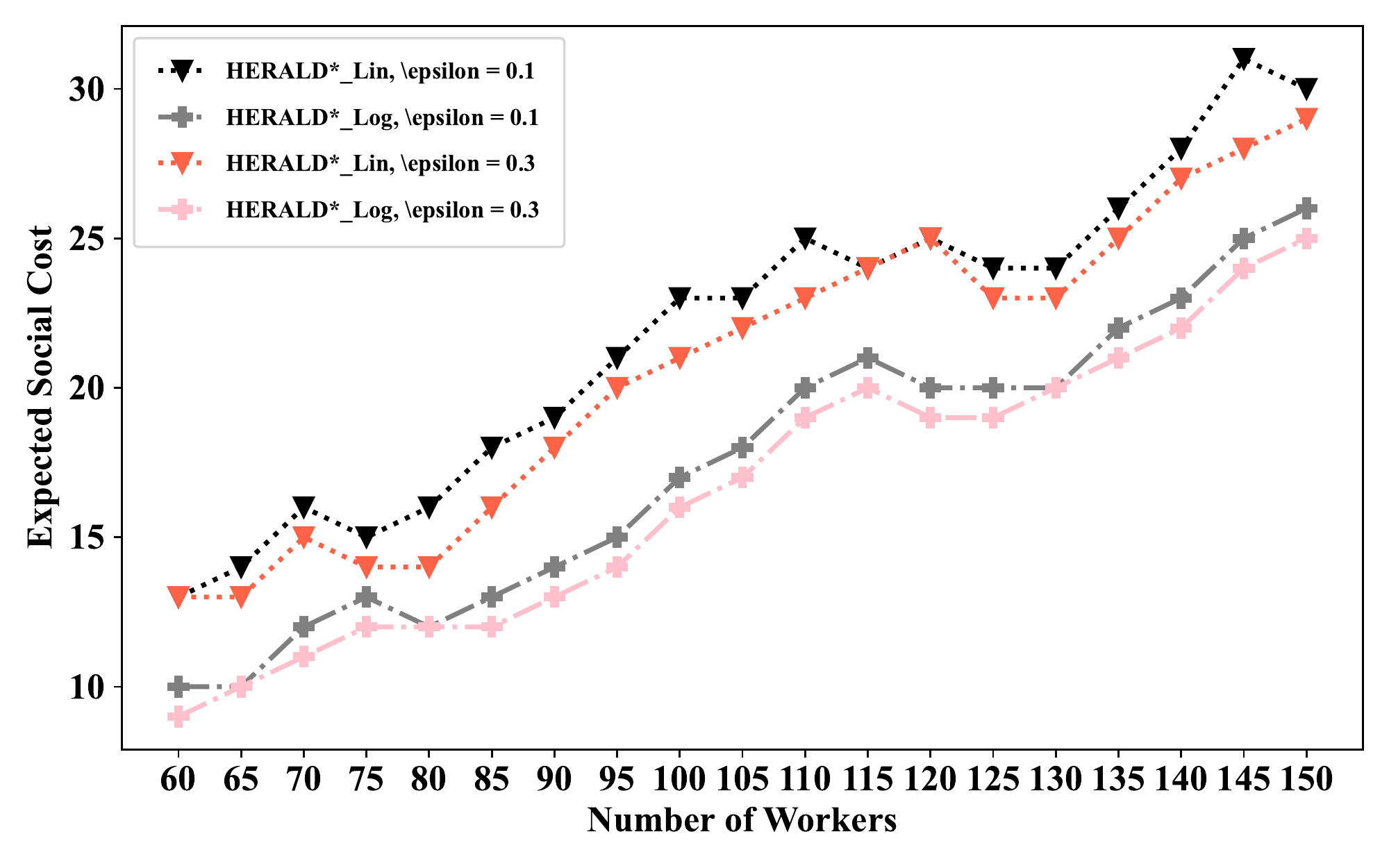}
 \centerline{\footnotesize{\quad (a) }}
 \end{minipage}
 \qquad
\centering
\begin{minipage}{0.85\linewidth}
\centering
\includegraphics[width=0.85\linewidth]{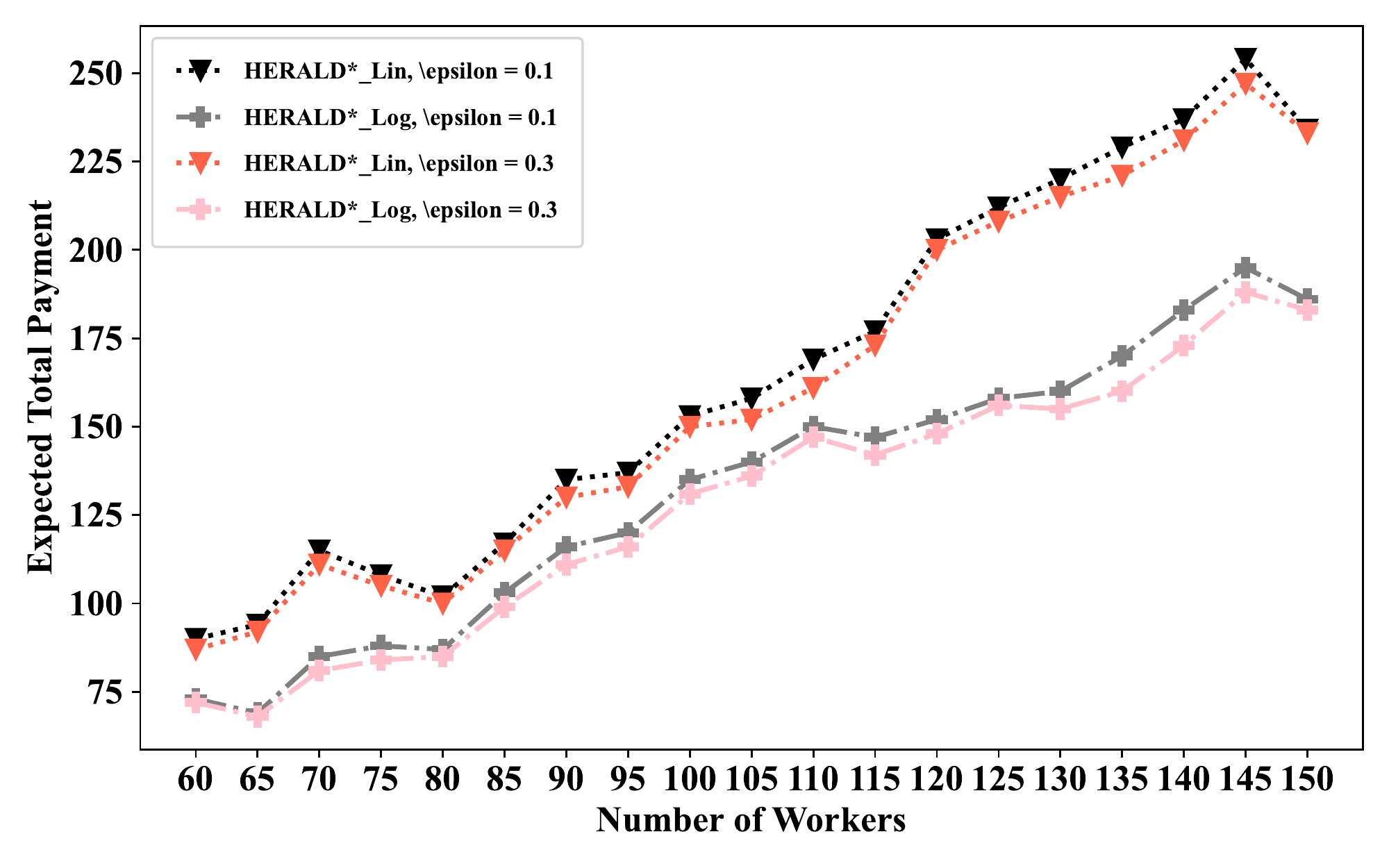}
 \centerline{ \footnotesize{\quad (b)}}
\end{minipage}
\caption{(a). Expected social cost versus different numbers of workers for uncertain tasks. (b). Expected total payment versus different numbers of workers for uncertain tasks.}\label{Setting1_epsilon}
\end{figure}

Interestingly, the expected social cost and expected total payment obtained by HERALD* using the logarithmic score function are lower than those obtained using the linear score function. This is due to the logarithmic score function giving higher chances of selection to users with low bids, resulting in a preference for such users.

\vfill

\end{document}